% v2-acmsmall-sample.tex, dated March 6 2012
% This is a sample file for ACM small trim journals
%
% Compilation using 'acmsmall.cls' - version 1.3 (March 2012), Aptara Inc.
% (c) 2010 Association for Computing Machinery (ACM)
%
% Questions/Suggestions/Feedback should be addressed to => "acmtexsupport@aptaracorp.com".
% Users can also go through the FAQs available on the journal's submission webpage.
%
% Steps to compile: latex, bibtex, latex latex
%
% For tracking purposes => this is v1.3 - March 2012

\documentclass[prodmode,acmtoplas]{acmsmall} % Aptara syntax

% Package to generate and customize Algorithm as per ACM style
\usepackage[ruled]{algorithm2e}

\SetAlFnt{\small}
\SetAlCapFnt{\small}
\SetAlCapNameFnt{\small}
\SetAlCapHSkip{0pt}
\IncMargin{-\parindent}

%%%%%%%%%%%%%%
\usepackage{capt-of,setspace}
\usepackage{relsize,tipx}
\usepackage{fancyvrb}
\usepackage[table,dvipsnames]{xcolor}

\definecolor{pink}{RGB}{255, 192, 203}
\definecolor{yellow}{RGB}{255, 255, 224}
\definecolor{gray}{RGB}{224, 224, 224}

\usepackage{subfigure}
\usepackage{capt-of}
\usepackage{multirow}
\usepackage{graphicx}

\usepackage{array,xspace,listings}
\definecolor{keywordColor}{rgb}{0.5,0,0.35}
\definecolor{commentColor}{rgb}{0.25,0.5,0.35}
\definecolor{stringColor}{rgb}{0.06, 0.10, 0.98}
\definecolor{fieldColor}{rgb}{0.06, 0.10, 0.98}
\lstset{
  frame=none,%tb
  language=Java,
  aboveskip=1ex,
  belowskip=1ex,
  showstringspaces=false,
  columns=flexible,
  basicstyle={\small\ttfamily},
  numbers=left,
  numbersep=5pt,
  numberstyle=\ttfamily\color{mygreen},
  keywordstyle=\bfseries\color{keywordColor},
  commentstyle=\color{commentColor},
  stringstyle=\color{stringColor},
  breaklines=true,
  breakatwhitespace=true,
  tabsize=4,
  lineskip={1.5pt},
  moredelim=[is][\color{fieldColor}]{|}{|}
}
\usepackage{amsmath, amssymb}
\usepackage{mathtools}
\usepackage{xcolor,xspace}
\usepackage[normalem]{ulem}
\usepackage{epstopdf}
\usepackage{color}
\usepackage{comment}

\newcommand{\stextmath}[1]{{\small $\mathcal{#1}$\normalsize}}
\newcommand{\cbox}[2]{{\setlength{\fboxsep}{0.3pt}\colorbox{#1}{#2}}}

\newcommand{\phd}{{\,\_}}
\newcommand{\fast}{{\sc Probe}\xspace}
\newcommand{\probe}{\fast}

\newcommand{\cs}{\emph{checkstyle}\xspace}

\newcommand{\todo}[1]{\textcolor{red}{#1}}
\newcommand{\ptr}{pt}
\newcommand{\lhm}{{\it lhm}}

\newcommand{\Soot}{{\sc Soot}\xspace}
\newcommand{\soot}{{\Soot}}
\newcommand{\Wala}{{\sc Wala}\xspace}
\newcommand{\Bdd}{{\sc Bddbddb}\xspace}
\newcommand{\bdd}{{\sc Bddbddb}}
\newcommand{\doop}{{\sc Doop}\xspace}
\newcommand{\Doop}{{\sc Doop}\xspace}
\newcommand{\doopr}{{\sc Doop$^r$}\xspace}
\newcommand{\Doopr}{{\sc Doop$^r$}\xspace}
\newcommand{\wala}{{\sc Wala}}
\newcommand{\chord}{{\sc Chord}\xspace}
\newcommand{\RefJava}{{\sc RefJava}\xspace}

\newcommand{\refjava}{\RefJava}
\newcommand{\refjavacname}{{\sc RefJava}$_{c}$\xspace}

\definecolor{mygreen}{rgb}{0.0, 0.5, 0.0}
\definecolor{myred}{rgb}{0.9, 0.17, 0.31}
\definecolor{myblue}{rgb}{0.0, 0.0, 0.8}

\usepackage{changepage}
\usepackage{wrapfig}
\usepackage{csquotes}
\usepackage{framed}

\usepackage{enumitem}
\usepackage{makecell}
\newcolumntype{I}{!{\vrule width 1pt}}
\newtheorem{assumption}{Assumption}

\usepackage{tikz}
\usetikzlibrary{automata}
\usetikzlibrary{calc}
\usetikzlibrary{positioning}
\usetikzlibrary{shapes}
\usetikzlibrary{arrows}
\usetikzlibrary{calendar}
\usetikzlibrary{decorations.pathmorphing}
\usetikzlibrary{backgrounds}
\usetikzlibrary{fit}

\usepackage{mathabx}
\usepackage{rotating} % for rotating the texts in table cells
\usepackage{bigstrut} % generated by Excel2Latex

\newcommand{\ruledef}[2]{$\dfrac{\begin{array}[c]{c}#1\end{array}}{\begin{array}[c]{c}#2\end{array}}$}
	\newcommand{\rulename}[1]{\relsize{-0} {\color{red}[\textsc{#1}]}}
	\newcommand{\rulenameT}[1]{{\relsize{-1}{\color{red}[\textsc{#1}]}}}

\newcommand{\ibinding}[2]{#1<:#2}

\newcommand{\ifamily}[2]{#1\ll:#2}

\newcommand{\n}[1]{\textnormal{#1}}

\newcommand*\circled[1]{\tikz[baseline=(char.base)]{
            \node[shape=circle,draw,inner sep=0.5pt] (char) {\small #1};}}

\newcommand{\elf}{{\sc Elf}\xspace}
\newcommand{\solar}{{\sc Solar}\xspace}

\newcommand{\dam}{{\sc LHM}\xspace}
\newcommand{\TamiFlex}{{\sc TamiFlex}\xspace}

\newcommand{\calM}{{\cal M}}
\newcommand{\calF}{{\cal F}}

%%%%%%%%%%%%%%%%%%%%

% Metadata Information
%\acmVolume{0}
%\acmNumber{0}
%\acmArticle{0}
%\acmYear{0}
%\acmMonth{0}

% Copyright
%\setcopyright{acmcopyright}
%\setcopyright{acmlicensed}
%\setcopyright{rightsretained}
%\setcopyright{usgov}
%\setcopyright{usgovmixed}
%\setcopyright{cagov}
%\setcopyright{cagovmixed}

% DOI
%\doi{0000001.0000001}

%ISSN
%\issn{1234-56789}

% Document starts
\begin{document}

% Page heads
\markboth{Y. Li et al.}{Understanding and Analyzing Java Reflection}

% Title portion
\title{Understanding and Analyzing Java Reflection}
\author{
YUE LI
\affil{UNSW, Australia}
TIAN TAN
\affil{UNSW, Australia}
JINGLING XUE
\affil{UNSW, Australia}
}
% NOTE! Affiliations placed here should be for the institution where the
%       BULK of the research was done. If the author has gone to a new
%       institution, before publication, the (above) affiliation should NOT be changed.
%       The authors 'current' address may be given in the "Author's addresses:" block (below).
%       So for example, Mr. Abdelzaher, the bulk of the research was done at UIUC, and he is
%       currently affiliated with NASA.

\begin{abstract}
Java reflection has been increasingly used in a wide
range of software. It allows a software system to 
inspect and/or modify the behaviour of its classes,
interfaces, methods and fields at runtime, enabling
the software to adapt to dynamically changing
runtime environments.  
However, this dynamic language feature imposes significant challenges to static analysis, because
the behaviour of reflection-rich software
is  logically complex and statically hard to predict, especially when manipulated frequently by
statically unknown string values. As a result, 
existing static analysis tools either ignore 
reflection or handle it partially, resulting in missed, important behaviours, i.e., unsound results. Therefore,
improving or even achieving
soundness in (static) reflection analysis---an 
analysis that infers statically the behaviour of
reflective code---will provide significant benefits to many analysis clients, such as bug detectors, security analyzers and program verifiers.

This paper makes two contributions: we provide
a comprehensive understanding of Java reflection through examining its underlying concept, API and real-world usage, and,
building on this, we introduce a new static approach to resolving Java reflection effectively in practice. We have implemented our reflection analysis
in an open-source tool, called \solar, and evaluated its effectiveness extensively with large Java programs 
and libraries. Our
experimental results demonstrate that \solar is able to (1) resolve reflection more soundly than the state-of-the-art reflection analysis; (2) automatically and accurately identify the parts of the program where reflection is resolved unsoundly or imprecisely; and (3) guide users to iteratively refine the analysis results by 
using lightweight annotations until their specific requirements are satisfied.

\end{abstract}

\category{F.3.2}{Semantics of Programming Languages}{Program Analysis}
\terms{Object-Oriented Programming Languages, Program Analysis}

\keywords{Static Analysis, Java Reflection, Pointer Analysis, Call Graph}

%\acmformat{Gang Zhou, Yafeng Wu, Ting Yan, Tian He, Chengdu Huang, John A. Stankovic,
%and Tarek F. Abdelzaher, 2010. A multifrequency MAC specially
%designed for  wireless sensor network applications.}
% At a minimum you need to supply the author names, year and a title.
% IMPORTANT:
% Full first names whenever they are known, surname last, followed by a period.
% In the case of two authors, 'and' is placed between them.
% In the case of three or more authors, the serial comma is used, that is, all author names
% except the last one but including the penultimate author's name are followed by a comma,
% and then 'and' is placed before the final author's name.
% If only first and middle initials are known, then each initial
% is followed by a period and they are separated by a space.
% The remaining information (journal title, volume, article number, date, etc.) is 'auto-generated'.

\begin{comment}
\begin{bottomstuff}
This work is supported by the National Science Foundation, under
grant CNS-0435060, grant CCR-0325197 and grant EN-CS-0329609.

Author's addresses: G. Zhou, Computer Science Department,
College of William and Mary; Y. Wu  {and} J. A. Stankovic,
Computer Science Department, University of Virginia; T. Yan,
Eaton Innovation Center; T. He, Computer Science Department,
University of Minnesota; C. Huang, Google; T. F. Abdelzaher,
(Current address) NASA Ames Research Center, Moffett Field, California 94035.
\end{bottomstuff}
\end{comment}

\maketitle

\section{Introduction}
\label{sec:introduction}

Java reflection allows a software system to
inspect and/or modify the behaviour of its classes,
interfaces, methods and fields at runtime, enabling
the software to adapt to dynamically changing
runtime environments. This dynamic language feature 
eases the development and maintenance of Java programs in many programming tasks by, for example, facilitating 
their flexible integration with the third-party code 
and their main behaviours to be configured 
according to a deployed runtime environment in a decoupled way. Due to such advantages, reflection has been widely 
used in a variety of Java applications and frameworks~\cite{Yue14,Zhauniarovich15}.

Static analysis is widely recognized as a fundamental
tool for bug detection~\cite{Dawson01,Mayur06}, security vulnerability analysis~\cite{Livshits05security,Bodden14}, compiler optimization~\cite{Dean95,Sui13}, program verification~\cite{Das02,Cousot03}, and program debugging and understanding~\cite{Manu07,Yue16}.
However, when applying static analysis to
Java programs, reflection poses a major 
obstacle~\cite{Livshits05,Yue14,Yue15,Yannis15}.
If the behavior of reflective code is not resolved
well, much of the codebase will be
rendered invisible for static analysis, resulting in 
missed, important behaviours, i.e., unsound analysis results~\cite{Livshits15}.
Therefore, improving or even achieving soundness in 
(static) reflection analysis---an analysis that
infers statically the behavior of reflective code---will provide significant benefits to all the client analyses
as just mentioned above.

\subsection{Challenges}
\label{intro:cha}
Developing effective reflection analysis for real-world
programs remains a hard problem, widely
acknowledged by the static analysis community~\cite{Livshits15}: 
\begin{quote}
``\emph{Reflection usage and the size of libraries/frameworks make it very difficult to scale points-to analysis to modern Java programs.}''~\cite{wala};
\end{quote}

\begin{quote}
``\emph{Reflection makes it difficult to analyze statically.}''~\cite{Rastogi13};
\end{quote}

\begin{quote}
``\emph{In our experience~\cite{Ernst13}, the largest challenge to analyzing Android apps is their use of reflection ...}''~\cite{Ernst15}
\end{quote}

\begin{quote}
``\emph{Static analysis of object-oriented code is an exciting, ongoing
and challenging research area, made especially challenging
by dynamic language features, a.k.a. reflection.}''~\cite{Barros17}
\end{quote}

There are three reasons on why it is hard to
untangle this knotty problem:
\begin{itemize}
\item The Java reflection API is large and its 
	common uses in Java programs are complex.
	It remains unclear how an analysis
	should focus on its effort on analyzing
	which of its reflection methods in order
	to achieve some analysis results as desired.
\item The dynamic behaviours of reflective calls are mainly specified by their string arguments, which are usually unknown statically (e.g., with some
	string values being encrypted, read from configuration files, or retrieved from the Internet).  
\item The reflective code in a Java program
	cannot be analyzed alone in isolation. To resolve
reflective calls adequately, a reflection analysis often works
	inter-dependently with a pointer analysis~\cite{Livshits05,Yue14,Yue15,Yannis15pta,Yannis15}, with each being both the producer and consumer of the other.
When some reflective calls are not yet resolved, 
the pointer information that is currently available can be over- or under-approximate. Care must be taken to ensure that the reflection analysis helps increase soundness (coverage) while still maintaining sufficient precision for the pointer analysis.
Otherwise, the combined analysis would be unscalable
for large programs. 
\end{itemize}

As a result, most of the papers on static analysis for object-oriented languages, like Java, treat
reflection orthogonally (often without even
mentioning its existence). Existing static
analysis tools either ignore reflection or handle it partially and ineffectively.

\subsection{Previous Approaches}
\label{intro:pre}

Initially,
reflection analysis mainly relies on string analysis, especially when the string arguments to reflective calls are string constants, to resolve reflective targets, i.e., methods or fields reflectively accessed.
Currently, this mainstream approach is still adopted by many static
analysis tools for Java, such as \soot, \wala, \chord and \doop. However, as described in Section~\ref{intro:cha}, 
string analysis will fail in many situations where string arguments are unknown, resulting in limited soundness and precision.
As a static analysis, a (more) sound reflection 
analysis is one that allows (more) true reflective 
targets (i.e., targets that are
actually accessed at runtime) to be
resolved statically.
In practice, any reflection analysis must inevitably 
make a trade-off among soundness, precision, scalability, and (sometimes) automation. 

In addition, existing reflection analyses~\cite{Livshits05,Yannis09,Yannis15,Ernst15,Lili16,Zhang17} cannot answer two critical questions that are raised naturally, in practice: Q(1) how sound is a given reflection analysis and Q(2) which reflective calls are resolved unsoundly or imprecisely? 
We argue for their importance as follows:
\begin{itemize}
\item
If Q(1) is unanswered, users would be unsure (or lose confidence) about the effectiveness of the analysis results
produced. For example, 
a bug detector  that reports no bugs may actually
miss many bugs if
some reflective calls are resolved unsoundly. 
\item 
If Q(2) is unanswered, users would not have an opportunity 
to contribute in improving the precision and soundness of the analysis results, e.g.,
by providing some user annotations. For some
client analyses (e.g., verification),
soundness is required.
\end{itemize}

\subsection{Contributions}
\label{intro:contri}
In this paper, we attempt to uncover the mysterious veil of Java reflection and change the informed opinion 
in the program analysis community about static
reflection analysis: ``\emph{Java reflection is a dynamic feature which is nearly impossible to handle effectively in static analysis}''. We 
make the following contributions:
\begin{itemize}
\item
We provide a comprehensive understanding of Java reflection through examining its underlying \emph{concept} (what it is), \emph{interface} (how its API is designed),
and \emph{real-world usage} (how it is used in practice). As a result, we will provide
the answers to several critical questions, which
are somewhat related, including:
\vspace{0.5ex}
	\begin{itemize}
	\item What is reflection, why is it introduced in programming languages, and how is Java reflection derived from the basic reflection concept?
\vspace{0.5ex}
\item Which methods of the Java reflection API
should be analyzed carefully and how are they
related, as the API is large and complex
	(with about 200 methods)?
\vspace{0.5ex}
	\item How is reflection used in real-world Java programs and what can we learn from its common uses? 
		We have conducted a comprehensive study about reflection usage in a set of 16 representative Java programs by examining their 1,423 reflective call sites. We report
		7 useful findings to enable the 
		development of improved practical reflection analysis techniques and tools
in future research.
	\end{itemize}
\vspace{1ex}

\begin{figure}
\centering
\includegraphics[width=0.93\textwidth]{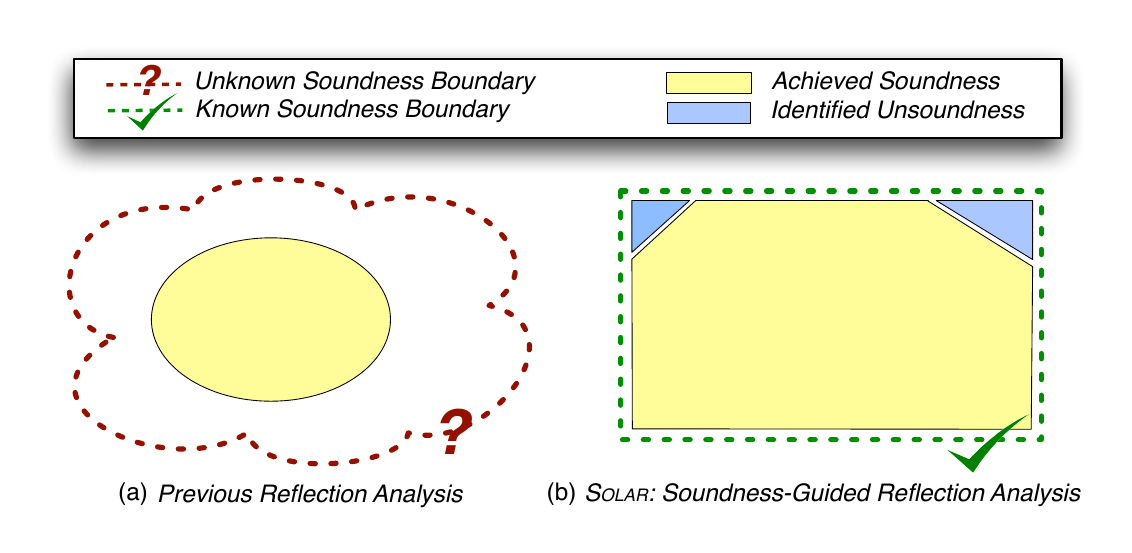}
\vspace{-3ex}
\caption{Reflection analysis: prior work vs. \solar.}
\label{fig:position}
\end{figure}

\item
We introduce a new static analysis approach, called \solar (\emph{soundness-guided reflection analysis}), to resolve Java reflection effectively in practice. As shown in Figure~\ref{fig:position}, 
\solar has three unique advantages 
compared with previous work: 
	\begin{itemize}
\vspace{0.5ex}
	\item \solar is able to yield significantly more sound results than the state-of-the-art reflection analysis. In addition, \solar allows its soundness to be reasoned about when some reasonable assumptions are met.
\vspace{0.5ex}
	\item \solar is able to accurately identify the parts of the program where reflection is analyzed unsoundly or imprecisely, making it possible for users to be
		aware of the effectiveness of their analysis results (as discussed in Section~\ref{intro:pre}).
\vspace{0.5ex}
	\item \solar provides a mechanism to guide users to iteratively refine the analysis results by adding
		\emph{lightweight} annotations until their specific requirements are satisfied, enabling
		reflection to be analyzed in
		a controlled manner.
	\end{itemize}

\vspace{1ex}
\item
We have implemented \solar in \Doop~\cite{Yannis09} (a state-of-the-art pointer analysis tool for Java) 
and released it as an open-source tool. In particular,
\solar can output its reflection analysis results with the format that is supported by \Soot (a popular framework for analyzing Java and Android applications), 
allowing \soot's clients to use \solar's results directly.

\vspace{1ex}
\item
We conduct extensive experiments on evaluating \solar's effectiveness with large Java applications and libraries. Our experimental results provide convincing evidence  on the ability of \solar in analyzing Java reflection effectively, in practice.

\end{itemize}

\subsection{Organization}
\label{intro:org}
The rest of this paper is organized as follows.
We will start
by providing a comprehensive understanding of Java reflection in Section~\ref{sec:understand}. Building on
this understanding, we give an overview of \solar in Section~\ref{sec:solar} and introduce its underlying methodology
in Section~\ref{sec:meth}. Then, we formalize \solar in Section~\ref{sec:form}, describe its implementation in Section~\ref{sec:impl}, and evaluate its effectiveness in Section~\ref{sec:eval}. Finally, we discuss the related work in Section~\ref{sec:related} and conclude in Section~\ref{sec:conclude}.

\section{Understanding Java Reflection}
\label{sec:understand}
Java reflection is a useful but complex language 
feature. 
To gain a deep understanding about Java reflection,
we examine it in three steps.
First, we describe what Java reflection is, why we need it, and how it is proposed (Section~\ref{sec:under:concept}). 
Second, we explain how Java reflection
is designed to be used, i.e., its API 
(Section~\ref{sec:under:inter}).
Finally, we investigate comprehensively how it has been used in real-world Java applications  
(Section~\ref{sec:under:usage}).
After reading this section, the readers are expected
to develop a whole picture about the basic
mechanism behind Java reflection,
understand its core API design, 
and capture the key insights needed for
developing practical reflection analysis tools.

\subsection{Concept}
\label{sec:under:concept}
Reflection, which has long been studied in philosophy, 
represents one kind of human abilities for
introspecting and learning their nature. 
Accordingly, a (non-human)
object can also be endowed with the capability of such self-awareness. This arises naturally in artificial intelligence: ``\emph{Here I am walking into a dark room. Since I cannot see anything, I should turn on the light}". As explained in~\cite{Sobel96}, such thought fragment reveals a self-awareness of behaviour and state, one that leads to a change in that selfsame behaviour and state, which allows an object to examine itself and make use of the meta-level information to decide what to do next.

Similarly, when we enable programs to avail themselves of such reflective capabilities, reflective programs 
will also allow the programs to observe and modify properties of their own behaviour. Thus, let a program be self-aware --- this is the basic motivation of the so-called \emph{computational reflection}, which is also considered as the \emph{reflection} used in the area of programming languages~\cite{Demers95}.

In the rest of this section, we will introduce what computational reflection is (Section~\ref{sec:compRef}), what reflective abilities it supports (Section~\ref{sec:intro-inter}), and how Java reflection is derived from it (Section~\ref{sec:javaRef}).

\subsubsection{Computational Reflection}
\label{sec:compRef}

Reflection, as a concept for computational systems, dates from Brian Smith's doctoral dissertation~\cite{Smith82}. 
Generally, as shown in Figure~\ref{fig:compRef}(a), a \emph{computational system} is related to a domain
and it answers questions about and/or support actions in the domain~\cite{Maes87}.
Internally, a computational system incorporates both
the data that represents entities and relations in the domain and a program that describes how these data may be manipulated.

A computational system $\mathcal{S}$ is said to be
also a \emph{reflective system}, as shown in
Figure~\ref{fig:compRef}(b), if the following two conditions are satisfied:
\begin{itemize}
\item First, the system $\mathcal{S}$ has its own representation, 
	known as its \emph{self-representation} or
	\emph{metasystem}, 
	in its domain as a kind of data to be examined and manipulated. 

\item Second, the system $\mathcal{S}$ and its representation are causally connected: a change to the representation implies a change to the system, and vice versa.
\end{itemize}

\begin{figure}
\includegraphics[width=1\textwidth]{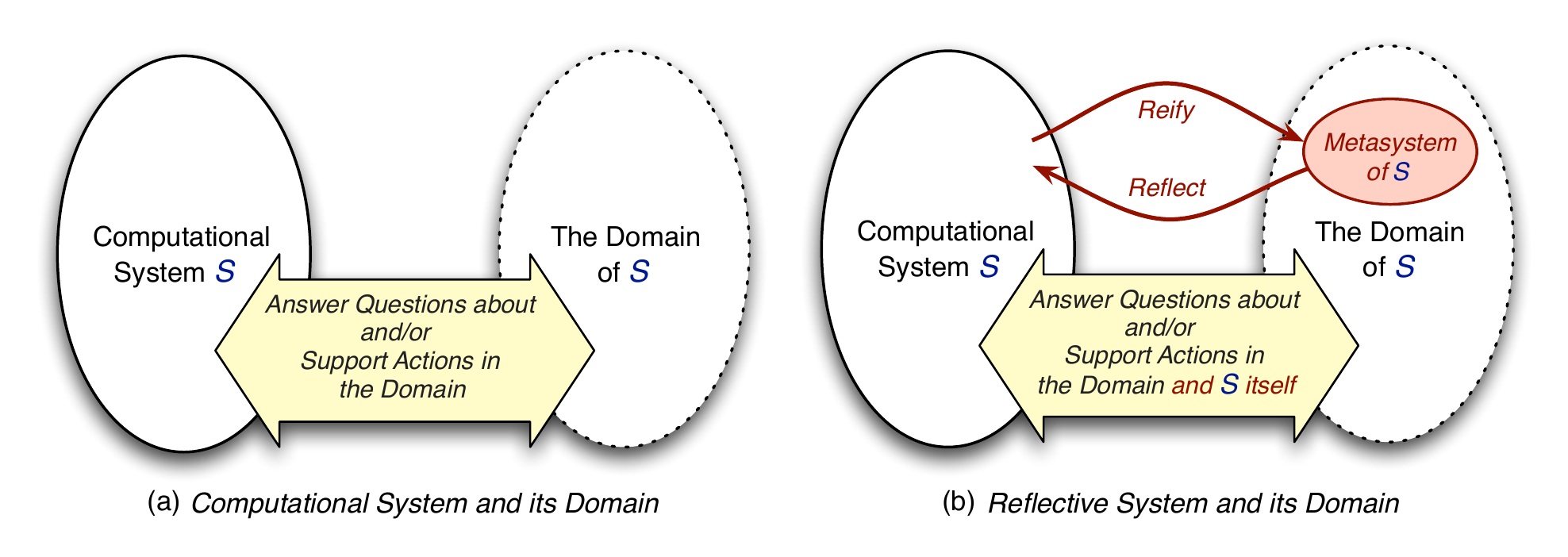}
\caption{Computational vs. reflective computational systems.}
\label{fig:compRef}
\end{figure}

The base system $\mathcal{S}$ should be \emph{reified} into its representation before its metasystem can operate. Then the metasystem examines and manipulates its
behaviour using the reified representation. If any changes are made by the metasystem, then the effects will also be \emph{reflected} in the behavior of the corresponding base system.

\subsubsection{Reflective Abilities}
\label{sec:intro-inter}
Generally, (computational) reflection is the ability of a program to \emph{examine} and \emph{modify} the structure and behavior of a program at runtime~\cite{Herzeel08,reftut96}. 
Thus, it endows the program the capabilities of \emph{self-awareness} and \emph{self-adapting}. 
These two reflective abilities are known 
as \emph{introspection} and \emph{intercession}, respectively, and both 
require a reification mechanism to encode a program's execution state as data first~\cite{Demers95}.

\begin{itemize}
\item \emph{Introspection}: the ability of a program to \emph{observe}, and consequently, reason about its own
	execution state.
\item \emph{Intercession}: the ability of a program to \emph{modify} its own execution state or alter its own interpretation or meaning.
\end{itemize}

Providing full reflective abilities as shown above is hardly acceptable in practice,
as this will introduce both implementation complexities
and performance problems~\cite{Chiba00}. 
%Modern programming languages designers and implementors have always strived to minimize the cost of reflection by providing only subsets of such features~\cite{Herzeel08}.
Thus, in modern programming languages like Java, 
reflective abilities are only partially supported~\cite{Forman04,Mirrors04}.

\subsubsection{Java Reflection}
\label{sec:javaRef}
Java reflection supports introspection and very limited intercession. In particular, an introspection step is usually followed by behaviour changes such as object creation, method invocation and attribute manipulation\footnote{Some other researchers 
	hold a different view 
that Java reflection does not support intercession
at all~\cite{Mirrors04,fmco2007}, as they adopt
a more strict definition of intercession, which
implies the ability to modify the self-representation of a program.}~\cite{Cazzola04,Forman04}.

Despite its limited reflective abilities, Java reflection is able to allow programmers to break the constraints of staticity and encapsulation, making 
the program adapt to dynamically changing runtime
environments. As a result, Java reflection 
has been widely used in real-world Java applications to facilitate flexibly different programming tasks,
such as reasoning about control (i.e., about which computations to pursue next)~\cite{Forman04}, interfacing (e.g., interaction with GUIs or database systems)~\cite{guiRef02,databaseRef03}, and self-activation (e.g., through monitors)~\cite{monitorRef08}.

Java reflection does not have a reify operation as described in Section~\ref{sec:compRef} (Figure~\ref{fig:compRef}(b))  to turn the basic (running) system (including stack frames) into a representation (data
structure) that is passed to a metasystem. Instead, a kind of metarepresentation, based on 
\emph{metaobjects}, exists when the system begins running and persists throughout the execution of the system~\cite{Forman04}. 

A metaobject is like the reflection in a mirror:  one
can adjust one's smile (behaviour changes) by looking
at oneself in a mirror (introspection).
In Section~\ref{sec:under:inter}, we will look at how Java reflection uses metaobjects and its API to facilitate reflective programming.

\begin{figure}[hpt]
\centering
\begin{tabular}{c}
\begin{lstlisting}
A a = new A();
String cName, mName, fName = ...;
Class clz = Class.forName(cName);
Object obj = clz.newInstance();
Method mtd = clz.getDeclaredMethod(mName,{A.class});
Object l = mtd.invoke(obj, {a}); 
Field fld = clz.getField(fName);
X r = (X)fld.get(a); 
fld.set(null, a);
\end{lstlisting}
\end{tabular}
\caption{An example of reflection usage in Java.}
\label{study:fig:example}
\label{study:fig:mot}
\end{figure}

\subsection{Interface}
\label{sec:under:inter}
We first use a toy example to illustrate some common
uses of the Java reflection API (Section~\ref{sec:under:inter:example}).
We then delve into the details of its core methods, 
which are relevant to (and thus should be handled by)
any reflection analysis (Section~\ref{sec:under:inter:over}).

\subsubsection{An Example}
\label{sec:under:inter:example}

There are two kinds of metaobjects: \texttt{Class} 
objects and member objects. In 
Java reflection, one always starts with a 
\texttt{Class} object and then obtain its member 
objects (e.g., 
\texttt{Method} and \texttt{Field} objects) 
from the \texttt{Class} object by calling 
its corresponding accessor methods (e.g., 
\texttt{getMethod()} and \texttt{getField()}).

In Figure~\ref{study:fig:mot}, the metaobjects \texttt{clz},
\texttt{mtd} and \texttt{fld} are instances of 
the metaobject classes
\texttt{Class}, \texttt{Method} and \texttt{Field},
respectively. \texttt{Constructor} 
can be seen as \texttt{Method} except that the method 
name ``\verb"<init>"'' is implicit. 
\texttt{Class} allows an object to be created reflectively by 
calling \texttt{newInstance()}. As shown in line 4, the dynamic type of \texttt{obj}
is the class (type) represented by \texttt{clz} (specified by \texttt{cName}).
In addition, \texttt{Class} provides
accessor methods such as 
\texttt{getDeclaredMethod()} in line 5 and
\texttt{getField()} in line 7 to allow the member 
metaobjects (e.g., of \texttt{Method} and \texttt{Field})
related to a \texttt{Class} object to be
introspected.  With dynamic invocation,
a \texttt{Method} object can be commanded to invoke the method that it represents (line 6). Similarly, 
a \texttt{Field} object can be commanded to access or modify
the field that it represents (lines 8 and 9). 

\subsubsection{Core Java Reflection API}
\label{sec:under:inter:over}

In reflection analysis,
we are concerned with reasoning about
how reflection affects the control and data flow information in the program.
For example, if a target method (say $m$) that is
reflectively invoked in line 6 in Figure~\ref{study:fig:mot} cannot be resolved statically, the call graph edge 
from this call site to method $m$
(control flow)
and the values passed interprocedurally
 from \texttt{obj} and \texttt{a} to \texttt{this} and the parameter of $m$ (data flow), respectively, will
 be missing.
Therefore, we should focus on the part of the 
Java reflection API that affects a pointer analysis,
a fundamental analysis that statically resolves the control and data flow information in a program~\cite{Livshits05,Yue14,Yannis15,spark,Milanova05,Yannis11,Tian16,Tian17}.

\begin{figure}
\hspace{-2ex}
\includegraphics[width=1.03\textwidth]{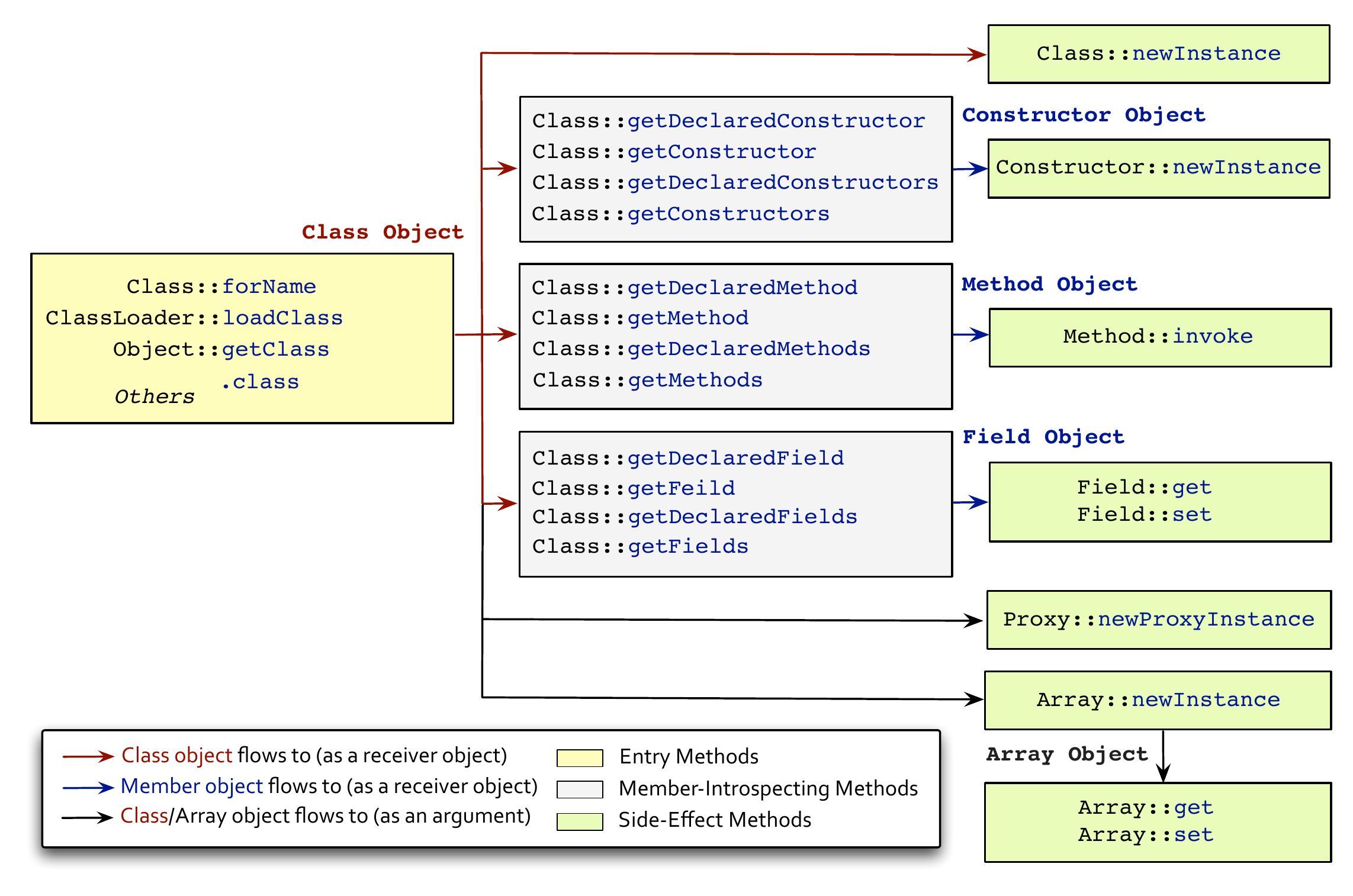}
\caption{Overview of core Java reflection API.\footnotemark{}}
\label{study:fig:coreAPI}
\end{figure}
\footnotetext{We summarize and explain the core reflection API (25 methods) that is critical to static analysis. A more complete reflection API list (181 methods) is given in~\cite{Barros17} without explanations though.}

It is thus sufficient to consider only the 
pointer-affecting methods in the 
Java reflection API.
We can divide such reflective methods
into three categories (Figure~\ref{study:fig:coreAPI}): 

\begin{itemize}
\item
\emph{Entry methods}, 
which create \texttt{Class} objects, e.g., \texttt{forName()} in line 3 in Figure~\ref{study:fig:mot}.
\item
\emph{Member-introspecting methods}, which introspect and retrieve member metaobjects, i.e.,
\texttt{Method} (\texttt{Constructor}) 
and \texttt{Field} objects from a \texttt{Class} object,
e.g., 
\texttt{getDeclaredMethod()} in line 5  and
\texttt{getField()} in line 7 in Figure~\ref{study:fig:mot}.
\item
\emph{Side-effect methods}, which affect 
the pointer information in the program reflectively, 
e.g., 
\texttt{newInstance()}, 
\texttt{invoke()}, 
\texttt{get()}
and \texttt{set()} in lines 4, 6, 8 and 9 in Figure~\ref{study:fig:mot} for 
creating an object, invoking a method,
accessing and modifying a field, respectively.
\end{itemize}

\paragraph{Entry Methods}
\texttt{Class} objects are returned by \emph{entry} 
methods, as everything in Java reflection begins with
\texttt{Class} objects.
There are many entry methods in the Java reflection API.
In Figure~\ref{study:fig:coreAPI}, only the four
most widely used ones are listed explicitly.

Note that
\texttt{forName()} (\texttt{loadClass()}) 
returns a \texttt{Class} object representing a class 
that is specified by the value of its string argument.
%If the class has not been seen in the closed-world of analysis before, its static initializer should also be considered in the analysis for soundness. 
The \texttt{Class} object returned by $o$.\texttt{getClass()} and $A$\texttt{.class}
represents the dynamic type (class) of $o$ and $A$,
respectively.

\paragraph{Member-Introspecting Methods}
\texttt{Class} provides a number 
of accessor methods for retrieving 
its member metaobjects, i.e., the \texttt{Method}
(\texttt{Constructor}) and \texttt{Field} objects. 
In addition, these member metaobjects can be used to introspect the 
methods, constructors and fields in their target class.
Formally, these accessor methods are referred to here
as the \emph{member-introspecting} methods.

As shown in Figure~\ref{study:fig:coreAPI}, for each kind of member metaobjects,
there are four member-introspecting methods. 
We take a \texttt{Method} object as an example to illustrate these methods, whose receiver objects are the \texttt{Class} objects returned by the entry methods.

\begin{itemize}
\item
\verb"getDeclaredMethod(String, Class[])" 
returns a \texttt{Method} object
that represents a declared method of the target 
\texttt{Class} object with the name 
(formal parameter types) specified by the first
(second) parameter (line 5
in Figure~\ref{study:fig:mot}).
\item
\verb"getMethod(String, Class[])" 
is similar to \texttt{getDeclaredMethod(String, Class[])} except that
the returned \texttt{Method} object is public
 (either declared or inherited).
 If the target \texttt{Class} does not have a
 matching method, then its superclasses are searched
 first recursively (bottom-up) before its interfaces (implemented).
 \item
\verb"getDeclaredMethods()" returns an array of
 \texttt{Method} objects representing all the methods
 declared in the target \texttt{Class} object. 
 \item
\verb"getMethods()" is similar to \texttt{getDeclaredMethods()} except that
 all the public methods (either declared or inherited)
 in the target \texttt{Class} object are returned.
 \end{itemize}

\paragraph{Side-Effect Methods}

\begin{table}[htbp]
\tbl{Nine side-effect methods and their side effects
	on the pointer analysis, assuming that the target class of \emph{clz} and \emph{ctor} is \emph{A}, the target method of \emph{mtd} is \emph{m} and the target field of \emph{fld} is \emph{f}.
}{
\renewcommand{\arraystretch}{1.6}
\begin{tabular}{ p{4cm} < {\centering} | p{5cm} < {\centering} | p{3.2cm} < {\centering} }
\hline
\textbf{Simplified Method}  & \textbf{Calling Scenario}   & \textbf{Side Effect }\\
\hline \hline
Class::newInstance & o = \emph{clz}.newInstance() & o = new \emph{A}() \\
Constructor::newInstance & o = \emph{ctor}.newInstance(\{arg$_1$, ...\}) & o = new \emph{A}(arg$_1$, ...)\\
Method::invoke & a = \emph{mtd}.invoke(o, \{arg$_1$, ...\}) & a = o.\emph{m}(arg$_1$, ...)\\
Field::get & a = \emph{fld}.get(o) & a = o.\emph{f} \\
Field::set & \emph{fld}.set(o, a) & o.\emph{f} = a \\
Proxy::newProxyInstance & o = \emph{Proxy}.newProxyInstance(...) & o = new Proxy\$*(...)\\
Array::newInstance & o = \emph{Array}.newInstance(\emph{clz}, size) & o = new \emph{A}[size] \\
Array::get & a = \emph{Array}.get(o, i) & a = o[i] \\
Array::set & \emph{Array}.set(o, i, a) & o[i] = a \\
 \hline
\end{tabular}}
\label{study:table:sideAPI}
\end{table}

As shown in
Figure~\ref{study:fig:coreAPI}, a total of nine 
side-effect methods that can possibly modify or use 
(as their side effects) the pointer information in 
a program are listed.  
Accordingly, Table~\ref{study:table:sideAPI} explains how these methods affect the pointer information by
giving their side effects on the pointer analysis.

In Figure~\ref{study:fig:coreAPI}, the first five side-effect methods use four kinds of metaobjects as their receiver objects while the last four methods use 
\texttt{Class} or \texttt{Array} objects
as their arguments.
Below we briefly examine them in the order given
in Table~\ref{study:table:sideAPI}.
\begin{itemize}
\item
The side effect of \texttt{newInstance()} is allocating an object with the type specified by its metaobject \emph{clz} or \emph{ctor} (say $A$) and initializing it
via a constructor of $A$, which is the default
constructor in the case of \texttt{Class::newInstance()} 
and the constructor specified explicitly in the case
of \texttt{Constructor::newInstance()}.
\item
The side effect of \texttt{invoke()} is a virtual call when the first argument of \texttt{invoke()}, say $o$, is not \texttt{null}. The receiver object is 
$o$ as shown in the ``Side Effect'' column in Table~\ref{study:table:sideAPI}. When $o$ is \texttt{null}, \texttt{invoke()} should be a static call.
\item 
The side effects of \texttt{get()} and \texttt{set()} are retrieving (loading) and modifying (storing) the value of a instance field, respectively, when their first argument, say $o$, is not \texttt{null}; 
otherwise, they are operating on a static field.  

\item
The side effect of \verb"newProxyInstance()" is creating an object of a proxy class \verb"Proxy$*", and this proxy class is generated dynamically according to its arguments (containing a \texttt{Class} object). 
\verb"Proxy.newProxyInstance()" can be analyzed 
according to its
semantics. A call to this method returns a \texttt{Proxy} object, which has an associated
invocation handler object that implements the \texttt{InvocationHandler} interface. A
method invocation on a \texttt{Proxy} object through one of its \texttt{Proxy} interfaces will be dispatched to the \texttt{invoke()} method of the object's invocation handler.
\item 
The side effect of \texttt{Array.newInstance()} is creating an array (object) with the component type represented by the \texttt{Class} object (e.g., \emph{clz} in Table~\ref{study:table:sideAPI}) used as its first argument.
\texttt{Array.get()} and \texttt{Array.set()} are 
retrieving and modifying an index element
in the array object specified 
as their first argument, respectively.
\end{itemize}

\subsection{Reflection Usage}
\label{sec:under:usage}
The Java reflection API is rich and complex.
We have conducted an empirical study  
to understand reflection usage in practice
in order to guide the design and 
implementation of a sophisticated reflection analysis
described in this paper.
In this section, we first list the focus
questions in Section~\ref{sec:study:ques},
then describe the experimental setup in 
Section~\ref{sec:study:setup},
and finally, present the study results in Section~\ref{sec:study:results}.

%%%%%%%%%%%%%%%
\subsubsection{Focus Questions}
\label{sec:study:ques}

We consider to address the following seven focus 
questions in
order to understand how Java reflection 
is used in the real world:
\begin{itemize}
\item 
Q1. The core part of reflection analysis is to resolve all the nine side-effect methods
(Table~\ref{study:table:sideAPI}) effectively. What are the side-effect methods that are
most widely used and how are the remaining ones
used in terms of their relative frequencies?
\item
Q2. The Java reflection API contains many entry
methods for returning \texttt{Class} 
objects. Which ones should be focused on by
an effective reflection analysis?
\item
Q3. Existing reflection analyses resolve reflection 
by analyzing statically the string arguments of entry and member-introspecting method calls.
How often are these strings constants and how often 
can non-constant strings be resolved by a
simple string analysis that models 
string operations such as ``\texttt{+}'' and \texttt{append()}?
\item
Q4. Existing reflection analyses ignore
the member-introspecting methods that return an array of member metaobjects. Is it necessary to handle such
methods?
\item
Q5. 
Existing reflection analyses usually treat
reflective method calls and field accesses as
being non-static. Does this treatment work well in
real-world programs? Specifically, how often are static reflective targets used in reflective code?
\item
Q6. In~\cite{Livshits05}, intraprocedural post-dominating cast operations are leveraged to resolve \texttt{newInstance()} when its class type is unknown. This approach is still adopted by many reflection analysis tools. 
Does it generally work in practice?
\item
Q7. What are new insights on handling Java reflection
(from this paper)?
\end{itemize}

\subsubsection{Experimental Setup}
\label{sec:study:setup}
We  have selected a set of  16 representative Java programs,
including three popular desktop applications, 
 \texttt{javac-1.7.0}, \texttt{jEdit-5.1.0} and 
 \texttt{Eclipse-4.2.2} (denoted \texttt{Eclipse4}),
two popular server applications,
\texttt{Jetty-9.0.5} and 
\texttt{Tomcat-7.0.42}, and all eleven DaCapo benchmarks (2006-10-MR2)~\cite{DaCapo}.
Note that the DaCapo benchmark suite includes
an older version of \texttt{Eclipse}
(version 3.1.2). We exclude its \texttt{bloat} benchmark since
its application code is reflection-free.
We consider \texttt{lucene} instead of
\texttt{luindex} and \texttt{lusearch} separately since
these two benchmarks are derived from \texttt{lucene}
with the same reflection usage.

We consider a total of 191 methods in the Java
reflection API (version 1.6), including the ones 
mainly from package \verb"java.lang.reflect" and class
\verb"java.lang.Class".

We use \textsc{Soot}~\cite{soot} to pinpoint 
the calls to reflection methods in the bytecode
of a program. To understand the common reflection usage, 
we consider only the 
reflective calls found in the application classes and
their dependent libraries but exclude the standard 
Java libraries. To increase the code coverage
for the five applications considered,
we include the \texttt{jar} files whose names
contain the names of these applications 
(e.g., \texttt{*jetty*.jar} for \texttt{Jetty}) 
and make them available under the \emph{process-dir} 
option supported by \textsc{Soot}. For
\texttt{Eclipse4}, we use
\verb"org.eclipse.core." \verb"runtime.adaptor.EclipseStarter" 
to let \textsc{Soot} locate all the other jar files used.

We manually inspect the reflection usage
in a program in a demand-driven manner, starting
from its side-effect methods, assisted by  
\emph{Open Call Hierarchy}
in \texttt{Eclipse}, by following their
backward slices. For a total of 
609 side-effect call sites examined,
510 call sites for calling entry methods
and 304 call sites for calling member-introspecting methods
are tracked and analyzed.
As a result, a total of 1,423 reflective call sites,
together with some nearby statements,
are examined in our study.

\begin{figure}[t]
\centering
\includegraphics[width=0.9\linewidth]{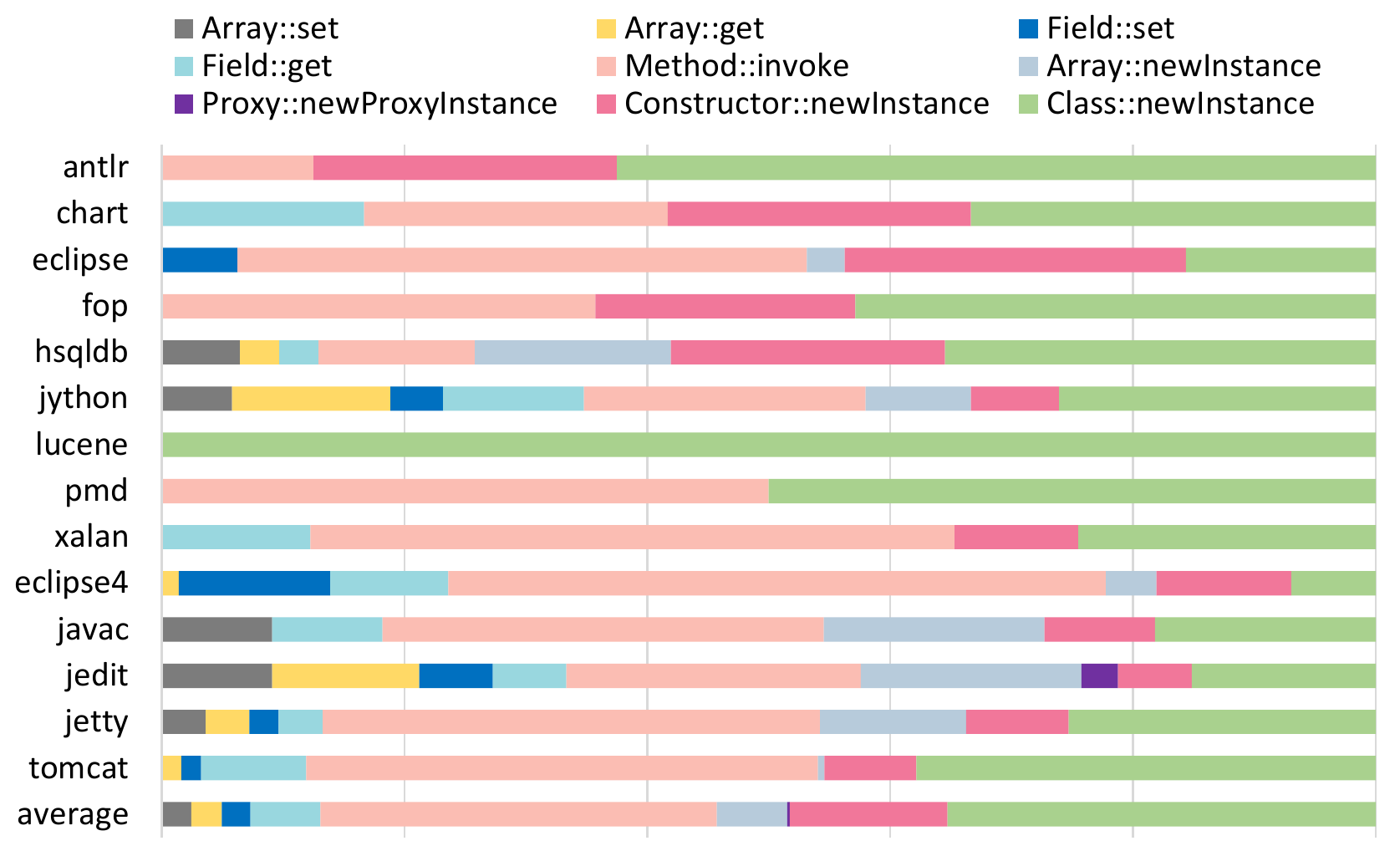}
\caption{Side-effect methods.}
\label{study:fig:side}
\end{figure}

\subsubsection{Results}
\label{sec:study:results}

Below we describe our seven findings on reflection 
usage as our answers to the seven focus questions listed
in Section~\ref{sec:study:ques}, respectively. 
We summarize our findings as individual remarks, which are expected 
to be helpful in guiding the development of practical
reflection analysis techniques and tools in future
research.

\paragraph{Q1. Side-Effect Methods}
\label{study:sideeffect}

Figure~\ref{study:fig:side} depicts the
percentage frequency distribution of all the
nine side-effect methods 
in all the programs studied. 
We can see that \verb"newInstance()" and \verb"invoke()" 
are the ones that are
most frequently used (46.3\% and 32.7\%, respectively,
on average). Both of them are
handled by existing static analysis tools such as \doop, \soot, \Wala and \Bdd. 
However, \texttt{Field}- and \texttt{Array}-related side-effect
methods, which are also used in many programs, are 
ignored by most of these tools. To the best of our
knowledge, they are handled only by  \elf~\cite{Yue14},
 \solar~\cite{Yue15} and \doop~\cite{Yannis15}. 
Note that \verb"newProxyInstance()"  is used 
in \texttt{jEdit} only in our study and a recent 
survey on reflection analysis~\cite{Barros17}
reports more its usages in the real world.

\begin{framed}
\noindent\textbf{Remark 1.} \emph{Reflection analysis should at least handle \texttt{newInstance()} and \texttt{invoke()} as they are the most frequently used side-effect methods (79\% on average), which will significantly affect a program's behavior, in general; otherwise, much of the codebase may be invisible for analysis. Effective reflection analysis should also consider \texttt{Field}- and \texttt{Array}-related side-effect methods, as they are also commonly used.}
\end{framed}

\paragraph{Q2. Entry Methods}
\label{study:entry}

Figure~\ref{study:fig:entry} shows the percentage frequency
distribution of eight entry methods.  
``\emph{Unknown}'' is included since we failed to find
the entry methods for some side-effect calls (e.g.,
\verb"invoke()") even by using \texttt{Eclipse}'s
\emph{Open Call Hierarchy} tool. 
For the first 12 programs,
the six entry methods as shown (excluding 
``\emph{Unknown}'' and
``\emph{Others}'') are the only ones leading to
side-effect calls.
For the last two, \texttt{Jetty} and
\texttt{Tomcat},
``\emph{Others}'' stands for 
\texttt{defineClass()} in \texttt{ClassLoader} and
\texttt{getParameterTypes()} in \texttt{Method}.
Finally,
\verb"getComponentType()" is usually used in the form
of \verb"getClass().getComponentType()" for creating a
\texttt{Class} object argument for 
\verb"Array.newInstance()". 

On average, 
\texttt{Class.forName()}, \texttt{.class}, \texttt{getClass()} and \texttt{loadClass()}
are the top four most frequently used 
(48.1\%, 18.0\%, 17.0\% and 9.7\%, respectively).
A class loading strategy can
be configured in \texttt{forName()}
and \texttt{loadClass()}. In practice, \texttt{forName()}
is often used by the system class loader and \texttt{loadClass()} is usually overwritten in customer class loaders, especially in framework applications such as \texttt{Tomcat} and \texttt{Jetty}.

\begin{figure}[thp]
\centering
\includegraphics[width=0.9\linewidth]{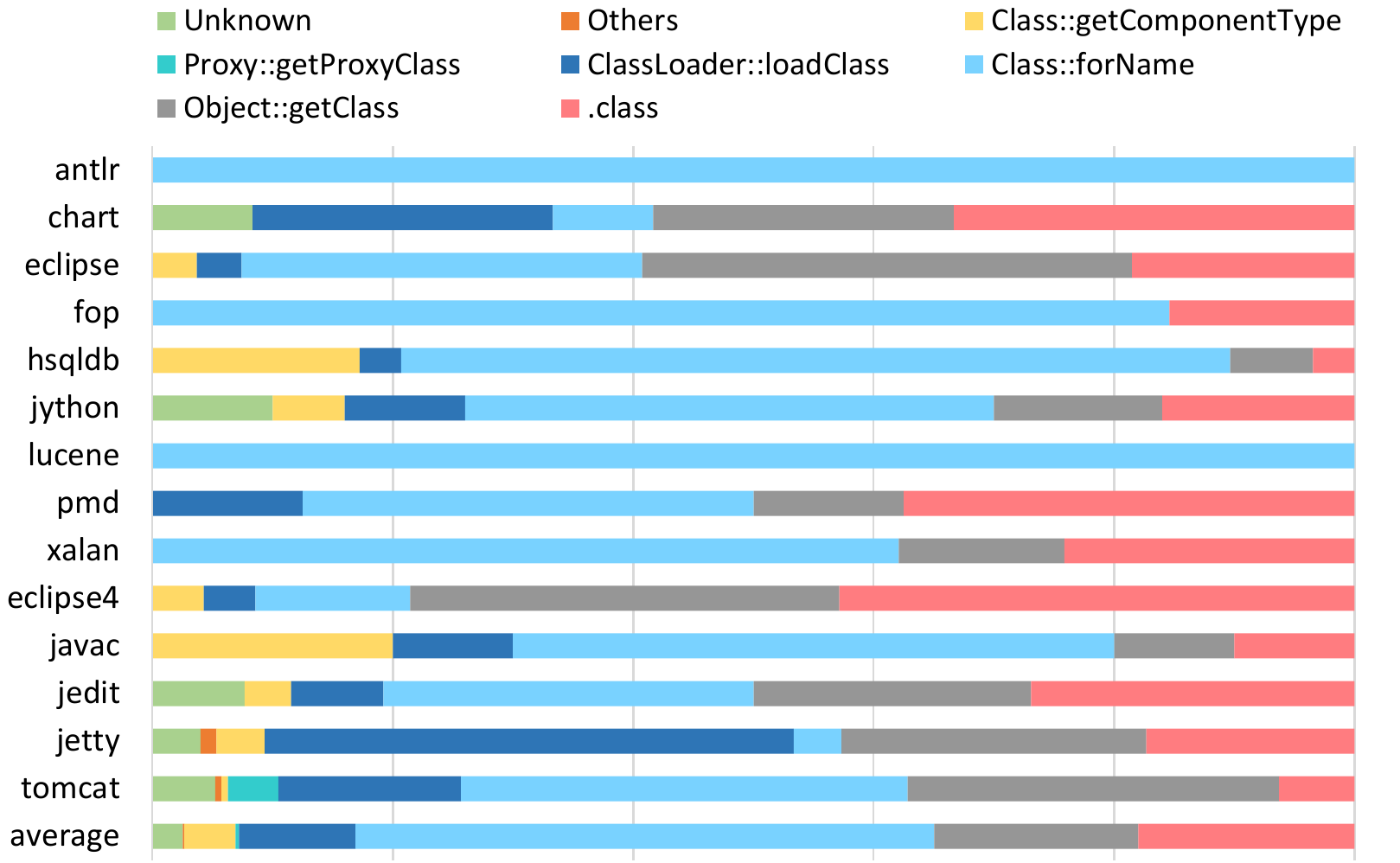}
\caption{Entry methods.}
\label{study:fig:entry}
\end{figure}

\begin{framed}
\noindent\textbf{Remark 2.} \emph{Reflection analysis should handle \texttt{Class.forName()}, \texttt{getClass()}, \texttt{.class}, and \texttt{loadClass()},
which are the four major entry methods for creating
\texttt{Class} objects. In addition,
\texttt{getComponentType()} should also be modeled if \texttt{Array}-related side-effect methods are analyzed, as they are usually used together.}
\end{framed}

\paragraph{Q3. String Constants and String Manipulations}
\label{study:sec:string}

In entry methods,
\verb"Class.forName()" and \verb"loadClass()" each have
a \verb"String" parameter to specify the target class. In member-introspecting methods,
\verb"getDeclaredMethod(String,...)" and
\verb"getMethod(String,...)" 
each return a \texttt{Method} object
named by its first parameter;
\verb"getDeclaredField(String)" and
\verb"getField(String)" 
each return a \texttt{Field} object named by its single parameter.

As shown in Figure~\ref{study:fig:str}, string constants
are commonly used when calling the two entry methods
(34.7\% on
average) and the four member-introspecting methods
(63.1\% on average). In the presence of
string manipulations, 
many class/method/field names are unknown exactly. 
This is mainly because their static resolution 
requires \emph{precise} handling of many different operations 
e.g., \verb"subString()" and \verb"append()". 
In fact, many cases are rather complex and thus 
cannot be handled well by simply
modeling the \texttt{java.lang.String}-related API.
Thus, \solar does not currently handle 
string manipulations.  However,
the incomplete information about class/method/field names (i.e., partial string information)
can be exploited beneficially~\cite{Yannis15}.

We also found that
many string arguments are \emph{Unknown} 
(55.3\% for calling entry methods
and 25.1\% for calling member-introspecting methods,
on average). These are the strings that may be 
read from, say, configuration files, command lines, or even Internet URLs.
Finally, string constants are found to be
more frequently used for calling 
the four member-introspecting methods
than the two entry methods: 146 calls to
\verb"getDeclaredMethod()" and \verb"getMethod()",
27 calls to 
\verb"getDeclaredField()" and \verb"getField()" in contrast
with 98 calls to \verb"forName()" and \verb"loadClass()". 
This suggests that the analyses
that ignore string constants
flowing into some
member-introspecting methods may fail to exploit 
such valuable
information and thus become imprecise.
%(Table~\ref{tab:rela}).
%The string analysis used for handling
%\verb"StringBuffer.append()" \cite{aplas2005rpt}  alone
%may not help too much. 
\begin{figure}[t]
\begin{tabular}{cc}
\multicolumn{2}{c}{
\hspace{-0.3cm}
\includegraphics[width=1.0 \textwidth]{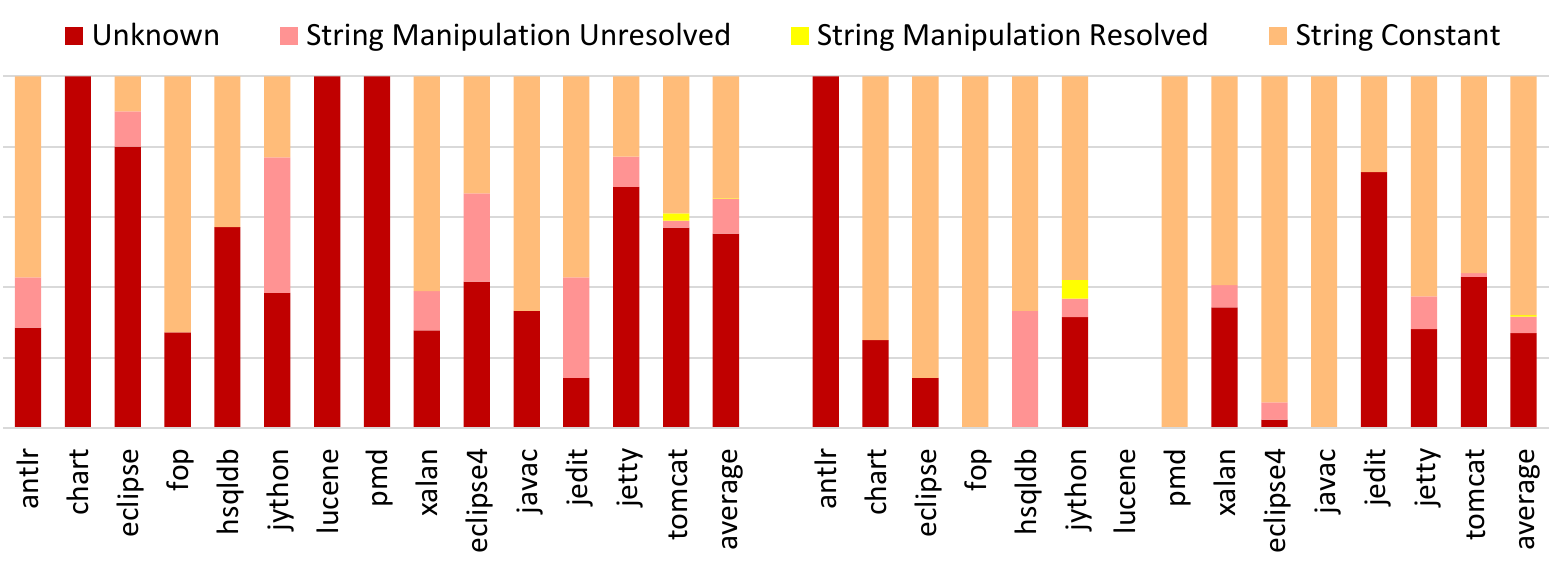}
}\\
\hspace{1.2cm}{\small (a) Calls to entry methods}
&
\hspace{1.2cm}{\small (b) Calls to member-introspecting methods}
\end{tabular}
\vspace*{-1pt}
\caption{Classification of the \texttt{String} arguments 
of two entry methods, \texttt{forName()} and
\texttt{loadClass()}, and four member-introspecting
methods,
\texttt{getMethod()}, \texttt{getDeclaredMethod()},
\texttt{getField()} and \texttt{getDeclaredField()}.
\label{study:fig:str}
}
%\vspace*{-2pt}
\end{figure}

\begin{framed}
\noindent\textbf{Remark 3.} \emph{Resolving reflective targets by string constants does not always work.
On average, only 49\% reflective call sites (where string arguments are used to specify reflective targets) use string constants. In addition, fully resolving non-constant string arguments by string manipulation, although mentioned elsewhere~\cite{Livshits05,Bodden11}, may be hard to achieve, in practice.}
\end{framed}

\paragraph{Q4. Retrieving an Array of Member Objects}
\label{study:array}
As introduced in Section~\ref{sec:under:inter:over},
half of member-introspecting methods 
(e.g., \texttt{getDeclaredMethods()})
return an array of member metaobjects. 
Although not as frequently used as the ones returning single member
metaobject (e.g., \texttt{getDeclaredMethod()}), 
they play
an important role in introducing new program behaviours
in some applications.
For example, in the two \texttt{Eclipse} programs
studied, there are four
\verb"invoke()" call sites called on an array of 
\texttt{Method}
objects returned from \verb"getMethods()" and
15 \verb"fld.get()" and \verb"fld.set()" 
call sites called on an array of \texttt{Field} objects
returned by \verb"getDeclaredFields()". Through these calls, dozens of methods are invoked and hundreds of fields are modified reflectively.
Ignoring such methods as in prior work
\cite{Livshits05} and tools (\bdd, \wala, \soot) may lead to significantly missed program behaviours by the analysis.

\begin{framed}
\noindent\textbf{Remark 4.} \emph{In member-introspecting methods, get(Declared)Methods/Fields/\linebreak Constructors(), which return an array of member metaobjects,
are usually ignored by most of existing reflection analysis tools. However, they play an important role in certain applications for both method invocations and field manipulations.}
\end{framed}

\begin{figure}[t]
%\begin{tabular}{cc}
\begin{minipage}{0.49\textwidth}
\includegraphics[width=\textwidth]{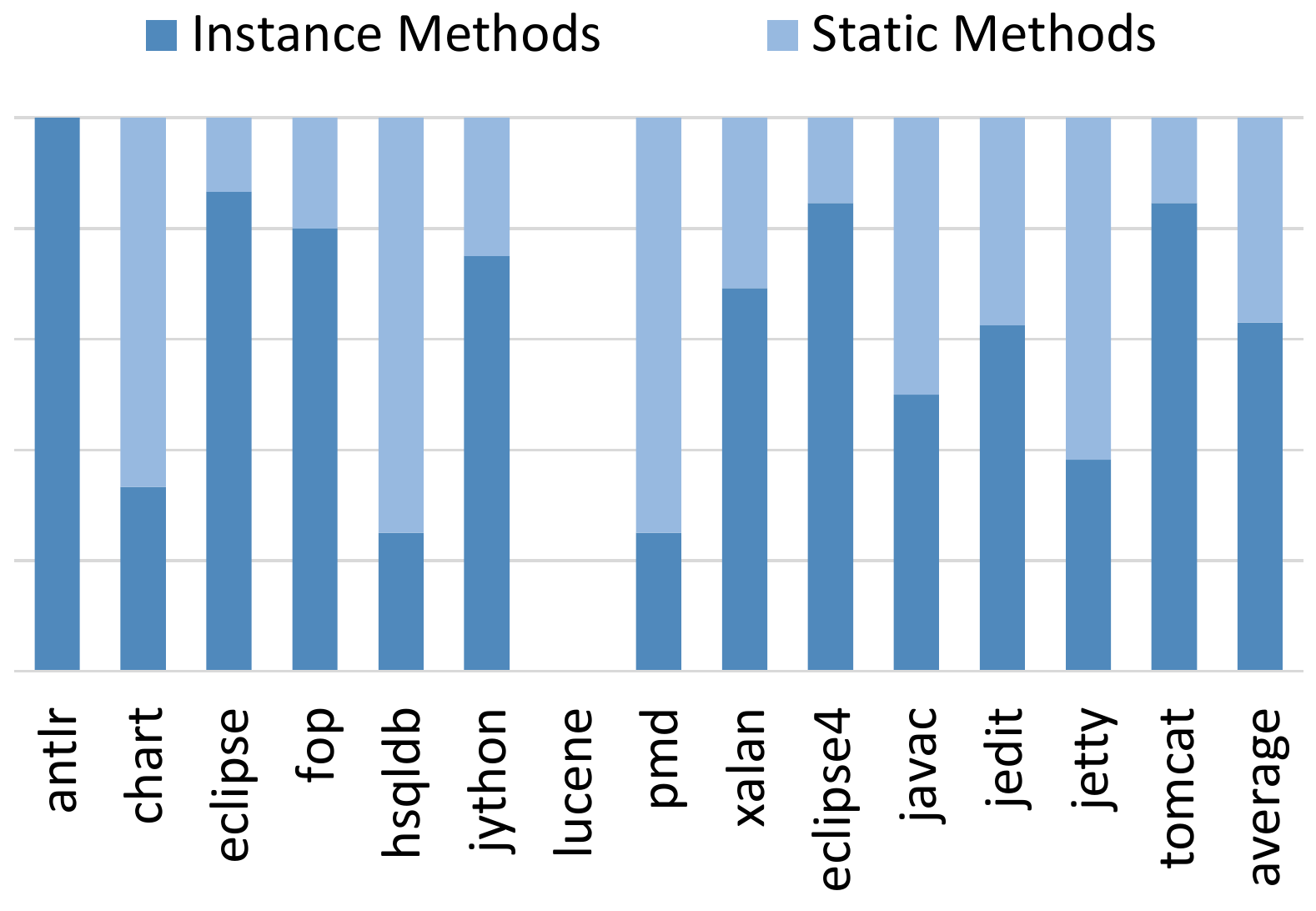}
\centering
{\small (a) Method::invoke() call sites  }
\end{minipage}
\hspace{1ex}
\begin{minipage}{0.5\textwidth}
\includegraphics[width=\textwidth]{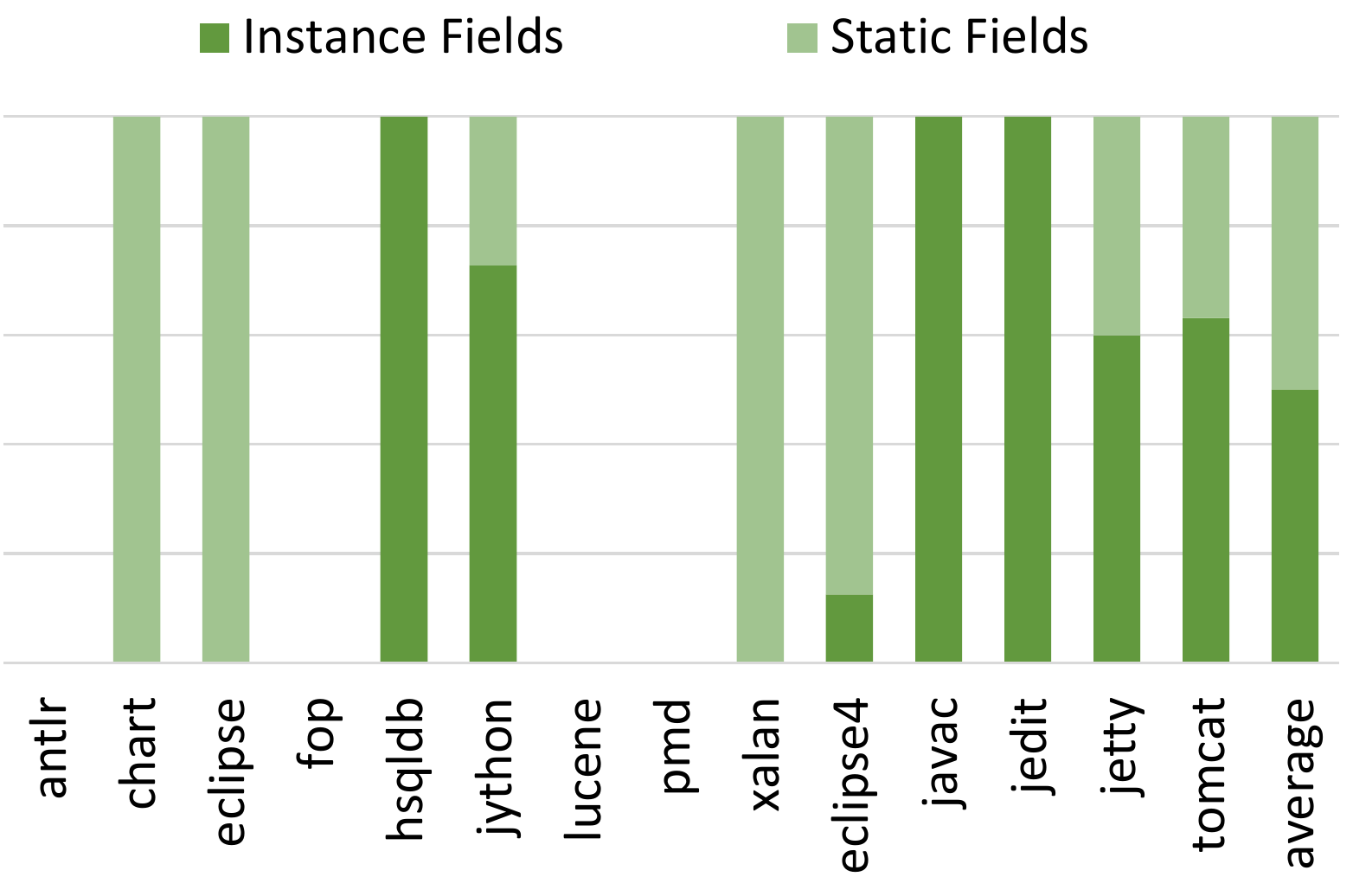}
\centering
{\small(b) Field::get()/set() call sites}
\end{minipage}
%\end{tabular}
\caption{The percentage frequency distribution of side-effect call sites on instance and static members.
\label{study:fig:static}
}
\end{figure}

\paragraph{Q5. Static or Instance Members}
\label{study:static}

In the literature on reflection analysis~\cite{Livshits05,Yue14,Yannis15}, reflective targets are mostly
assumed to be instance members. Accordingly, calls to
the side-effect methods such as \texttt{invoke()}, \texttt{get()} and \texttt{set()}, are usually 
considered as virtual calls, instance field accesses,
and instance field modifications, respectively (see Table~\ref{study:table:sideAPI} for details). 
However, in real programs, as shown in Figure~\ref{study:fig:static}, on average, 37\% of the \texttt{invoke()} call sites are found to invoke static methods and 50\% of the \texttt{get()}/\texttt{set()} call sites are found to access/modify static fields. Thus in practice, reflection analysis should distinguish both cases and 
also be aware of whether a reflective target is
a static or instance member, since the approaches for
resolving both cases are usually different.

\begin{framed}
\noindent\textbf{Remark 5.} \emph{Static methods/fields are invoked/accessed as frequently as instance methods/fields in Java reflection, even though the latter 
has received more attention in the literature. In practice, reflection analysis should distinguish
the two cases and adopt appropriate approaches for
handling them.}
\end{framed}

\paragraph{Q6. Resolving \texttt{newInstance()} by Casts}
\label{study:sec:cast}

In Figure~\ref{study:fig:example}, when \texttt{cName} is a not string constant, the (dynamic) type of \texttt{obj} created by \texttt{newInstance()} in line 4 is unknown. For this case, Livshits et al.~\cite{Livshits05} propose to infer the type of \texttt{obj} by leveraging the cast operation that post-dominates intra-procedurally the \texttt{newInstance()} call site. If the cast type is \texttt{A}, the type of \texttt{obj} must be \texttt{A} or one of its subtypes assuming that the cast operation does not throw any exceptions.
This approach has been implemented in many analysis tools such as \wala, \Bdd and \elf. 

However, as shown in Figure~\ref{study:fig:cast}, 
exploiting casts this way
does not always work. On average, 28\% of \texttt{newInstance()} call sites have no such intra-procedural post-dominating casts.
As \texttt{newInstance()} is the most widely used side-effect method, its unresolved call sites 
may significantly affect the soundness of the analysis,
as discussed in Section~\ref{dynamic}. Hence, we need a better solution to handle \texttt{newInstance()}.

\begin{figure}[thp]
\centering
\includegraphics[width=0.7\linewidth]{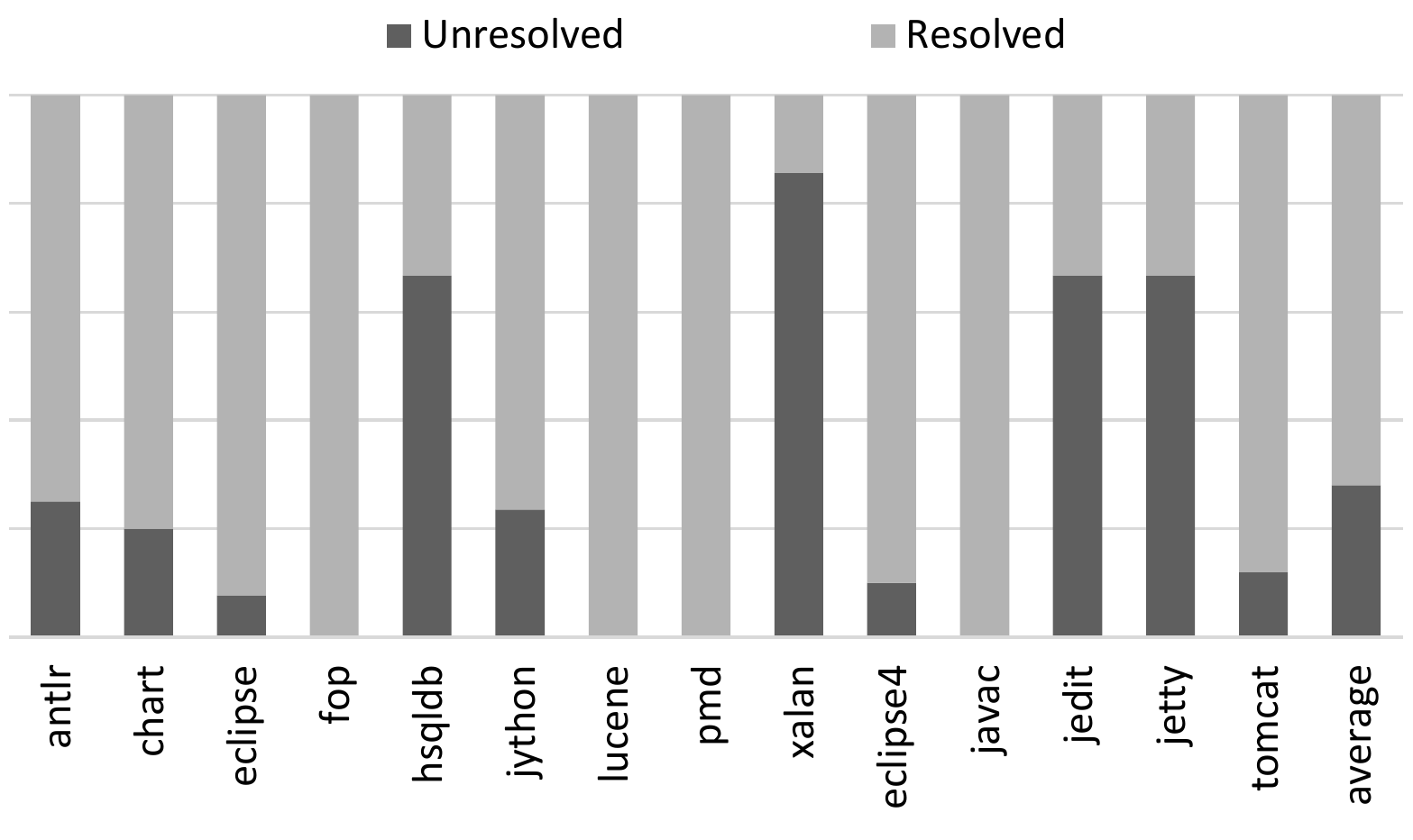}
\caption{\texttt{newInstance()} resolution by leveraging intra-procedural post-dominating casts.}
\label{study:fig:cast}
\end{figure}

\begin{framed}
\noindent\textbf{Remark 6.} \emph{Resolving \texttt{newInstance()} calls by leveraging their intra-procedural post-dominating cast operations fails to work for 28\% of the \texttt{newInstance()} call sites found. As \texttt{newInstance()} affects critically the soundness of reflection analysis (Remark 1), a more effective approach for its 
resolution is required.}
\end{framed}

\paragraph{Q7. Self-Inferencing Property}
\label{study:sec:self}

As illustrated by the program given
in Figure~\ref{study:fig:example}, the names of its reflective targets are specified by the \emph{string arguments} (e.g., \texttt{cName}, \texttt{mName} and \texttt{fName}) at the entry and member-introspecting reflective calls. Therefore, string analysis has been a
representative approach for static reflection analysis in the last decade. However, if the value of a string
is unknown statically (e.g., read from external files or command lines), then the related reflective calls,
including those to \texttt{newInstance()},
may have to be ignored, rendering the corresponding codebase or operations invisible to the analysis.
To improve precision, in this case,
the last resort is to exploit the existence of
some intra-procedurally post-dominating cast
operations on a call to \texttt{newInstance()} in order
to deduce the types of objects reflectively created
($Q6$).

However, in our study, we find that there are many
other rich hints about the behaviors of reflective
calls at their usage sites. Such hints can be and 
should be exploited to make reflection analysis more 
effective, even when some string values 
are partially or fully unknown. 
In the following, we first look at
three real example programs to examine 
what these hints are and expose a so-called
\emph{self-inferencing property} inherent in these
hints.
Finally, we explain why self-inferencing property is 
pervasive for Java reflection and discuss its 
potential in making reflection analysis more effective.

\lstset{numbers=none}
\begin{figure}[hptb]
\begin{lstlisting}[escapeinside={(*}{*)}]
(*\large{Application: \color{myred}{Eclipse (v4.2.2)}}*)
(*Class:\color{myblue}org.eclipse.osgi.framework.internal.core.FrameworkCommandInterpreter*)
(*\color{mygreen}{123}*)  public Object execute(String cmd) {...
(*\color{mygreen}{155}*)    Object[] parameters = new Object[] {this}; ...
(*\color{mygreen}{167}*)    for (int i = 0; i < size; i++) {
(*\color{mygreen}{174}*)      method = target.getClass().getMethod("_" + cmd, parameterTypes);
(*\color{mygreen}{175}*)      retval = method.invoke(target, parameters); ...}
(*\color{mygreen}{228}*)  }
\end{lstlisting}
\caption{Self-inferencing property for a
	reflective method invocation, deduced from
	the number and dynamic types of the components
	of the one-dimensional array argument,
	\texttt{parameters}, at 
	a \texttt{invoke()} call site.}
\label{study:fig:invoke}
\end{figure}

\begin{example}[Reflective Method Invocation 
		(Figure~\ref{study:fig:invoke})]
The method name (the first argument of \texttt{getMethod()} in line 174) is statically unknown as part of it is read from command line \texttt{cmd}. However, the target method (represented by \texttt{method}) can be deduced from the second argument (\texttt{parameters}) of the corresponding side-effect call \texttt{invoke()} in line 175. Here, \texttt{parameters} is an array of objects,
with only one element (line 155). By querying the pointer analysis and also leveraging the type information 
in the program, we know that the type of the object pointed to by \texttt{this} is \texttt{FrameworkCommandInterpreter}, which has no subtypes. As a result, we can infer that the descriptor of the target method in line 175
must have only one argument and its declared type must be \texttt{FrameworkCommandInterpreter} or one of its supertypes.
\end{example}

\begin{figure}[hptb]
\begin{lstlisting}[escapeinside={(*}{*)}]
(*\large{Application: \color{myred}{Eclipse (v4.2.2)}}*)
(*Class:\color{myblue}org.eclipse.osgi.framework.internal.core.Framework*)
(*{\color{mygreen}1652}*)  public static Field getField(Class clazz, ...) {
(*{\color{mygreen}1653}*)    Field[] fields = clazz.getDeclaredFields(); ...
(*{\color{mygreen}1654}*)    for (int i = 0; i < fields.|length|; i++) { ...
(*{\color{mygreen}1658}*)      return fields[i]; } ...
(*{\color{mygreen}1662}*)  }
(*{\color{mygreen}1682}*)  private static void forceContentHandlerFactory(...) {
(*{\color{mygreen}1683}*)    Field factoryField = getField(URLConnection.class, ...);     
(*{\color{mygreen}1687}*)    java.net.ContentHandlerFactory factory = 
          (java.net.ContentHandlerFactory) factoryField.get(null); ...
(*{\color{mygreen}1709}*)  }
\end{lstlisting}
\caption{Self-inferencing property for 
	a reflective field access, deduced from
	the cast operation and the \texttt{null} argument  used at a \texttt{get()} call site.
\label{study:fig:get}
}
\end{figure}

\begin{example}[Reflective Field Access (Figure~\ref{study:fig:get})]
In this program, \verb"factoryField" (line 1683) is obtained as a
\texttt{Field} object from an array of 
\texttt{Field} objects created in line 1653 
for all the fields in \verb"URLConnection".
In line 1687,
the object returned from \verb"get()" is cast to 
\verb"java.net.ContentHandlerFactory". Based on 
its cast operation and \texttt{null} argument, we know that the call to \texttt{get()}
may only access the static fields of \verb"URLConnection" 
with the type  \verb"java.net.ContentHandlerFactory", 
its supertypes or its subtypes. 
Otherwise, all the fields in
\verb"URLConnection" must be assumed to be accessed
conservatively.
\end{example}

\begin{figure}[hptb]
\begin{lstlisting}[escapeinside={(*}{*)}]
(*\large{Application: \color{myred}{Eclipse (v4.2.2)}}*)
(*Class:\color{myblue}org.eclipse.osgi.util.NLS*)
(*{\color{mygreen}300}*)  static void load(final String bundleName, Class<?> clazz) {
(*{\color{mygreen}302}*)    final Field[] fieldArray = clazz.getDeclaredFields();
(*{\color{mygreen}336}*)    computeMissingMessages(..., fieldArray, ...); ...
(*{\color{mygreen}339}*)  }
(*{\color{mygreen}267}*)  static void computeMissingMessages(..., Field[] fieldArray,...) {
(*{\color{mygreen}272}*)    for (int i = 0; i < numFields; i++) {
(*{\color{mygreen}273}*)      Field field = fieldArray[i];
(*{\color{mygreen}284}*)      String value = "NLS missing message: " + ...;
(*{\color{mygreen}290}*)      field.set(null, value); } ...
(*{\color{mygreen}295}*)  }
\end{lstlisting}
\caption{Self-inferencing property for a
reflective field modification, deduced from 
the \texttt{null} argument  and
the dynamic type of the \texttt{value} argument at
a \texttt{set()} call site.
\label{study:fig:set}
}
\end{figure}

\begin{example}[Reflective Field Modification (Figure~\ref{study:fig:set})]
Like the case in Figure~\ref{study:fig:get}, the field object in line 290 is also 
read from an array of field objects
created in line 302.
This code pattern appears one more time in line 432 in the same class, i.e., \texttt{org.eclipse.osgi.util.NLS}. 
According to the two arguments,
\texttt{null} and
\texttt{value},  provided
at \texttt{set()} (line 290), we can deduce that 
the target field (to be modified in line 290) is static (from \texttt{null}) and its declared type must be 
\texttt{java.lang.String} or one of its supertypes (from
the type of \texttt{value}).
\end{example}

%%%%%%%%%%%%%%%%%%%%%%%%%%%%%%%%
\begin{definition}[Self-Inferencing Property]
\label{def:self}
For each side-effect call site, where reflection is used, all the information of its arguments (including the receiver object), i.e., the number of arguments, their types, and the possible downcasts on its returned values, together with the possible string values statically resolved at its corresponding entry and member-introspecting call sites, forms its self-inferencing property.
\end{definition}

We argue that the self-inferencing property is a pervasive fact about Java reflection, due to the 
characteristics of object-oriented programming and 
the Java reflection API. 
As an example, the declared type of the object (reflectively returned by  \texttt{get()} and \texttt{invoke()} or created by \texttt{newInstance()}) is always
\texttt{java.lang.Object}. Therefore,
the object returned must be first cast 
to a specific type before it is used as 
a regular object, except when its dynamic type is 
\texttt{java.lang.Object} or it will be used only
as an receiver for
the methods inherited from \texttt{java.lang.Object}; otherwise,
the compilation would fail.
As another example, the descriptor of a
target method reflectively called at \texttt{invoke()} 
must be consistent with what is specified by its
second argument (e.g., \texttt{parameters} in line 176
of Figure~\ref{study:fig:invoke});
otherwise, exceptions would be thrown at runtime. 
These constraints should be exploited to
enable resolving reflection in a disciplined way. 
%\vspace{1ex}

The self-inferencing property not only helps resolve 
reflective calls more effectively when the values of string arguments are partially known (e.g., when either a
class name or a member name is known), but also 
provides an opportunity to resolve some reflective
calls even if the string values are fully unknown.
For example, in some Android apps, class and method names for reflective calls are encrypted for benign or malicious obfuscation, which ``\emph{makes it impossible for any static analysis to recover the reflective call}''~\cite{Rastogi13}. 
However, this appears to be too pessimistic 
in our setting, because, in addition to the string values, some other self-inferencing hints are possibly available to facilitate reflection resolution. For example, 
given \texttt{(A)invoke(o, \{...\})}, the class type of the target method can be inferred from the dynamic type 
of \texttt{o} (by pointer analysis). In addition, the 
declared return type and descriptor of the target 
method can also be deduced from
\texttt{A} and \texttt{\{...\}}, 
respectively, as discussed above.

\begin{framed}
\noindent\textbf{Remark 7.} 
\emph{
Self-inferencing property is an inherent and pervasive 
one in the reflective code of Java programs. 
However, this property has not been fully exploited 
in analyzing reflection before. 
We will show how this property can be leveraged in different ways (for analyzing different kinds of reflective methods as shown in Sections~\ref{sec:collect} and~\ref{sec:lhm}) in order
to make reflection analysis significantly more effective.
}
\end{framed}

\section{Overview of \textsc{Solar}}
\label{sec:solar}
We first introduce the design goal of, challenges faced
by, and insights behind \solar in Section~\ref{sec:solar:gci}. We then present an overview of the \solar framework including its basic working mechanism and the 
functionalities of its components in Section~\ref{sec:solar:frame}.

\subsection{Goals, Challenges and Insights}
\label{sec:solar:gci}

\paragraph{Design Goal}
As already discussed in Section~\ref{intro:contri},
\solar is designed to resolve reflection as soundly
as possible (i.e., more soundly or even soundly when 
some reasonable assumptions are met) and accurately identify the reflective calls resolved unsoundly.

\paragraph{Challenges}

In addition to the challenges described
in Section~\ref{intro:cha}, we must also
address another critical problem: it is hard to reason 
about the soundness of \solar and 
identify accurately which parts of the reflective code have been
resolved unsoundly.

If one target method at 
one reflective call is missed by the analysis, it
may be possible to identify the statements that are
unaffected and thus
still handled soundly. However, 
the situation will deteriorate sharply if many 
reflective calls are resolved unsoundly. In the worst
case, all the other statements in the program
may be handled unsoundly. To play safe,
the behaviors of all
statements must be assumed to be under-approximated in the analysis, as we do not know 
which value at which statement has been affected by the unsoundly resolved reflective calls.
%In this situation, it will be difficult to both
%reason about the soundness of reflection analysis and 
%identify accurately the unsoundly resolved reflective calls in the program.

\paragraph{Insights}
To achieve the design goals of \solar, we first need to ensure that as few reflective calls are resolved unsoundly as possible. This will reduce the propagation of
unsoundness to as few statements as possible in the
program.
As a result, if \solar reports that some analysis results are sound (unsound), then they are likely sound
(unsound) with high confidence. This is the key to enabling
\solar to achieve practical \emph{precision} in terms of both soundness reasoning and unsoundness identification.

To resolve most or even all reflective calls soundly, \solar needs to maximally leverage the available information (the string values at reflective calls are inadequate as they are often unknown statically) in the program to help resolve reflection. Meanwhile, \solar should resolve reflection precisely. Otherwise, \solar
may be unscalable due to too many false reflective 
targets introduced. In Sections~\ref{sec:collect} and \ref{sec:lhm}, we will describe how \solar leverages the \emph{self-inferencing property} in a program (Definition~\ref{def:self}) to analyze reflection with good soundness and precision. 

Finally, \solar should be aware of the conditions under which a reflective target cannot be resolved. In other words, we need to formulate a set of soundness criteria for different reflection methods based on different resolution strategies adopted. If the set of 
criteria is not satisfied, \solar can mark the corresponding reflective calls as the ones that are resolved unsoundly. Otherwise, \solar can determine the soundness of the reflection analysis under some reasonable assumptions (Section~\ref{sec:meth-ass}).

\subsection{The \textsc{Solar} Framework}
\label{sec:solar:frame}
Figure~\ref{fig:solar} gives an overview of \solar. 
\solar consists of four core components: an 
\emph{inference engine} for discovering 
reflective targets, an \emph{interpreter} for 
soundness and precision, a \emph{locater} for
unsound and imprecise calls, and a  
\probe (a lightweight version of \solar).
In the rest of this section, we first introduce the basic working mechanism of \solar and then briefly explain
the functionality of each of its components.

\subsubsection{Working Mechanism}
\label{sec:mech}

Given a Java program, the \emph{inference engine} resolves and infers the reflective targets that
are invoked or accessed at all side-effect method call sites in the program, as soundly as possible. 
There are two possible outcomes. If the reflection resolution is scalable (under a given time budget), the 
\emph{interpreter} will proceed to assess the quality of
the reflection resolution under soundness and precision criteria.
Otherwise, \probe, a lightweight version of \solar, would be called upon to analyze the same program again. As \probe resolves reflection less soundly but much more precisely than \solar, its scalability can be usually guaranteed. 
We envisage providing a range of \probe variants 
with different trade-offs among soundness, precision 
and scalability, so that the scalability
of \probe can be always guaranteed.

\begin{figure}[t]
\hspace{4ex}
\includegraphics[width=0.9\textwidth]{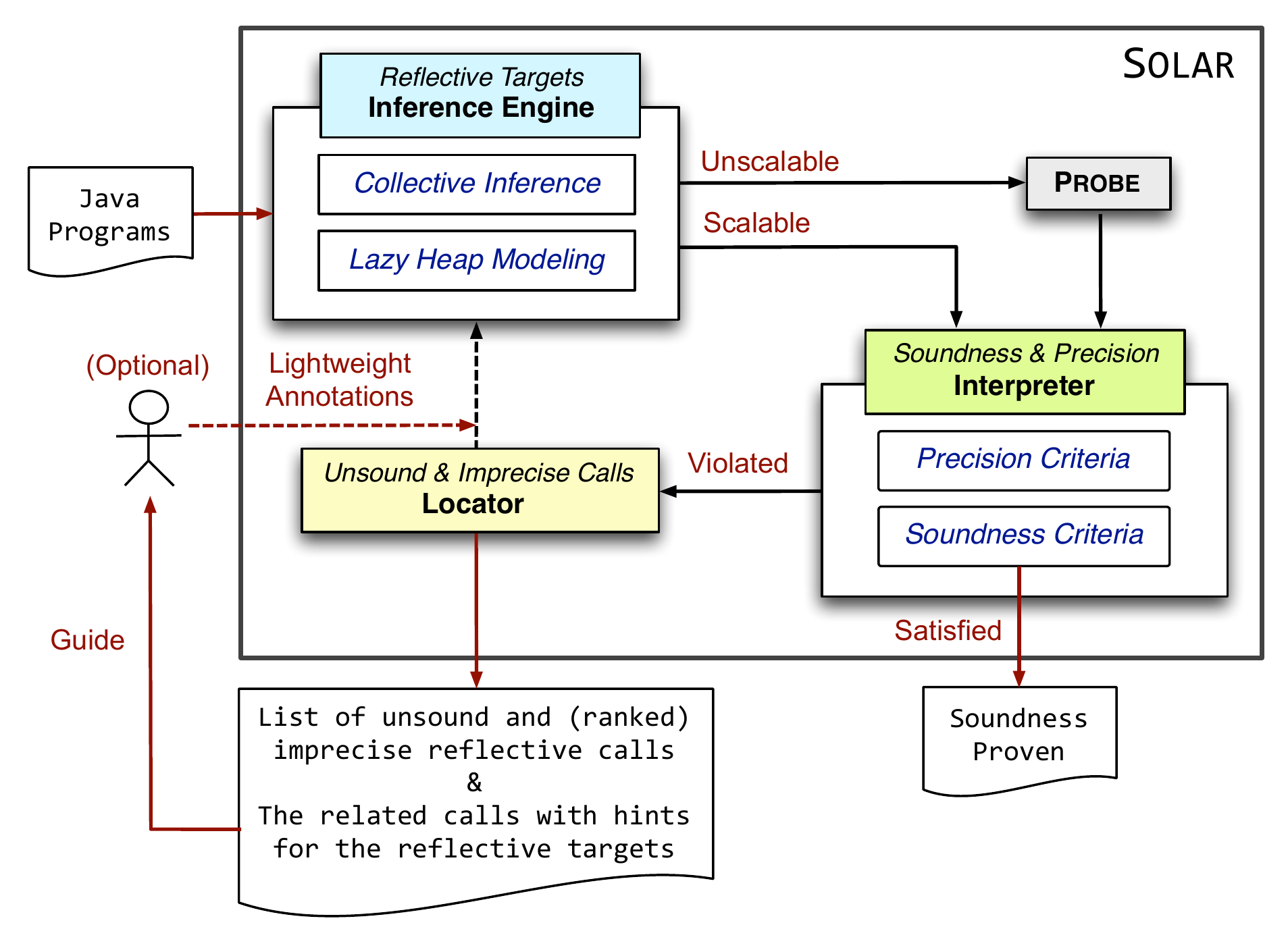}
\vspace{-2ex}
\caption{Overview of \solar.
\label{fig:solar}}
\end{figure}

If the \emph{interpreter} confirms that the soundness
criteria are satisfied, \solar reports that the reflection analysis is sound. Otherwise, the \emph{locater}
will be in action to identify which reflective calls
are resolved unsoundly.
In both cases,
the \emph{interpreter} will also report the reflective 
calls that are resolved imprecisely if the precision criteria are violated. This allows potential precision improvements to be made for the analysis.

The \emph{locater} not only outputs the list of reflective calls that are resolved unsoundly or imprecisely in
the program but also pinpoints the related entry and member-introspecting method calls of these ``problematic'' calls, which contain the hints to guide users to
add annotations, if possible. Figure~\ref{fig:output}
depicts an example output.

As will be demonstrated in Section~\ref{sec:eval}, for
many programs, \solar is able to resolve reflection 
soundly under some reasonable assumptions. However, for certain programs, like other existing
reflection analyses, \solar is unscalable. In this case, \probe (a lightweight version of \solar whose scalability can be guaranteed as explained above) is applied to analyze the same program.
Note that \probe is also able to identify the reflective calls that are resolved unsoundly or imprecisely in
the same way as \solar.
Thus, with some unsound or imprecise reflective 
calls identified by \probe and  annotated by users, \solar will 
re-analyze the program, scalably after one or more iterations of this ``probing'' process. As discussed
in Section~\ref{sec:eval}, the number of such iterations is usually small, e.g., only one is required for most of the programs evaluated.

For some programs, users may choose not to add 
annotations to facilitate reflection analysis. Even
in this case, users can still benefit from the \solar 
approach, for two reasons. First,
\probe is already capable of producing
good-quality reflection analysis results, more soundly
than string analysis. Second,
users can understand the quality of these 
results by inspecting the \emph{locater}'s output, 
as discussed in Section~\ref{intro:pre}.

\begin{figure}[t]
\includegraphics[width=1.02\textwidth]{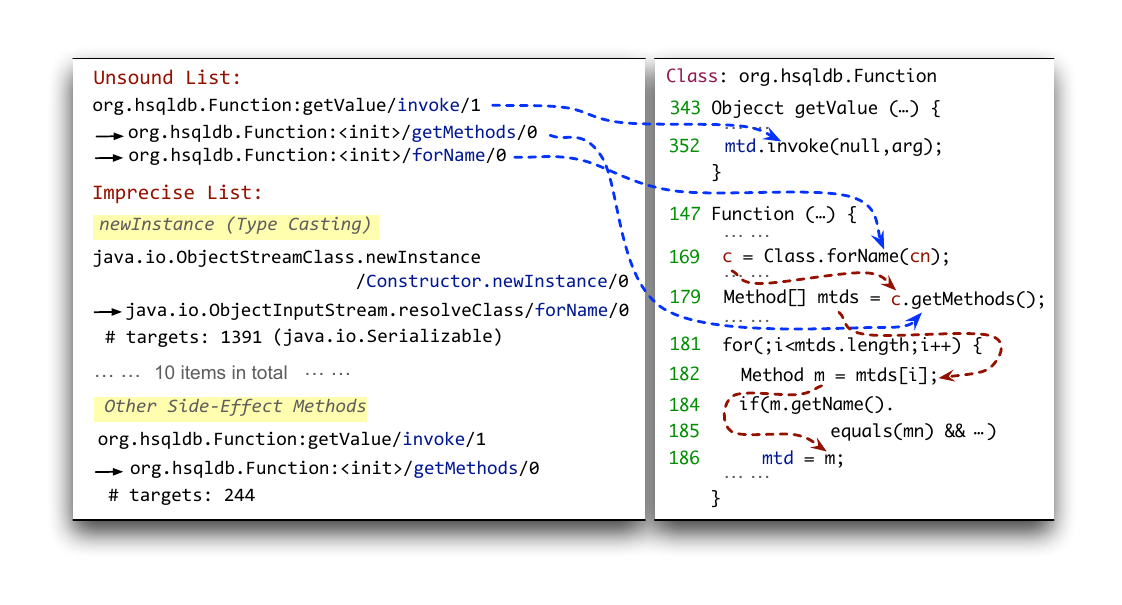}
\vspace{-6ex}
\caption{An example output from \solar when its
	soundness/precision criteria are violated.
\label{fig:output}}
\end{figure}

\subsubsection{Basic Components}
Their functionalities 
are briefly explained below.

\paragraph{Reflective Target Inference Engine}

We employ two techniques to discover reflective targets: \emph{collective inference} for resolving
reflective method invocations
(\texttt{invoke()}) and field accesses/modifications
(\texttt{get()/set()}) and \emph{lazy heap modeling} for handling reflective object creation (\texttt{newInstance()}).
Both techniques exploit the \emph{self-inferencing property} found in our reflection usage study (Section~\ref{sec:study:results}) to resolve reflection in a disciplined manner
with good soundness and precision.
We will explain their approaches in Sections~\ref{sec:collect} and~\ref{sec:lhm}, respectively, and further formalize them in Section~\ref{form:infer}.

\paragraph{Soundness and Precision Interpreter}
\solar currently adopts a simple but practical scheme to measure precision,  in terms of the number of targets resolved at a side-effect call site. \solar allows users to specify a threshold value in advance to define the imprecision that can be tolerated for each kind of side-effect calls, forming its precision criteria.
Its soundness criteria are formulated in terms of 
conditions under which various inference rules (adopted by
the inference engine) can be applied soundly.
We will formalize the soundness criteria required
in Section~\ref{sec:cond}.

\paragraph{Unsound and Imprecise Call Locater}

A side-effect reflective call is identified as being
imprecisely resolved if the number of resolved targets
is higher than permitted by its 
corresponding precision criterion. Similarly, a 
side-effect reflective call is marked
as being unsoundly resolved if its corresponding soundness 
criterion is violated.

To facilitate user annotations for an imprecisely or
unsoundly resolved side-effect
reflective call, the locater also pinpoints its
corresponding
entry and member-introspecting method call sites. 
It can be difficult to understand the semantics of a reflective call by reading just the code at its vicinity.
Often, more hints about its
semantics are available at or around its
entry and member-introspecting method call sites, which may reside in different methods or even classes in the program.

\begin{example}
\label{example:anno}
Figure~\ref{fig:output} illustrates \solar's output
for a real program. The \texttt{invoke()} (side-effect method) call site in method \texttt{getValue} is the 
unsoundly resolved call identified. Its entry method (\texttt{forName()}) and member-introspecting method (\texttt{getMethods()}) call sites, which are located in the constructor of class \texttt{org.hsqldb.Function}, are also highlighted. 
At the right-hand side of the figure, we can see that the hints for annotations are available around the entry and member-introspecting call sites (e.g., lines 169, 184 and 185) rather than the side-effect call site (line 352). This demonstrates the usefulness of \solar's annotation strategy.
\end{example}

We will further explain how \solar identifies unsoundly resolved reflective calls in Section~\ref{sec:unsound} and how users are guided to add annotations in Section~\ref{sec:anno}.

\paragraph{\probe}
\probe is a lightweight version of \solar by weakening the power of its inference engine. \probe changes its inference strategies in both collective inference and lazy heap modeling, by resolving reflection  more precisely but less soundly. Thus,
the scalability of \probe can be usually guaranteed as 
fewer false reflective targets are introduced. 
We will formalize \probe based on the formalism of \solar in Section~\ref{sec:probe}.

\section{The \solar Methodology}
\label{sec:meth}

We first define precisely a set of assumptions made
(Section~\ref{sec:meth-ass}). Then we examine the methodologies
of collective inference (Section~\ref{sec:collect}) and lazy heap modeling (Section~\ref{sec:lhm}) used in \solar's inference engine.
Finally, we explain how \solar identifies unsoundly resolved reflective calls (Section~\ref{sec:unsound}) and 
how doing so helps guide users to add 
lightweight annotations to facilitate a 
subsequent reflection analysis (Section~\ref{sec:anno}).

\subsection{Assumptions}
\label{sec:meth-ass}
There are four reasonable assumptions. The first one is commonly made on static analysis~\cite{closed14} and the next two are made previously on reflection analysis for Java~\cite{Livshits05}. 
\solar adds one more assumption to allow reflective allocation sites to be modeled lazily. Under the four
assumptions, it becomes possible to reason about the
soundness and imprecision of \solar.

\begin{assumption}[Closed-World]
\label{ass:world}
Only the classes reachable 
from the class path at analysis time can be used 
during program execution.
\end{assumption}

This assumption is reasonable 
since we cannot expect static analysis to 
handle all classes 
that a program may download from the Internet and load at runtime. In addition, Java native methods are
excluded as well. 

\begin{assumption}[Well-Behaved Class Loaders] 
\label{ass:loader}
The name of the class returned by a call to 
\texttt{Class.forName(cName)} equals \texttt{cName}.
\end{assumption}

This assumption says that the class to be loaded by \texttt{Class.forName(cName)} is the expected  one specified by the value of \texttt{cName}, thus avoiding handling the situation where a different class is loaded by, e.g., a malicious custom class loader.  How to handle custom class loader statically is still an open hard problem. Note that this assumption also applies to \texttt{loadClass()}, another entry method shown in
Figure~\ref{study:fig:coreAPI}.

\begin{assumption}[Correct Casts]
\label{ass:cast}
Type cast operations applied to the results of calls
to side-effect methods are correct, without throwing a ClassCastException.
\end{assumption}

This assumption has been recently demonstrated as practically valid through extensive experiments in~\cite{Barros17}.

\begin{assumption}[Object Reachability]
\label{ass:reach}
Every object \texttt{o} created reflectively in a call to 
\texttt{newInstance()} flows into (i.e., will be used in) either (1)  a
type cast operation \texttt{\ldots = (T) v}
or (2) a call to a side-effect method,
\texttt{get(v)},
\texttt{set(v,\ldots)} or
\texttt{invoke(v,\ldots)}, where \texttt{v} points to
\texttt{o}, along every execution path in the program.
\end{assumption}

Cases (1) and (2) represent two
kinds of usage points at which the class types of object 
$o$ will be inferred lazily. 
Specifically, case (1) indicates that $o$ is used as a regular object, and case (2) says that $o$ is used reflectively, i.e., flows to the first argument of different side-effect calls as a receiver object.
This assumption does not cover only one rare situation where $o$
is created but never used later.
As validated in Section~\ref{sec:eval-ass},
Assumption~\ref{ass:reach} is found to hold for
almost all reflective allocation sites in the real code.

\subsection{Collective Inference}
\label{sec:collect}

Figure~\ref{fig:collect} gives an overview of 
collective inference
for handling reflective method invocations and field 
accesses/modifications.
Essentially, we see how 
the side-effect method calls \texttt{invoke()}, \texttt{get()}
and \texttt{set()} are resolved.  A \texttt{Class} object
\stextmath{C} is first created for the target class named
\texttt{cName}. Then a
\texttt{Method} (\texttt{Field}) object 
\stextmath{M} 
(\stextmath{F}) 
representing the target method (field) named 
\texttt{mName} (\texttt{fName})
in the target class of \stextmath{C} is created. Finally, 
at some reflective call sites, e.g., 
\texttt{invoke()},
\texttt{get()} and
\texttt{set()}, the target method (field) is invoked
(accessed) on the target object \texttt{o}, with the
arguments, \texttt{\{...\}} or \texttt{a}.

\solar works as part of a pointer analysis, with each being
both the producer and consumer of the other.
By exploiting the self-inferencing property (Definition~\ref{def:self}) inherent in the reflective code, \solar 
employs the following two component analyses:
\begin{description}
\item[Target Propagation (Marked by Solid Arrows)]
\solar resolves the targets (methods or fields) of 
reflective calls, \texttt{invoke()}, \texttt{get()} and \texttt{set()}, 
by propagating the names of their
target classes and methods/fields (e.g., those pointed by
\texttt{cName}, \texttt{mName} and \texttt{fName}
if statically known)
along the solid lines into the points symbolized by circles. 

% For focus and simplicity, removing this part
%Note that the second argument of \texttt{getMethod()} is an array of type \texttt{Class[]}. It may not be beneficial to analyze it to disambiguate overloaded  methods,  because (1) its size may be statically unknown, (2) its components are collapsed by the pointer analysis, and (3) its components may be \texttt{Class} objects with unknown class names.

\begin{figure}
\includegraphics[width=1\textwidth]{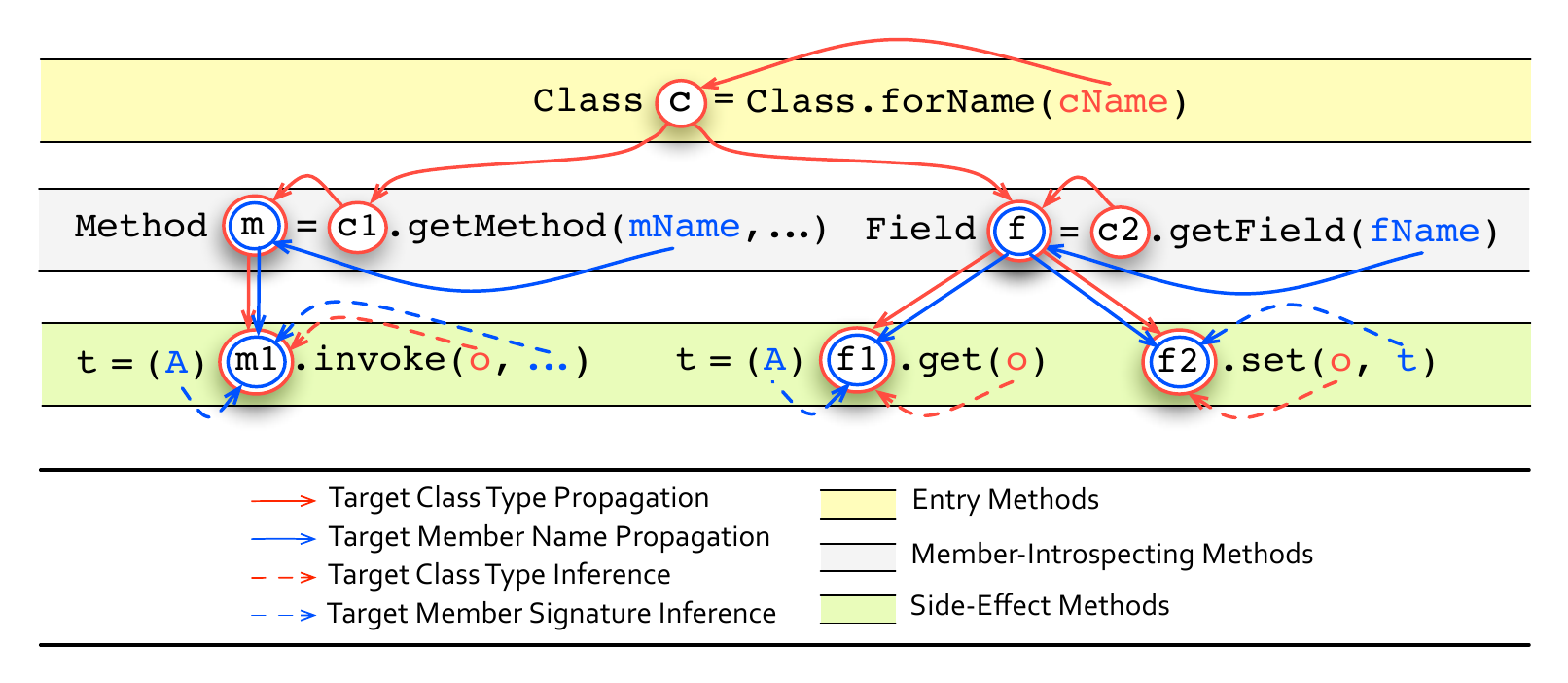}
\caption{Collective Inference in \solar.
	\label{fig:collect}}
\end{figure}

\vspace{1ex}
\item[Target Inference (Marked by Dashed Arrows)] By using
\emph{Target Propagation} alone, 
a target member name (blue circle) or its 
target class type (red circle) 
at a reflective call site
may be missing, i.e., unknown,
due to the presence of input-dependent
strings (Figure~\ref{study:fig:str}). 
If the target class type (red circle) is missing, \solar will infer it
from the dynamic type of the target object \texttt{o} 
(obtained by pointer analysis)
at \texttt{invoke()}, \texttt{get()} or \texttt{set()}
(when \texttt{o != null}). 
If the target 
member name (blue circle) is missing, \solar will 
infer it from (1) the dynamic types of the arguments 
of the target call, e.g., 
\texttt{\{...\}} of \texttt{invoke()} and
\texttt{a} of \texttt{set()}, and/or (2) 
the downcast on the result of 
the call, such as \texttt{(A)} at 
\texttt{invoke()} and \texttt{get()}. 

\end{description}
%Just like \texttt{getMethod()}, the second argument
%of \texttt{invoke()} is also an array, which is also
%similarly hard to analyze statically. To improve precision,
%we disambiguate overloaded
%target methods with a simple intra-procedural
%analysis only when the array argument can be analyzed exactly 
%element-wise.

\begin{example}
Let us illustrate \emph{Target Inference} by considering
\texttt{r = (A) \stextmath{F}.get(o)}
in Figure~\ref{fig:collect}. If a target field name
is known but its target class type (i.e., red circle) 
is missing, we can infer it from the 
types of all pointed-to objects $o'$ by \texttt{o}.
If $B$ is one such a type, then a 
potential target class of
\texttt{o} is $B$ or any of its supertypes. If the target
class type of \stextmath{F} is $B$ but
a potential target
field name (i.e., blue circle) is missing, we
can deduce it from the downcast \texttt{(A)}
to resolve the call to 
\verb"r = o.f", where \verb"f" is a member field in $B$
whose type is \verb"A" or a supertype or subtype of 
\verb"A".  A supertype is possible because a field of
this supertype may initially point to an object of type
\texttt{A} or a subtype of \texttt{A}.
%and then downcast to \texttt{A}. 
\end{example}

In Figure~\ref{fig:collect}, if 
\texttt{getMethods()} (\texttt{getFields()}) is called
as a member-introspecting method
instead, then an array of \texttt{Method} (\texttt{Field})
objects will be returned so that \emph{Target
Propagation} from it is implicitly performed by pointer analysis. All the other
methods in \texttt{Class} for introspecting
methods/fields/constructors are handled similarly.

\paragraph{Resolution Principles}
To balance soundness, precision and scalability in a 
disciplined manner, collective inference resolves the targets at a side-effect method call site (Figure~\ref{fig:collect}) if and only if one of the following three conditions is met:
%after target propagation and target inference have been performed. 
\begin{itemize}
\item Both its target class type (red circle) and 
target member name (blue circle) are made available by target propagation (solid arrow) \emph{or} target inference (dashed arrow).
\item Only its target class type (red circle) is made
	available by target propagation (solid arrow) \emph{or} target inference (dashed arrow).
\item Only its target member name (blue circle) is 
	made available by \emph{both}
target propagation (solid arrow) \emph{and} target inference (dashed arrow).
\end{itemize}

In practice, the first condition is met by
many calls to \texttt{invoke()}, \texttt{get()} and \texttt{set()}. In this case,
the number of spurious targets introduced 
can be significantly reduced due to the simultaneous
enforcement of two constraints (the red and blue circles). 

To increase the inference power of \solar,
as explained in Section~\ref{sec:solar:gci}, we will
also resolve a side-effect call under the one of
the other two conditions (i.e., when only one circle is available).
The second condition requires only its target
class type to be inferable, as a 
class (type) name that is prefixed with its
package name is usually unique.  However, 
when only its target member name (blue circle)
is inferable, we insist both its name 
(solid arrow) and its descriptor (dashed arrow)
are available. In a large program,
many unrelated classes may happen to use the same 
method name. Just relying only the name of a 
method lone in the last condition may cause imprecision.

If a side-effect call does not satisfy any of 
the above three conditions, then
\solar will flag it as being
unsoundly resolved, as described in Section~\ref{sec:unsound}. 

%%%%%%%% LHM %%%%%%%%%
\subsection{Lazy Heap Modeling}
\label{sec:lhm}

\begin{figure}
\includegraphics[width=1\textwidth]{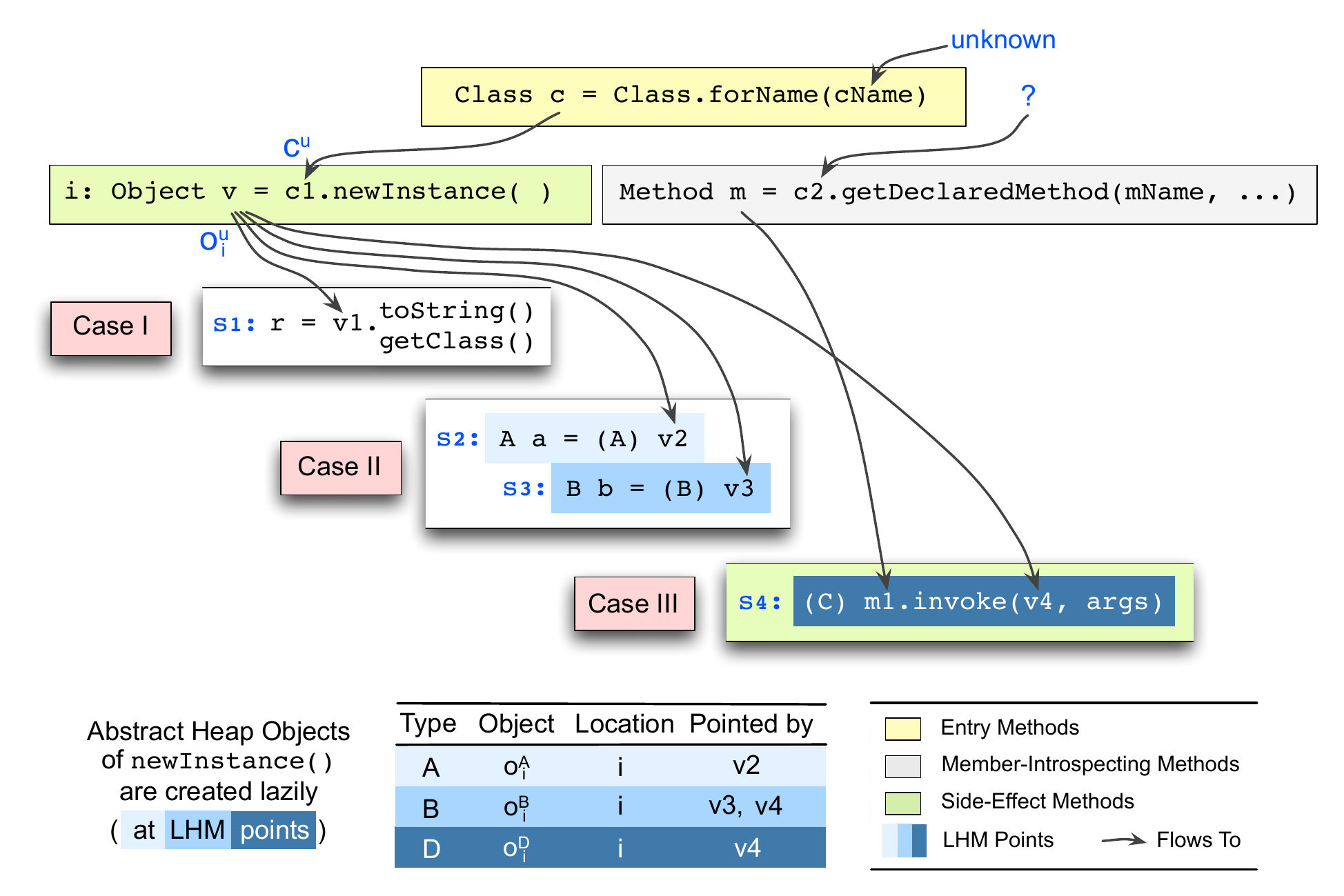}
\caption{Lazy heap modeling (LHM).
The abstract objects,
$o_{\texttt i}^{\texttt A}$,
$o_{\texttt i}^{\texttt B}$ and $o_{\texttt i}^{\texttt D}$, for 
\texttt{newInstance()} are created lazily at  
the two kinds of LHM (usage) points in Cases (II) and (III), where
\texttt{A} and \texttt{B} have
no subtypes and 
\texttt{m1} is declared in
\texttt{D} with one subtype
\texttt{B}, implying that the dynamic types of the
objects pointed by \texttt{v4} is \texttt{D} or 
\texttt{B}.
	\label{fig:lhm}}
\end{figure}

As shown in Section~\ref{sec:study:results}, reflective object creation, i.e., \texttt{newInstance()} is the most widely used side-effect method.
Lazy heap modeling (LHM), illustrated in
Figure~\ref{fig:lhm},
is developed to facilitate its target inference and the
soundness reasoning for \solar.

There are three cases. Let us consider Cases (II) and
(III) first.
Usually, an object, say \texttt{o}, created by \texttt{newInstance()} will be used later either regularly or reflectively as shown in Cases (II) and (III),
respectively. 
In Case (II), since the declared type of \texttt{o} is \texttt{java.lang.Object}, 
$o$ is first cast to a specific type before used for
calling methods or accessing fields as a 
regular object. 
Thus, \texttt{o} will flow to some cast operations.
In Case (III), \texttt{o} is used in a reflective way, i.e., as the first argument of a call to a side-effect method, \texttt{invoke()}, \texttt{get()} or \texttt{set()}, on which the target method (field) is called (accessed). This appears to be especially commonly used in Android apps. 

For these two cases, we can leverage the information at \texttt{o}'s usage sites to infer its type lazily and also make its corresponding effects (on static analysis) visible there. As for the (regular) side-effects that may be made by \texttt{o} along the paths from \texttt{newInstance()} call site to its usages sites, we use Case (I) to cover this situation. 

Now, we examine these Cases (I) -- (III), which are highlighted in Figure~\ref{fig:lhm}, one by one, in more
detail.
If \texttt{cName} at \texttt{c = Class.forName(cName)}
is unknown, \solar will 
create a \texttt{Class} object \texttt{c$^u$} that 
represents this unknown class and assign it to 
\texttt{c}. On discovering that \texttt{c1}
points to a \texttt{c$^u$} at an allocation site 
\texttt{i} (\texttt{v = c1.newInstance()}), \solar will 
create
an abstract object \texttt{o$^u_{\tt i}$} of an unknown type
for the site to mark it as being
unresolved yet. Subsequently,
\texttt{o$^u_{\tt i}$} will flow into Cases (I) -- (III).

In Case (I), the returned type of
\texttt{o$^u_{\tt i}$}
is declared as \texttt{java.lang.Object}. Before 
\texttt{o$^u_{\tt i}$} flows to a cast operation,
the only side-effect that can be made by this object is to call some methods declared in \texttt{java.lang.Object}. In terms of reflection analysis, only the two 
pointer-affecting methods shown 
in Figure~\ref{fig:lhm}
need to be considered. \solar handles both 
soundly, by
returning (1) an unknown string for \texttt{v1.toString()}
and (2) an unknown \texttt{Class} object for 
\texttt{v1.getClass()}. Note that \texttt{clone()} 
cannot be called on \texttt{v1} of type \texttt{java.lang.Object}
(without a downcast being performed on \texttt{v1} first).

Let us consider Cases (II) and (III),
%regular method calls are typically
%made on \texttt{a} and \texttt{b} after their 
%corresponding cast operations. In either case,
where each 
statement, say $S_x$, is called an \emph{LHM point}, containing a variable $x$ into which
\texttt{o$^u_{\tt i}$} flows. In Figure~\ref{fig:lhm},
we have
$x\in\{\texttt{v2}, \texttt{v3}, \texttt{v4}\}$.
Let \lhm$(S_x)$ be the set of class types discovered
for the unknown class $u$ at $S_x$ by inferring from the cast operation 
at $S_x$ as in Case (II) or the information available
at a call to \texttt{(C) m1.invoke(v4, args)} (e.g., on
\texttt{C}, \texttt{m1} and \texttt{args}) as in Case
(III). For example, given
\texttt{\tt S2$_{\tt v2}$: A a = (A) v2},
\lhm(\texttt{\tt S2$_{\tt v2}$}) contains 
\texttt{A} and its subtypes. 
%Unlike existing reflection
%analyses~\cite{BS09,Livshits05,wala,yue14}, 
%\solar leverages the
%casts that are not necessarily post-dominant to improve (or
%even achieve) soundness. 
To account for 
the side-effects of \texttt{v = c1.newInstance()}
at $S_x$ lazily, we add (conceptually) a statement,
\texttt{x = new T()}, for every
$T\in\lhm(S_x)$, before $S_x$. Thus, 
\texttt{o$^u_{\tt i}$} is finally split into and thus aliased
with $n$ distinct abstract objects,
\texttt{o$_{\tt i}^{T_1}$},\dots,
\texttt{o$_{\tt i}^{T_n}$}, where $\lhm(S_x)=\{
T_1,\dots,T_n\}$, such
that $x$ will be made to point to all these new abstract objects.

Figure~\ref{fig:lhm} illustrates lazy heap modeling
for the case when neither \texttt{A}
nor \texttt{B} has subtypes and the declaring class for
\texttt{m1} is discovered to be \texttt{D}, which has 
one subtype \texttt{B}. Thus, \solar will deduce that
$\lhm(\tt S2_{\tt v2})=\{\texttt{A}\}$,
$\lhm(\tt S3_{\tt v3})=\{\texttt{B}\}$ and
$\lhm(\tt S4_{\tt v4})=\{\texttt{B,D}\}$. 
Note that in Case (II), 
\texttt{o$^u_{\tt i}$} will not flow to
\texttt{a} and \texttt{b} due to the cast operations.

As \texttt{java.lang.Object} contains no fields, all field
accesses to \texttt{o$^u_{\tt i}$} will only be made
on its lazily created objects. Therefore, if the same
concrete object represented by
\texttt{o$^u_{\tt i}$} flows to both 
$S_{x_1}$ and $S_{x_2}$, then $ \lhm(S_{x_1}) \cap
\lhm(S_{x_2})\neq \emptyset$.
This implies that $x_1$ and $x_2$ will point to a common
object lazily created. For example, in 
Figure~\ref{fig:lhm}, \texttt{v3} and \texttt{v4} points
to $o_{\tt i}^{\tt B}$
since $\lhm(S3_{\tt v3}) \cap
\lhm(S4_{\tt v4})=\{{\tt o}^{\tt B}_{\tt i}\}$. 
As a result, the alias relation
between $x_1.f$ and $x_2.f$ is correctly
maintained, where $f$ is a field of \texttt{o$^u_{\tt i}$}.

\begin{figure}[h]
\centering
\includegraphics[width=1\textwidth]{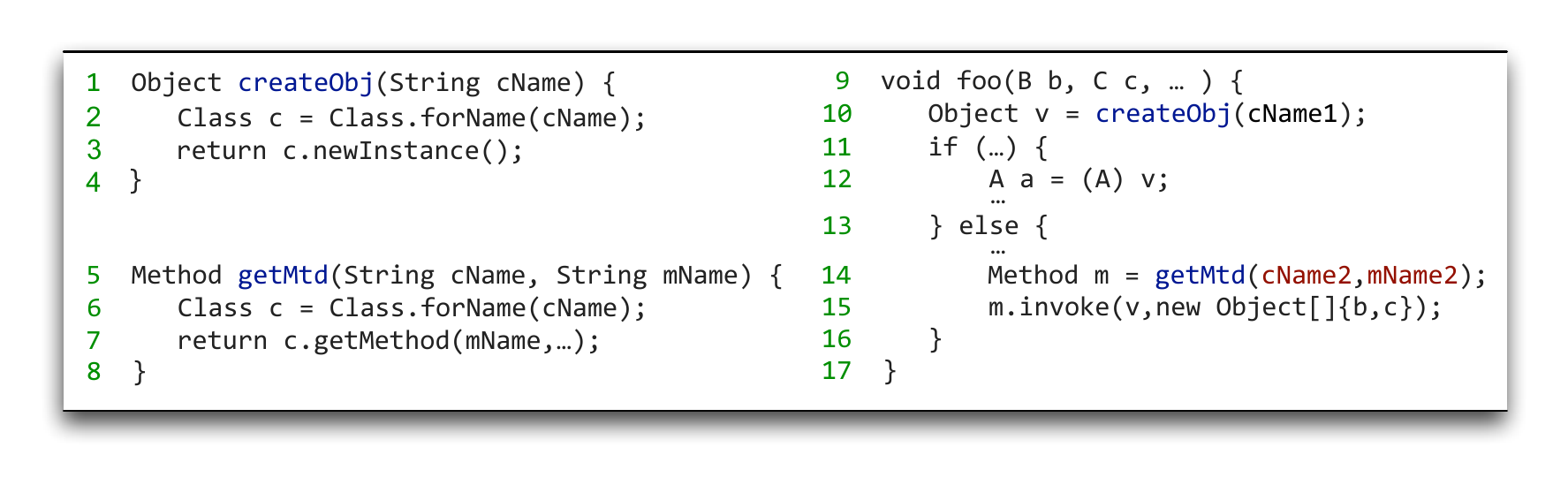}
\vspace{-4ex}
\caption{An example for illustrating LHM in \solar.
	\label{fig:mot}}
\end{figure}

\begin{example}
\label{example:mot}
In Figure~\ref{fig:mot},
\solar will model the
\texttt{newInstance()} call in line 3 lazily (as
\texttt{cName1} is statically unknown) by returning an
object $o_3^u$ of an unknown type $u$. Note that $o_3^u$
flows into two kinds of usage points: 
the cast operation in line 12 and the \texttt{invoke()}
call in line 15. In the former case, \solar will infer
$u$ to be \texttt{A} and its subtypes in line 12. In the latter
case, \solar will infer $u$ based on the information 
available in line 15 by distinguishing three cases. (1) If
\texttt{cName2} is known, then \solar deduces $u$ 
from the known class in \texttt{cName2}. (2) If
\texttt{cName2} is unknown but \texttt{mName2} is known,
then \solar deduces
$u$ from the known method name in
\texttt{mName2} and the second argument 
\texttt{new Object[] \{b,c\}} of the \texttt{invoke()}
call site. 
(3) If both \texttt{cName2}
and \texttt{mName2} are unknown (given that the 
types of $o_3^u$ are already unknown), 
then \solar will flag the \texttt{invoke()}
call in line 15 as being unsoundly resolved, 
detected automatically by verifying one of the soundness criteria, i.e., Condition (\ref{eq:sound-inv}) in Section~\ref{sec:cond}. 
\end{example}

\paragraph{Discussion} 
%There are three cases in Fig.~\ref{fig:lhm}.
Under Assumption~\ref{ass:reach}, we need only to handle
the three cases in Figure~\ref{fig:lhm}
in order to establish whether a \texttt{newInstance()}
call has been modeled soundly or not. 
The rare exception (which breaks Assumption~\ref{ass:reach}) is that \texttt{o$^u_{\tt i}$} 
is created but never used later (where no hints are available).
To achieve soundness in this rare case, the corresponding
constructor (of the dynamic type of \texttt{o$^u_{\tt i}$}) must be annotated to be analyzed statically unless
ignoring it will not affect the points-to information to be
obtained. Again, as validated in Section~\ref{sec:eval-ass}, Assumption~\ref{ass:reach} is found 
to be very practical.

\subsection{Unsound Call Identification}
\label{sec:unsound}

Intuitively, we mark a side-effect reflective
call as being
unsoundly resolved when \solar has exhausted all
its inference strategies to resolve it, but to no
avail.
In addition to Case (3) in Example~\ref{example:mot},
let us consider another case in
Figure~\ref{fig:lhm}, except that
\texttt{c2} and \texttt{mName} are assumed to be
unknown. Then \texttt{m1} at \texttt{s4}:
\texttt{m1.invoke(v4, args)} will be unknown. 
\solar will mark it as unsoundly resolved, since
just
leveraging \texttt{args} alone to infer its target
methods may cause \solar to be too imprecise to scale
(Section~\ref{sec:collect}).
%in Figure~\ref{fig:lhm} to illustrate how \solar determines whether a \texttt{newInstance()} call is resolved unsoundly. 

%\begin{example}
%\label{example:unsound}
%\end{example}

The formal soundness criteria that are 
used to identify unsoundly resolved reflective calls 
are defined and illustrated in Section~\ref{sec:cond}.

\subsection{Guided Lightweight Annotation}
\label{sec:anno}

As shown in Example~\ref{example:anno}, \solar can
guide users 
to the program points where hints for annotations are potentially available for unsoundly or imprecisely resolved reflective calls.
As these ``problematic'' call sites are the places in a program where side-effect methods are invoked, we can hardly extract the information there to know the  names of the reflective targets, as they are specified at the corresponding entry and member-introspecting call sites (also called the annotation sites), which 
may not appear in the same method or class (as the 
``problematic call sites'').
Thus, \solar is designed to automatically track the flows of metaobjects from the identified ``problematic'' call sites in a demand-driven way to locate all the related annotation sites.

In \solar, we propose to add annotations for unsoundly
resolved side-effect call sites, which are often
identified accurately. As a result,
the number of required annotations (for achieving
soundness) would be significantly less than that
required in~\cite{Livshits05}, which simply asks for annotations when the string argument of a reflective call is statically unknown. 
This is further validated in Section~\ref{sec:rq1}.

\section{Formalism}
\label{sec:form}

We formalize \solar, as illustrated in Figure~\ref{fig:solar},
for \RefJava, which is Java 
restricted to a core subset of its reflection API. 
\solar is flow-insensitive but context-sensitive. 
However, our formalization is context-insensitive for simplicity.
We first define \RefJava (Section~\ref{sec:lang}), give a 
road map for the formalism (Section~\ref{map}) and present 
some notations used (Section~\ref{def}). We then introduce
a set of 
rules for formulating collective 
inference and lazy heap modeling in \solar's inference engine (Section~\ref{form:infer}). Based on these rules,
we formulate a set of soundness criteria 
(Section~\ref{condition}) that enables reasoning about the
soundness of \solar (Section~\ref{properties}).
Finally, we 
describe how to instantiate \probe from \solar (Section~\ref{sec:probe}), and handle static
class members (Section~\ref{other}). 

\subsection{The \RefJava Language}
\label{sec:lang}
\label{sec:refjava}

\refjava consists of all Java programs (under
Assumptions~\ref{ass:world} -- \ref{ass:reach}) except that
the Java reflection API is restricted to the seven core reflection methods: 
one entry method \texttt{Class.forName()}, two member-introspecting methods \texttt{getMethod()} and \texttt{getField()}, and four side-effect methods for reflective object creation \texttt{newInstance()}, reflective method invocation \texttt{invoke()}, reflective fields access \texttt{get()} and modification \texttt{set()}.

Our formalism is designed to allow its straightforward
generalization to the entire Java reflection API.
As is standard, a Java
program is represented only by five kinds of statements in
the SSA form, as shown in Figure~\ref{fig:pta}.
%: x = $new$ \textsf{t}() (new), x = y (copy), 
%x = y.f (load),
%x.f = y (store), and 
%x = y.$m$(arg$_1$, .., arg$_n$) (call) (not including reflective call).
For simplicity, we assume that all the members (fields or
methods) of a class accessed reflectively are its
instance members, i.e.,
$\texttt{o} \neq \texttt{null}$
in \texttt{get(o)}, \texttt{set(o,a)} and 
\texttt{invoke(o,\ldots)} in Figure~\ref{fig:collect}. We
will formalize how to handle static members in
Section~\ref{other}.

\subsection{Road Map}
\label{map}

\begin{figure}[hbtp]
\centering
\includegraphics[width=0.7\textwidth]{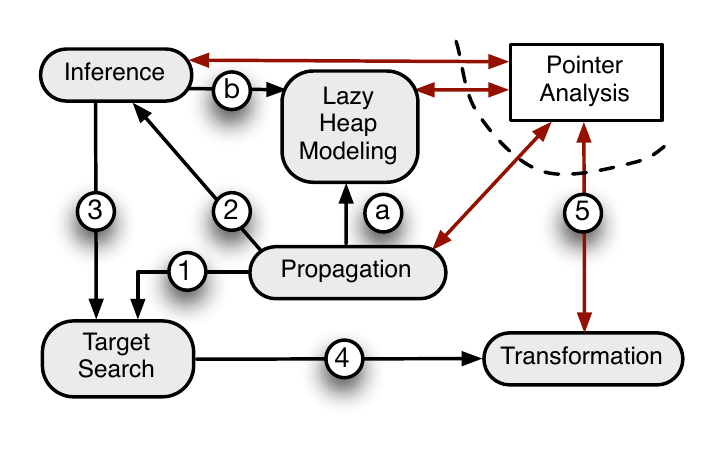}
\caption{\solar's inference engine: five components and
their inter-component dependences (depicted by black arrows). The dependences between \solar and pointer analysis are depicted in red arrows.
\label{fig:form}}
\end{figure}

As depicted in Figure~\ref{fig:form}, \solar's inference engine, 
which consists of five components,
works together with a pointer analysis. The arrow
\textcolor{BrickRed}{$\longleftrightarrow$} between a
component and the pointer analysis means that each
is both a producer and consumer of the other.

Let us take an example to see how this road map works.
Consider the side-effect call
\texttt{t = \texttt{f1}.get(o)} in Figure~\ref{fig:collect}. 
If \texttt{cName} and \texttt{fName} are 
string constants,
\emph{Propagation} will create a \texttt{Field} object
(pointed to by \texttt{f1}) carrying its known class 
and field information and pass it to 
\emph{Target Search} (\circled{1}). 
If \texttt{cName} or \texttt{fName} is not a 
constant, a \texttt{Field} object  marked as such
is created and passed to \emph{Inference} 
(\circled{2}), which will 
infer the missing information and pass
a freshly generated \texttt{Field} object enriched
with the missing information
to \emph{Target Search} 
(\circled{3}). 
Then \emph{Target Search} maps a \texttt{Field} object to
its reflective target $f$ in its declaring class
(\circled{4}).
Finally, \emph{Transformation} turns the reflective call 
\texttt{t = \texttt{f1}.get(o)} into a regular statement 
\texttt{t = o.$f$} and pass it to
the pointer analysis (\circled{5}). 
Note that \emph{Lazy Heap Modeling} handles \texttt{newInstance()} 
based on the information discovered
by \emph{Propagation} (\circled{a})
or  \emph{Inference} (\circled{b}).

\subsection{Notations}
\label{def}

In this paper, a field signature consists of the field
name and descriptor (i.e., field type), and a field is 
specified by its field signature and the class where it
is defined (declared or inherited).
Similarly, a method signature consists of the method name
and descriptor (i.e., return type and parameter types) and, 
a method is specified by its method signature and the class 
where it is defined.

\begin{figure}[th]
\centering
{%\small 
\begin{tabular}{p{5cm}p{5cm}}
%local variables & \texttt{c}, \texttt{f}, \texttt{m}, \texttt{v}, \ldots\\
class type & $t\in \mathbb{T}$ \\%line 1
\texttt{Field} object* & $\texttt{f}^t_s$, $\texttt{f}^t_u$,
$\texttt{f}^u_s$, $\texttt{f}^u_u$ $\in$
$\mathbb{FO}$ = $\mathbb{\widehat{T}}$ $\times$ $\mathbb{S}_f$\\%line 2
field/method name & $n_f, n_m\in \mathbb{N}$ \\%line 3
field signature* & $s\in \mathbb{S}_f$ = $\mathbb{\widehat{T}}$ $\times$ $\mathbb{\widehat{N}}$\\%line 4
field & $ f \in \mathbb{F} = \mathbb{T} \times \mathbb{T} \times \mathbb{N}$\\%line 5
field type* & $s.t_f$ $\in$ $\mathbb{\widehat{T}}$\\%line 6
parameter (types) & $p \in \mathbb{P} =
\mathbb{T}^0 \cup \mathbb{T}^1 \cup \mathbb{T}^2 \dots $\\%line 7
field name*  & $s.n_f$ $\in$ $\mathbb{\widehat{N}}$\\%line 8
method & $ m \in \mathbb{M} = \mathbb{T} \times \mathbb{T} \times \mathbb{N} \times \mathbb{P}$\\%line 9
\texttt{Method} object* & $\texttt{m}^t_s$,
$\texttt{m}^t_u$, 
$\texttt{m}^u_s$,
$\texttt{m}^u_u$ $\in$
$\mathbb{MO}$ = $\mathbb{\widehat{T}}$ $\times$ $\mathbb{S}_m$\\%line 10
local variable & $ {\rm c}, {\rm f },{\rm m} \in \mathbb{V}$\\%line 11
method signature* & $s\in \mathbb{S}_m$ = $\mathbb{\widehat{T}}$ $\times$ $\mathbb{\widehat{N}}$ $\times$ $\mathbb{\widehat{P}}$\\%line 12
Abstract heap object & $o_1^t, o_2^t,\dots, o_1^u, o_2^u,\dots \in \mathbb{H}$\\%line 13
return type* & $s.t_r$ $\in$ $\mathbb{\widehat{T}}$\\%line 14
unknown  & $u$\\%line 15
method name*  & $s.n_m$ $\in$ $\mathbb{\widehat{N}}$\\%line 16
\texttt{Class} object & $\texttt{c}^t$, $\texttt{c}^u$ $\in$ $\mathbb{CO}$\\%line 17
parameter* & $s.p$ $\in$ $\mathbb{\widehat{P}}$\\%line 18
%string constants & $\mathbb{SC}$
\end{tabular}
}
\caption[Notations.]{Notations. Here 
$\widehat{X} = X \cup \{u\}$,
where $u$ is an unknown class type or an unknown
field/method signature. A superscript `*' marks a 
domain that contains $u$.
\label{fig:def}
}
\end{figure}

We will use the notations given in Figure~\ref{fig:def}.
$\mathbb{CO}$, $\mathbb{FO}$ and $\mathbb{MO}$ represent the 
set of \texttt{Class}, \texttt{Field} and \texttt{Method} 
objects, respectively. 
In particular, $\texttt{c}^t$ denotes a
\texttt{Class} object of a known class $t$ 
and $\texttt{c}^u$ denotes a \texttt{Class} object of an unknown class $u$. 
As illustrated in Figure~\ref{fig:lhm}, we write
$o_i^t$ to represent an abstract object created at
an allocation site $i$ if it is an instance of a known class
$t$ and $o_i^u$ of (an unknown class type) otherwise.
For a \texttt{Field} object, we write
$\texttt{f}^t_s$ if it is a field defined in
a known class $t$
and $\texttt{f}^u_s$ otherwise, 
with its signature being $s$. 
In particular, we write $\texttt{f}^\phd_u$ for
$\texttt{f}^\phd_s$ in the special case
when $s$ is unknown, i.e.,
$s.t_f=s.n_f=u$.
Similarly, 
$\texttt{m}^t_s$,
$\texttt{m}^t_u$,
$\texttt{m}^u_s$ and
$\texttt{m}^u_u$ are used to represent
\texttt{Method} objects. We write
$\texttt{m}^\phd_u$ for $\texttt{m}^\phd_s$ 
when $s$ is 
unknown (with the return type
$s.t_r$ being irrelevant, i.e., either known or
unknown), i.e., $s.n_m= s.p=u$.

\vspace{-1ex}
\subsection{The Inference Engine of \solar}
\label{form:infer}

We present the inference rules used by all the components
in Figure~\ref{fig:form}, starting with the pointer
analysis and moving to the five components of \solar.
Due to their cyclic dependencies, the reader is invited to
read ahead sometimes, particularly to
Section~\ref{sec:lhm-rules} on LHM, before
returning back to the current topic.

\subsubsection{Pointer Analysis}
\label{pta}

Figure~\ref{fig:pta} gives a standard formulation of 
a flow-insensitive Andersen's pointer 
analysis for \RefJava.
\emph{pt(x)} represents the \emph{points-to set} of 
a pointer \emph{x}. An array object is analyzed 
with its elements collapsed to a single field, denoted $arr$.
For example, \texttt{x[i] = y} can be seen as 
\texttt{x.$arr$ = y}.
In \rulenameT{A-New}, $o_i^t$ uniquely identifies the abstract
object created as an instance of $t$ at this allocation site, 
labeled by \emph{i}. In \rulenameT{A-Ld} and \rulenameT{A-St}, only the
fields of an abstract object $o_i^t$ of a known type $t$ 
can be accessed. In Java, as explained in
Section~\ref{sec:lhm}, the field accesses to $o_i^u$ 
(of an unknown type) can only
be made to the abstract objects of known types
created lazily
from $o_i^u$ at \dam points.

% points-to analysis
\begin{figure}[htbp]
%\centering
\small
\setlength{\tabcolsep}{3ex}
{
\begin{tabular}{ll}
% i: x = new T()
\ruledef{\color{blue}{i:\; \text{x} = new \; t()}}{\{o^t_i\} \in pt($x$)}
\rulename{A-New}
&
% x = y
\ruledef{\color{blue}{\text{x} = \text{y}}}{pt($y$) \subseteq pt($x$)}
\rulename{A-Cpy}\\[6ex]
% x = y.f
\ruledef{\color{blue}{\text{x} = \text{y.f}} \color{black}\quad\hspace*{-1.2ex} o^t_i \in pt\n{(y)}}{pt(o^t_i.\n{f}) \subseteq pt\n{(x)}}
\rulename{A-Ld}&
% x.f = y
\ruledef{\color{blue}{\text{x.f} = \text{y}} \color{black}\hspace*{-1.2ex}\quad o^t_i \in pt\n{(x)}}{ pt\n{(y)} \subseteq pt(o^t_i.\n{f})}
\rulename{A-St}
\\[6ex]
\end{tabular}
\begin{tabular}{l}
% x = y.m(arg...)
\ruledef{\color{blue}{\text{x} = \text{y}.m(\text{arg}_1, ..., \;\text{arg}_n)}\color{black} \quad
o^\phd_i \in pt\n{(y)} \quad m' = dispatch(o^\phd_i, m)}
{
\{o^\phd_i\} \subseteq pt(m'_{this}\n{)}\quad
pt(m'_{ret}\n{)} \subseteq pt\n{(x)}\quad
\forall\ 1 \le k \le n : 
pt(\n{arg}_k) \subseteq pt(m'_{pk}\n{)}
}\rulename{A-Call}\\[4ex]
\end{tabular}
}
\caption{Rules for \emph{Pointer Analysis.}
\label{fig:pta}
}
\end{figure}

In \rulenameT{A-Call} (for non-reflective calls), like the one presented 
in~\cite{Manu13}, the function $dispatch(o_i^\phd,m)$ 
is used to resolve the virtual dispatch of method $m$ on
the receiver object $o_i^\phd$ to be $m'$.
There are two cases. 
If $o_i^t\in \ptr(y)$, we proceed normally
as before. 
For $o_i^u\in \ptr(y)$, it suffices to restrict $m$ to
$\{ \texttt{toString()},\texttt{getClass()}\}$, 
as illustrated in Figure~\ref{fig:lhm} and
explained in Section~\ref{sec:lhm}.
We assume that $m'$ has a 
formal parameter $m_{this}'$ for the receiver object and
$m_{p1}',\dots,m_{pn}'$ for the remaining parameters, 
and a pseudo-variable $m_{ret}'$ is used to hold the return 
value of $m'$. 

\subsubsection{Propagation}
\label{prop}

Figure~\ref{fig:prop} gives the rules for handling
\texttt{forName()},
\texttt{getMethod()} and
\texttt{getField()} calls. Different kinds of 
\texttt{Class}, \texttt{Method} and \texttt{Field} objects
are created depending on 
whether their string arguments are string constants or not. 
For these rules, $\mathbb{SC}$ denotes
a set of string constants and the
function $toClass$ creates a \texttt{Class} object 
$c^t$, where $t$ is the class specified by the string value 
returned by $val$($o_i$) (with $val : 
\mathbb{H} \rightarrow \texttt{java.lang.String}$).

\begin{figure}[h]
\small
{
%\begin{tabular}{cl}
% c = Class.forName(cName)
%\framebox[1.05\width]{c = Class.$forName$(cName)}
%&
%\end{tabular}
\ruledef{\color{blue}\hspace{-4ex}{Class \;\; \text{c} = Class.forName(\text{cName})}\quad \color{black}
o^{\tt String}_i \in pt\n{(cName)}}{
pt(\textnormal{c}) \supseteq \left\{
\begin{array}{@{\hspace{1ex}}l@{\hspace{2ex}}l}
\{\texttt{c}^t\} &\text{if} \; o^{\tt String}_i \in\ \mathbb{SC} \\
\{\texttt{c}^u\} & \text{otherwise}
\end{array}
\right.
\begin{tabular}{c}
\hspace{2ex}
$\texttt{c}^t = toClass(val(o^{\tt String}_i))$
\end{tabular}
}%\hspace{-1ex}
\rulename{P-ForName}
\\[3ex]

\ruledef{\color{blue}{Method \;\; \text{m} = \text{c}'.getMethod(\text{mName}, ...)}\quad \color{black}\hspace{-1ex}
o^{\tt String}_i \in pt\n{(mName)} \;\texttt{c}^{-} \in pt\n{(c$'$)}}{
pt(\textnormal{m}) \supseteq \left\{
\def\arraystretch{1.3}
\begin{array}{@{\hspace{.3ex}}l@{\hspace{1ex}}l}
	\{\texttt{m}^t_s\} &\text{if} \, \texttt{c}^{-} = \texttt{c}^t \, \wedge o^{\tt String}_i \in \mathbb{SC} \\
	\{\texttt{m}^t_u\} &\text{if} \, \texttt{c}^{-} = \texttt{c}^t \, \wedge o^{\tt String}_i \notin \mathbb{SC} \\
	\{\texttt{m}^u_s\} &\text{if} \, \texttt{c}^{-} = \texttt{c}^u \wedge o^{\tt String}_i \in \mathbb{SC} \\
	\{\texttt{m}^u_u\} &\text{if} \, \texttt{c}^{-} = \texttt{c}^u \wedge o^{\tt String}_i \notin \mathbb{SC} 
\end{array}
\right.
\hspace{3ex}
\begin{tabular}{c}
$s.t_r = u$\\
$s.n_m = val(o^{\tt String}_i)$\\
$s.p = u$
\end{tabular}
}%\hspace{-2ex}
\rulename{P-GetMtd}
\\[3ex]

\ruledef{\color{blue}{Field \;\; \text{f} = \text{c}'.getField(\text{fName})}\quad \color{black}
o^{\tt String}_i \in pt\n{(fName)} \quad \texttt{c}^{-} \in pt\n{(c$'$)}}{
pt(\textnormal{f}) \supseteq \left\{
\def\arraystretch{1.3}
\begin{array}{@{\hspace{.3ex}}l@{\hspace{1ex}}l}
	\{\texttt{f}^t_s\} &\text{if} \, \texttt{c}^{-} = \texttt{c}^t \, \wedge o^{\tt String}_i \in \mathbb{SC} \\
	\{\texttt{f}^t_u\} &\text{if} \, \texttt{c}^{-} = \texttt{c}^t \, \wedge o^{\tt String}_i \notin \mathbb{SC} \\
	\{\texttt{f}^u_s\} &\text{if} \, \texttt{c}^{-} = \texttt{c}^u \wedge o^{\tt String}_i \in \mathbb{SC} \\
	\{\texttt{f}^u_u\} &\text{if} \, \texttt{c}^{-} = \texttt{c}^u \wedge o^{\tt String}_i \notin \mathbb{SC} \\
\end{array}
\right.
\hspace{2ex}
\begin{tabular}{c}
$s.t_f = u$\\
$s.n_f = val(o^{\tt String}_i)$
\end{tabular}
}%\hspace{-2ex}
\rulename{P-GetFld}
}\\[1ex]
\captionof{figure}{Rules for \emph{Propagation}.}
\label{fig:prop}
\end{figure}

By design, $\texttt{c}^t$, $\texttt{f}^t_s$ and
$\texttt{m}^t_s$ will flow to \emph{Target Search} 
but all the others, 
i.e., $\texttt{c}^u$, $\texttt{f}^u_\phd$, $\texttt{f}^\phd_u$,
$\texttt{m}^u_\phd$ and $\texttt{m}^\phd_u$
will flow to \emph{Inference}, where the missing information
is inferred. During \emph{Propagation}, 
only the name of a method/field
signature $s$ ($s.n_m$ or $s.n_f$) can be discovered but 
its other parts are unknown: $s.t_r= s.p=s.t_f= u$. 

%Note that the second argument of \texttt{getMethod()} is
%an array of type \texttt{Class}. According to our experience,
%it is not beneficial to analyze it to refine $s.p$, because (1)
%its size may be unknown, (2) its
%elements are collapsed by the pointer analysis,
%and (3) its elements may be \texttt{Class} objects of
%unknown types. Instead, \solar will  perform
%an intraprocedural analysis on the second argument of
%an \texttt{invoke()} call to infer its targets
%as described below.

%%%%%%%%%%%%%%%%%%%%%%%%%%%%%%%%%%%%%%%%%%%%%%%
%\clearpage
\begin{figure}[htbp!]
\small
%\centering
%\setlength{\tabcolsep}{0.1pt}
{
\begin{tabular}{l}
%invoke
\color{blue}{x = (\textsf{A}) m.$invoke$(y, args)}\\[2ex]

\ruledef{\texttt{m}^u_{-} \in pt\n{(m)}}{pt\n{(m)}
 \supseteq
\{ \; \texttt{m}^t_{-} \; | \;  o^t_i \in pt\n{(y)}
\}}\rulename{I-InvTp}\\[5ex]

\ruledef{\texttt{m}^{-}_u \in pt\n{(m)}}{pt\n{(m)}
 \supseteq
\{ \; \texttt{m}^{-}_s \; | \; s.p\in Ptp(\n{args}), \; \ifamily{s.t_r}{\textsf{A}}, \; s.n_m = u\}}
\rulename{I-InvSig}\\[5ex]

\ruledef{\texttt{m}^u_s \in pt\n{(m)}\quad o^u_i \in pt \n{(y)}\quad \ifamily{s.t_r}{\textsf{A}}\quad
s.n_m \neq u\quad s.p\in Ptp(\n{args})
}{pt\n{(m)}
 \supseteq
\{  \,\texttt{m}^t_s | \,t \in \calM(s.t_r, s.n_m,s.p)
\}}\rulename{I-InvS2T}\\[5ex]

%get
\color{blue}{x = (\textsf{A}) f.$get$(y)}\\[2ex]

\ruledef{\texttt{f}^u_{-} \in pt\n{(f)}}{pt\n{(f)}
 \supseteq
\{ \; \texttt{f}^t_{-} \; | \;  o^t_i \in pt\n{(y)}
\}}\rulename{I-GetTp}\\[5ex]

\ruledef{\texttt{f}^{-}_u \in pt\n{(f)}}{pt\n{(f)}
 \supseteq
\{ \; \texttt{f}^{-}_s \; | \; \ifamily{s.t_f}{\textsf{A}},\;  s.n_f = {u}\}}
\rulename{I-GetSig}\\[5ex]

\ruledef{\texttt{f}^u_s \in pt\n{(f)} \quad o^u_i \in pt \n{(y)}\quad   s.n_f \neq u \quad \ifamily{s.t_f}{\textsf{A}}}{pt\n{(f)}
 \supseteq
 \{  \texttt{f}^t_s  |  \, t \in \calF(s.n_f,s.t_f)
\}}\rulename{I-GetS2T}\\[5ex]

%set
\color{blue}{f.$set$(y, x)}\\[2ex]

\ruledef{\texttt{f}^u_{-} \in pt\n{(f)}}{pt\n{(f)}
 \supseteq
\{ \; \texttt{f}^t_{-} \; | \;  o^t_i \in pt\n{(y)}
\}}\rulename{I-SetTp}\\[5ex]

\ruledef{\texttt{f}^{-}_u \in pt\n{(f)}}{pt\n{(f)}
 \supseteq
\{ \; \texttt{f}^{-}_s \; | \; o^t_j \in pt\n{(x)}, \; \ibinding{\textsf{t}}{s.t_f}, s.n_f = {u}\}}
\rulename{I-SetSig}\\[5ex]

\ruledef{\texttt{f}^u_s \in pt\n{(f)} \quad o^u_i \in pt \n{(y)}\quad 
s.n_f\neq u\;\; o^{t'}_j \in pt\n{(x)} \;\; \ibinding{t'\hspace{-1pt}}{s.t_f} }{pt\n{(f)}
 \supseteq
\{  \texttt{f}^t_s | \,t \in \calF(s.n_f,s.t_f)
\}}\rulename{I-SetS2T}\\[5ex]

\end{tabular}
}
\caption{Rules for \emph{Collective Inference}.}
\label{fig:infer}
\end{figure}

%%%%%%%%%%%%%%%%%%%%%%%%%%%%%%%%%%%

\subsubsection{Collective Inference}
\label{infer}

Figure~\ref{fig:infer} gives nine rules 
to infer reflective targets at 
\texttt{x = (A) m.invoke(y,args)},
\texttt{x = (A) f.get(y)},
\texttt{f.set(y,x)}, where \texttt{A} indicates a 
post-dominating cast on their results. If 
\texttt{A = Object}, 
then no such cast exists. 
These rules fall into three categories.
In \rulenameT{I-InvTp}, \rulenameT{I-GetTp} and \rulenameT{I-SetTp}, we use the types of the objects pointed to
by \texttt{y} to infer the class type of a method/field.
In \rulenameT{I-InvSig}, \rulenameT{I-GetSig} and
\rulenameT{I-SetSig}, we use the information available
at a call site (excluding \texttt{y}) to infer the descriptor 
of  a method/field
signature. In \rulenameT{I-InvS2T}, \rulenameT{I-GetS2T} and \rulenameT{I-SetS2T}, we use a method/field signature 
to infer the class type of a method/field.

Some notations used are in order.
As is standard, $\ibinding{t}{t'}$ holds when $t$ is $t'$ or
a subtype of $t'$. 
In \rulenameT{I-InvSig}, \rulenameT{I-GetSig}, \rulenameT{I-InvS2T} and \rulenameT{I-GetS2T},
$\ll:$ is used to take advantage
of the post-dominating cast \texttt{(A)} during inference
when \texttt{A} is not \texttt{Object}. By definition,
$\ifamily{u}{\tt Object}$ holds. 
If $t'$ is not \texttt{Object}, then
$\ifamily{t}{t'}$ holds if and only if 
$\ibinding{t}{t'}$ or $\ibinding{t'}{t}$ holds.
In \rulenameT{I-InvSig} and \rulenameT{I-InvS2T}, the information
on \texttt{args} is also exploited, where \texttt{args}
is an array of type \texttt{Object[]}, only when
it can be analyzed exactly element-wise 
by an intra-procedural analysis. In this case, suppose that
\texttt{args} is an array of $n$ elements. Let $A_i$ be the
set of types of the objects pointed to by its $i$-th element,
\texttt{args[$i$]}. Let
$P_i = \{ t'  \mid t \in A_i, 
\ibinding{t}{t'}\}$. Then
$Ptp(\texttt{args})=P_0\times \cdots \times P_{n-1}$.
Otherwise, $Ptp(\texttt{args})=\varnothing$, implying that 
\texttt{args} is ignored as it cannot be exploited
effectively during inference.

%given a $\texttt{f}^\phd_s$, 
%we use the notational shorthand $\texttt{f}^\phd_\kws$ if its
%field name $s.n_f$ is \textbf{k}nown.
%Similarly, 
%the
%shorthand for $\texttt{m}^\phd_s$ is
%$\texttt{m}^\phd_k$ if its method name $s.n_m$ is known.

To maintain precision
in \rulenameT{I-InvS2T}, \rulenameT{I-GetS2T} and \rulenameT{I-SetS2T},
we use a method (field) signature to infer its 
classes when
both its name and descriptor are known. 
%(with the return type ignored in the case of a method signature).
%\begin{comment}
%Thus, we define:
%\begin{eqnarray}
%\label{eq:m-infer}
%\msi(s) &=& s.n_m \neq u \wedge s.p \neq u \\
%\fsi(s) &=& s.n_f \neq u \wedge s.t_f \neq u
%\end{eqnarray}
%\end{comment}
In \rulenameT{I-InvS2T}, the
function $\calM(s_{t_r}, s.{n_m}, s.p)$
%, where $M : \mathbb{\widehat{T}}  \times \mathbb{\widehat{N}} \times \mathbb{\widehat{P}} \rightarrow 
%\widehat{\mathcal P (\mathbb{T})}$,  
returns the set of classes where the method with the
specified signature $s$ is defined if
$s.n_m \neq u$ and $s.p \neq u$, and $\varnothing$
otherwise. The return type of the matching method is ignored if $s.{t_r}=u$. 
In \rulenameT{I-GetS2T} and \rulenameT{I-SetS2T}, 
$\calF(s.n_f,s.t_f)$ 
returns the set of classes where 
the field with the given signature $s$ 
is defined if
$s.n_f\neq u$ and $s.t_f\neq u$,
and $\varnothing$ otherwise.

Let us illustrate our rules by considering
two examples in Figures~\ref{fig:mot} and \ref{fig:handle}.

\begin{example}%[Fig.~\ref{fig:mot}]\rm
\label{ex:infer}
Let us modify the 
reflective allocation site in line 3 (Figure~\ref{fig:mot}) to
\texttt{c1.newInstance()}, where \texttt{c1} represents
a known class, named \texttt{A}, so that
$\texttt{c1}^\texttt{A}\in \ptr(\texttt{c1})$.
By applying \rulenameT{L-KwTp} (introduced later in 
Figure~\ref{fig:dam}) to the modified allocation 
site, \solar will create a new object
$o_3^{\tt A}$, which will flow to line 10, so
that $o_3^{\tt A}\in\ptr(\texttt{v})$. 
Suppose both \texttt{cName2} and 
\texttt{mName2} point to some unknown strings. When
\rulenameT{P-GetMtd} is applied 
to \texttt{c.getMethod(mName,\ldots)} in line 7,
a \texttt{Method} object, say,
$\texttt{m}_u^u$ is created and eventually assigned
to \texttt{m} in line 14. 
By applying 
\rulenameT{I-InvTp}
to \texttt{m.invoke(v, args)} in line 
15, where 
$o_3^{\tt A}\in \ptr(v)$,  \solar deduces that the target
method is a member of class \texttt{A}.
%Note that the function $mtdLookUp$ used in 
%Equation~(\ref{formu:mtd}) by  \emph{Target Search} 
%will cover the case 
%when the target method is a inherited one 
%in \texttt{A}. 
Thus, a
new object $\texttt{m}^{\tt A}_u$ 
is created and assigned to $\ptr(\texttt{m})$. 
Given \texttt{args = new Object[] \{b, c\}},
$Ptp(args)$ is constructed as described earlier.
By applying \rulenameT{I-InvSig} to this 
\texttt{invoke()} call,
\solar will add all new \texttt{Method} objects 
$\texttt{m}^{\tt A}_s$ to 
$\ptr(\texttt{m})$ such that
$s.p\in Ptp(args)$, which represent
the potential target methods called reflectively at
this site.
\qed
\end{example}

\begin{example}%[Fig.~\ref{fig:extend}] \rm
	\label{ex:formal-handle}
In Figure~\ref{fig:handle}, \texttt{hd} is statically
unknown but the string argument of \texttt{getMethod()}
is \texttt{"handle"}, a string constant. By applying \rulenameT{P-ForName}, \rulenameT{P-GetMtd} and \rulenameT{L-UkwTp}
(Figure~\ref{fig:dam}) to the
\texttt{forName()},
\texttt{getMethod()} and
\texttt{newInstance()} calls, respectively, we obtain
$\texttt{c}^u\in \ptr(\texttt{c})$,
$\texttt{m}^u_s\in \ptr(\texttt{m})$ and
$o_i^u \in \ptr(\texttt{h})$, where $s$ 
indicates a signature with a 
known method name (i.e.,
``\texttt{handle}''). Since
the second argument of the \texttt{invoke()} call
can also be exactly analyzed, \solar will be able
to infer the classes $t$ where method \texttt{"handle"} is defined 
by applying \rulenameT{I-InvS2T}.
Finally, \solar will add all inferred
\texttt{Method} objects \texttt{m}$^t_s$ to $pt(\texttt{m})$ at the \texttt{invoke()} call site. 
%\begin{comment}
Since neither the superscript nor the subscript of \texttt{m}$^t_{s}$ is \emph{u}, the inference is finished and the inferred \texttt{m}$^t_{s}$
will be used to find out the reflective targets (represented by it)
in \emph{Target Search} (Section~\ref{search}).
\qed
\end{example}

\begin{comment}
In Example~\ref{ex:formal-handle}, \solar does not search the targets directly (based on \texttt{m}$^t_s$) and immediately add the target methods in the call graph at this call site. Instead, it divides such operation into three phases: (1) adding \texttt{m}$^t_s$ into $pt(\texttt{m})$, (2) searching the targets (based on the \texttt{m}$^t_s$ in (1)) in \emph{Target Search}, and (3) transforming the reflective call into the corresponding regular call (using the targets found in (2)) in \emph{Transformation}. 
Such design is for easy extension of the rules. Otherwise, for example, when we consider another member-introspecting method \texttt{getDeclaredMethod()}, we have to rewrite the three inference rules (about \texttt{invoke()}) for it. Our current design can avoid such modification by only tuning the \emph{MTD} function (defined in Section~\ref{search}) as described in Section~\ref{sec:impl}, without any changes in the rules.
\end{comment}

\begin{figure}[t]
\centering
\includegraphics[width=0.75\textwidth]{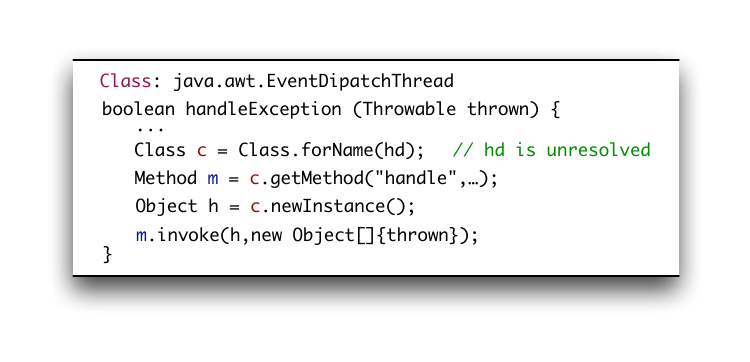}
\vspace{-5ex}
\caption{A simplified real code of Example~\ref{ex:formal-handle} for illustrating inference rule \rulenameT{I-InvS2T}
\label{fig:handle}}
\end{figure}

\subsubsection{Target Search}
\label{search}

For a \texttt{Method} object $\texttt{m}_s^t$ in a known class $t$ (with $s$ being possibly $u$),
we define
$MTD: \mathbb{MO} \rightarrow \mathcal P 
(\mathbb{M})$ to find all the methods matched:
%\vspace*{-1ex}
\begin{eqnarray}
\label{formu:mtd}
MTD(\texttt{m}^t_{s}) &= &
\bigcup \limits_{t <: t'} mtdLookUp(t', s.t_r, s.n_m, s.p) 
\end{eqnarray}

\noindent
where $mtdLookUp $ is the standard lookup
function for finding the methods 
according to a
declaring class $t'$ and 
a signature $s$ except that (1)
the return type $s.t_r$ is also considered in the search 
(for better precision) and (2)
any $u$ that appears in $s$ is treated as a wild card
during the search.

Similarly,
we define
$FLD:\mathbb{FO}\! \rightarrow\! \mathcal P 
(\mathbb{F})$
for a \texttt{Field} object $\texttt{f}_s^t$:
\begin{eqnarray}
\label{form:fld}
FLD(\texttt{f}^t_s) &=&
\bigcup \limits_{t <: t'} fldLookUp(t', s.t_f, s.n_f) 
\end {eqnarray}

\noindent
to find all the fields matched,
where $fldLookUp$ plays a similar role as $mtdLookUp$.
Note that both $MTD(\texttt{m}^t_s)$ and $FLD(\texttt{f}^t_s)$ also need to 
consider the super types of $t$ (i.e., the union of the results for all $t'$ where $t <: t'$, as shown in the functions) to be conservative due to the existence of member inheritance
in Java.
%However, there is one difference since Java allows field hiding.
%For a given $\texttt{f}^t_s$,
%if its class type $t$ and field name $s.n_f$ 
%are discovered as string constants by \emph{Propagation}, 
%the field lookup stops at the first class where the
%field is found (according to the semantics of 
%\texttt{getField()}). %Its super classes are skipped.
%However, if $t$ 
%is (also) discovered by \emph{Inference}, all its superclasses
%must also be searched (conservatively).

\subsubsection{Transformation}
\label{trans}

Figure~\ref{fig:sideeffect} gives the rules used
for transforming a reflective call into a regular statement,
which will be analyzed by the pointer analysis.
%In\rulenameT{T-Get} and\rulenameT{T-Set}, all reflective field accesses at
%texttt{get()} and \texttt{set()} are revealed.

\begin{figure}[htbp]
%\begin{center}
	\small
{
\begin{tabular}{l}
%\rulename{T-Inv}
\ruledef{\color{blue}{\n{x} = \n{m}.invoke\n{(y, args)}}\quad \color{black}
\texttt{m}^t_{-} \in pt\n{(m)} \quad m' \in MTD(\texttt{m}^t_{-}) \quad o^\phd_i \in pt\n{(args)} \\
 o^{t'}_j \in pt(o^\phd_i.arr)\quad t''\ \mbox{is declaring type
 of}\ m_{pk}'\quad \ibinding{t'}{t''} \\
}{
\forall\ 1\leq k\leq n:
	\{o^{t'}_j\} \subseteq pt(\n{arg}_k) \quad
\n{x} = \n{y}.m'(\n{arg}_1,...,\n{arg}_n)}
\rulename{T-Inv}
\\[6ex]

%\rulename{T-Get}
\ruledef{ \color{blue}{\n{x} = \n{f.}get\n{(y)}} \color{black}\quad \texttt{f}^t_{-}\in pt\n{(f)}\quad f \in FLD(\texttt{f}^t_{-})}{\n{x} = \n{y.}f}
%\hspace{-2ex}
\rulename{T-Get}\\[5ex]

%\rulename{T-Set}
\ruledef{ \color{blue}{\n{f.}set\n{(y, x)}}\color{black} \quad \texttt{f}^t_{-}\in pt\n{(f)}\quad f \in FLD(\texttt{f}^t_{-})}{\n{y.}f = \n{x}}
\rulename{T-Set}\\[2ex]
\end{tabular}
}
%\end{center}
\caption{Rules for \emph{Transformation}.
%\caption{Rules for transforming from $[\![$ reflective calls $]\!]$ to $[\![$ regular operations $]\!]$.
\label{fig:sideeffect}
}
\end{figure}

%\texttt{f}$^u_{-}$

Let us examine \rulenameT{T-Inv} in more detail. The second
argument \texttt{args} points to a one-dimensional array of type \texttt{Object[]},
with its elements collapsed to a single field $arr$ during the
pointer analysis, unless \texttt{args} can be analyzed
exactly intra-procedurally in our current implementation. 
Let \texttt{arg$_1$},\ldots, \texttt{arg$_n$} be the $n$
freshly created arguments to be passed to each potential
target method $m'$ found by \emph{Target Search}.
Let $m_{p1}',\dots,m_{pn}'$ be the $n$ parameters
(excluding $this$) of $m'$, such that the declaring type of $m_{pk}'$
is $t''$. We include $o_j^{t'}$ to $\ptr({\tt arg}_k)$ only
when $\ibinding{t'}{t''}$ holds in order to filter out the 
objects that cannot be assigned to $m_{pk}'$. Finally,
the reflective target method found can be analyzed
by \rulenameT{A-Call} in Figure~\ref{fig:pta}.

\subsubsection{Lazy Heap Modeling}
\label{sec:lhm-rules}
In Figure~\ref{fig:dam}, we give the rules for lazily resolving
a \texttt{newInstance()} call, as explained in
Section~\ref{sec:lhm}.  

\begin{figure}[h]
%\hspace{6ex}
\small
{
\begin{tabular}{l}
% i: o = c.newInstance()
\ruledef{\color{blue}{i: \text{o} = \text{c}'.newInstance()} \color{black}\quad
\texttt{c}^t \in pt\n{(c$'$)}}{\{o^t_i\} \subseteq pt\n{(o)}}
\rulename{L-KwTp}\\[5ex]

\ruledef{\color{blue}{i: \text{o} = \text{c}'.newInstance()} \color{black}\quad
\texttt{c}^u \in pt\n{(c$'$)}}{ \{o^u_i\}  \subseteq pt\n{(o)}}
\rulename{L-UkwTp}\\[5ex]

% A a = (A)o
\ruledef{\color{blue}{\textsf{A}\: a = (\textsf{A}) \, \text{x}} \color{black} \quad
o^u_i \in pt\n{(x)} \;\; \ibinding{t}{\textsf{A}}}{\{o^t_i\}\subseteq pt(a)}
\rulename{L-Cast}\\[5ex]

%invoke
\ruledef{\color{blue}{\text{x} = \text{m}.invoke(\text{y}, ...)} \color{black} \;\; 
o^u_i \in pt\n{(y)}\;\; \texttt{m}^t_{-} \in pt\n{(m)} \;\; \ifamily{t'}{t}}{\{o^{t'}_i\} \subseteq pt\n{(y)}}\rulename{L-Inv}\\[5ex]

 %r = f.get(o) set()
\ruledef{\color{blue}{\text{x} = \text{f}.get(\text{y}) \;/ \;\text{f}.set(\text{y, x})} \color{black} \quad 
o^u_i \in pt\n{(y)}\quad \texttt{f}^t_{-} \in pt\n{(f)} \quad \ifamily{t'}{t}}{\{o^{t'}_i\}\subseteq pt\n{(y)}}
\rulename{L-GSet}\\[5ex]
\end{tabular}
}
\caption{Rules for \emph{Lazy Heap Modeling}.
\label{fig:dam}}
\end{figure}

In \rulenameT{L-KwTp}, 
for each \texttt{Class} object
$\texttt{c}^t$ 
pointed to by \texttt{c$'$}, 
an object, $o_i^t$, is created as an instance of
this known type at allocation site~$i$ 
straightaway.
In \rulenameT{L-UkwTp}, as illustrated in
Figure~\ref{fig:lhm}, 
$o_i^u$ is created to enable LHM
if \texttt{c$'$} points to a $\texttt{c}^u$ instead.
Then its lazy object
creation happens at its Case (II)
by applying \rulenameT{L-Cast}
(with $o_i^u$ blocked from flowing from \texttt{x}
to \texttt{a}) and
its Case (III) by applying \rulenameT{L-Inv} and
\rulenameT{L-GSet}.
Note that in \rulenameT{L-Cast}, 
\texttt{A} is assumed not to be \texttt{Object}.

%\section{Soundness Condition}

\subsection{Soundness Criteria}
\label{sec:cond}
\label{condition}

\RefJava consists of four side-effect methods as described in Section~\ref{sec:refjava}.  \solar is sound if their calls are
resolved soundly under Assumptions~\ref{ass:world} --
\ref{ass:reach}. Due to Assumption~\ref{ass:reach}
illustrated in Figure~\ref{fig:lhm}, there is no need to consider
\texttt{newInstance()} since it is soundly resolved if
\texttt{invoke()}, \texttt{get()} and \texttt{set()} are.
For convenience, we define:
\begin{eqnarray}
AllKwn({v}) & = & \nexists\, o^u_i \in pt(v)
\end{eqnarray}
which means that the dynamic type of
every object pointed to by $v$ is known. 

Recall the nine rules given for resolving
\texttt{(A) m.invoke(y, args)}, 
\texttt{(A) f.get(y)} and \texttt{f.set(y, x)}
in Figure~\ref{fig:infer}. 
For the \texttt{Method} (\texttt{Field}) objects
$\texttt{m}_s^t$ ($\texttt{f}_s^t$) with known
classes $t$, these targets can 
be soundly resolved by \textit{Target Search},
except that the signatures $s$ can be further
refined by applying 
\rulenameT{I-InvSig}, \rulenameT{I-GetSig} and
\rulenameT{I-SetSig}.

For the \texttt{Method} (\texttt{Field}) objects
$\texttt{m}_s^u$ ($\texttt{f}_s^u$) with unknown
class types $u$, the targets accessed are inferred
by applying the remaining six rules
in Figure~\ref{fig:infer}.
Let us consider a call to \texttt{(A) m.invoke(y, args)}. 
\solar attempts to infer the missing classes of its
\texttt{Method} objects
in two ways, by applying
\rulenameT{I-InvTp} and \rulenameT{I-InvS2T}. 
Such a call is soundly resolved 
if the following condition holds:
\begin{equation}
\label{eq:sound-inv}
%\begin{split}
SC(\texttt{m.invoke(y,args)}) = 
AllKwn(\texttt{y})
\vee\ \forall\ \texttt{m}_s^u\in  \ptr(\texttt{m}) : s.n_m\neq u
\wedge Ptp(\texttt{args}) \neq \varnothing
%\end{split}
\end{equation}

If the first disjunct holds,
applying \rulenameT{I-InvTp} to 
\texttt{invoke()} can over-approximate its
target methods from
the types of all objects pointed to by \texttt{y}.
Thus, every \texttt{Method} object 
$\texttt{m}_\phd^u\in\ptr(m)$ is refined into
a new one $\texttt{m}_\phd^t$ for every $o_i^t\in\ptr(\texttt{y})$. 

If the second disjunct holds, then
\rulenameT{I-InvS2T} comes into play. Its
targets are over-approximated based 
on the known method names $s.n_m$ and 
the types of the objects pointed to by 
\texttt{args}. Thus, every \texttt{Method} object 
$\texttt{m}_s^u\in\ptr(m)$ is refined into
a new one $\texttt{m}_s^t$, where 
$ \ifamily{s.t_r}{A}$ and $s.p\in Ptp(\texttt{args})
\neq \varnothing$. Note that
$s.t_r$ is leveraged only when it is not $u$.
The post-dominating cast (\texttt{A}) is
considered not to exist if \texttt{A = Object}.
In this case, $\ifamily{u}{\tt Object}$ holds (only
for $u$). 

Finally, the soundness criteria for \texttt{get()} and \texttt{set()} are derived similarly: 
\begin{equation}
\label{eq:sound-get}
%\begin{split}
%\hspace{-2ex}
SC(\texttt{(A)\,f.get(y)}) = 
AllKwn(\texttt{y})\;
\vee\ \forall\ \texttt{f}_s^u\in  \ptr(\texttt{f}) : s.n_f\neq u \wedge 
\texttt{A} \neq \texttt{Object}
\end{equation}
\begin{equation}
\label{eq:sound-set}
%\begin{split}
SC(\texttt{f.set(y,x)}) =  
AllKwn(\texttt{y}) \;
\vee\ \forall\ \texttt{f}_s^u\in  \ptr(\texttt{f}) : 
s.n_f \neq u \wedge AllKwn(\texttt{x})
%\end{split}
\end{equation}
%\vspace{2ex}

In (\ref{eq:sound-get}), applying
\rulenameT{I-GetTp} (\rulenameT{I-GetS2T}) 
resolves a \texttt{get()} call soundly
if its first (second) disjunct holds.
In (\ref{eq:sound-set}), applying
\rulenameT{I-SetTp} (\rulenameT{I-SetS2T}) 
resolves a \texttt{set()} call soundly
if its first (second) disjunct holds. 
By observing \rulenameT{T-Set}, we see why
$AllKwn(\texttt{x})$ is needed to 
reason about the soundness of 
\rulenameT{I-SetS2T}.

\subsection{Soundness Proof}
\label{properties}
\label{sec:proof}

We prove the soundness of \solar for \refjava
subject to our soundness criteria
(\ref{eq:sound-inv}) --
(\ref{eq:sound-set}) under 
Assumptions~\ref{ass:world} -- \ref{ass:reach}. 
%We do not see a need to do so
%by starting with a concrete semantics for \refjava and then
%relating it to the abstract domains used here. Instead,
We do so by 
taking advantage of the well-established soundness of
Andersen's pointer analysis 
(Figure~\ref{fig:pta}) stated below.

\begin{lemma}
\label{thm:andersen}
\solar is sound for \refjava with its reflection
API ignored.
\end{lemma}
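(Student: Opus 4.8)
With the reflection API ignored, none of the rules in \solar's inference engine that consume a \texttt{Class}, \texttt{Method} or \texttt{Field} object (the \emph{Propagation}, \emph{Inference}, \emph{Target Search}, \emph{Transformation} and \emph{Lazy Heap Modeling} rules) can ever fire, since \refjava then contains no calls to \texttt{forName()}, \texttt{getMethod()}, \texttt{getField()}, \texttt{newInstance()}, \texttt{invoke()}, \texttt{get()} or \texttt{set()}. The analysis therefore degenerates to exactly the five constraints of Figure~\ref{fig:pta}: \rulenameT{A-New}, \rulenameT{A-Cpy}, \rulenameT{A-Ld}, \rulenameT{A-St} and \rulenameT{A-Call}, which are a standard inclusion-based (Andersen-style) formulation of pointer analysis for the five-statement SSA core of Java. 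The plan is thus to reduce the claim to the classical soundness of that analysis. First I would fix the meaning of soundness: writing a small-step concrete semantics for \refjava in which a state consists of an environment mapping variables to concrete heap cells and a store mapping each $(\text{cell},\text{field})$ pair to a cell, I would say that $\ptr$ is \emph{sound} if, under the allocation-site abstraction that collapses every cell born at site $i$ as an instance of $t$ to $o^t_i$, the computed $\ptr$ over-approximates the concrete points-to relation in every reachable state: whenever $x$ concretely holds a cell abstracted by $o^t_i$, we have $o^t_i \in \ptr(x)$, and analogously for fields.

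The core argument is a preservation (simulation) induction over concrete execution traces. I would show that the least solution of the constraint system is an invariant of execution: assuming the abstraction of a concrete state is subsumed by $\ptr$, each concrete step is matched by one of the five rules so that the abstraction of the successor state remains subsumed. Allocation is covered by \rulenameT{A-New}; the assignment, load and store steps are covered by \rulenameT{A-Cpy}, \rulenameT{A-Ld} and \rulenameT{A-St}, because the allocation-site abstraction commutes with field read and write; and a virtual call is covered by \rulenameT{A-Call}. Because the rules impose monotone inclusion constraints that accumulate over \emph{all} program points rather than being threaded through control flow, flow-insensitivity only weakens precision and never soundness: any concrete order of statements is a refinement of the single flow-insensitive constraint set, so its effect is already subsumed.

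The main obstacle is the interprocedural case \rulenameT{A-Call}. Here I must verify that the abstract $dispatch(o^\phd_i, m)$ over-approximates the concrete dynamic dispatch, so that no reachable callee target is dropped, and that the receiver binding $o^\phd_i \in \ptr(m'_{this})$, the parameter bindings $\ptr(\text{arg}_k) \subseteq \ptr(m'_{pk})$ and the return flow $\ptr(m'_{ret}) \subseteq \ptr(x)$ together cover every concrete call and return data movement. Since \refjava with reflection ignored never produces an object of unknown type, the $o^u_i$ branch of $dispatch$ cannot arise and $dispatch$ reduces to ordinary virtual-method lookup on known types $t$; the obligation then amounts to the textbook correctness of method resolution together with argument passing. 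I would discharge it by appealing to the well-established soundness of Andersen's pointer analysis rather than re-deriving it from scratch. With the call case handled in this way, the remaining cases are routine, which establishes the lemma.
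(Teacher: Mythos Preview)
Your proposal is correct and takes the same approach as the paper: once reflection is ignored, \solar collapses to the Andersen-style rules of Figure~\ref{fig:pta}, and soundness follows from the well-established soundness of Andersen's pointer analysis. The paper in fact states this lemma without proof, simply invoking that classical result, so your detailed simulation sketch goes well beyond what the paper provides.
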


%\begin{proof}[Proof Sketch] If the subset of the Java reflection API
%in \refjava is ignored, \solar simplifies to
%the standard Andersen's pointer analysis consisting of the
%five rules in Figure~\ref{fig:pta} only. \solar is~sound
%since Andersen's analysis is known to be sound in this case.
%\end{proof}

\newcommand{\str}{\texttt{str}\;}
\newcommand{\alloc}{\texttt{newIn-i}\;}
\newcommand{\strnew}{\texttt{str-i}\;}

If we know the class types of all 
targets accessed at a reflective call but possibly
nothing about their signatures, \solar can 
over-approximate its
target method/fields in \textit{Target
Search}. Hence, the following lemma holds.

\begin{lemma}
\label{thm:const-cname}
\label{thm:consts}
\label{thm:cname}
\solar is sound for \refjavacname, 
the set of all
\refjava programs in which \texttt{cName} 
is a string constant in every
\texttt{Class.forName(cName)} call. 
\end{lemma}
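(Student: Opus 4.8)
The plan is to show that forcing \texttt{cName} to be a string constant makes \emph{known} class types flow everywhere in the analysis, after which soundness follows from the already-established soundness of the underlying pointer analysis (Lemma~\ref{thm:andersen}) together with the over-approximating behaviour of \emph{Target Search}. Concretely, I would prove as an invariant that no \refjavacname program ever generates a \texttt{Class} object of unknown type ($\texttt{c}^u$), a reflectively created object of unknown type ($o_i^u$), or a \texttt{Method}/\texttt{Field} object of unknown class type ($\texttt{m}^u_{-}$, $\texttt{f}^u_{-}$).

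First I would apply \rulenameT{P-ForName}: since every \texttt{cName} satisfies $o^{\tt String}_i \in \mathbb{SC}$, only the $\texttt{c}^t$ branch fires, so each \texttt{forName()} yields a known-type \texttt{Class} object; as \texttt{Class.forName()} is the sole entry method of \refjava, $pt(\texttt{c})$ contains only objects $\texttt{c}^t$ for every \texttt{Class}-typed variable \texttt{c}. A straightforward induction on the analysis fixpoint then propagates this: the only rules producing \texttt{Method}/\texttt{Field} objects are \rulenameT{P-GetMtd}/\rulenameT{P-GetFld} and the \emph{Inference} rules of Figure~\ref{fig:infer}, and when all \texttt{Class} objects are known none of them can introduce an unknown class-type superscript (\rulenameT{I-InvSig}/\rulenameT{I-GetSig}/\rulenameT{I-SetSig} preserve the superscript; \rulenameT{I-InvTp}/\rulenameT{I-GetTp}/\rulenameT{I-SetTp} and \rulenameT{I-InvS2T}/\rulenameT{I-GetS2T}/\rulenameT{I-SetS2T} emit only known-class objects and moreover have unsatisfiable premises here). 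Feeding the same fact into the LHM rules of Figure~\ref{fig:dam}, only \rulenameT{L-KwTp} (never \rulenameT{L-UkwTp}) fires at each \texttt{newInstance()} site, so every reflectively created object is some $o_i^t$ and no $o_i^u$ exists; hence $AllKwn(v)$ holds for every variable $v$.

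With the invariant in hand, I would verify soundness directly. Because every \texttt{Method}/\texttt{Field} object carries a known class type $t$, \emph{Target Search} over-approximates: by (\ref{formu:mtd}) and (\ref{form:fld}), $MTD$/$FLD$ union $mtdLookUp$/$fldLookUp$ over all supertypes $t<:t'$ and treat any unknown component of the signature $s$ as a wild card, so the returned set contains every method/field the call could touch at runtime. The \emph{Transformation} rules \rulenameT{T-Inv}, \rulenameT{T-Get} and \rulenameT{T-Set} rewrite each reflective call into regular statements ranging over this superset, which the pointer analysis resolves soundly by Lemma~\ref{thm:andersen}. As a cross-check, the soundness criteria (\ref{eq:sound-inv})--(\ref{eq:sound-set}) are trivially met, since their first disjunct $AllKwn(\cdot)$ holds; and by Assumption~\ref{ass:reach} \texttt{newInstance()} requires no separate treatment once \texttt{invoke()}, \texttt{get()} and \texttt{set()} are sound.

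I expect the main obstacle to be the over-approximation claim itself, i.e.\ proving that \emph{Target Search} captures \emph{every} runtime target. This needs a sublemma stating that, under Assumptions~\ref{ass:world}--\ref{ass:cast} and the reflective semantics of Table~\ref{study:table:sideAPI}, any method or field reached at runtime through a \texttt{Method}/\texttt{Field} object of known class $t$ is declared in $t$ or a supertype of $t$ with a signature matching the possibly wild-carded $s$, hence lies in the set returned by $MTD$/$FLD$. Member inheritance is absorbed by the union over $t<:t'$, and for \texttt{invoke()} the virtual dispatch on the receiver is not lost because \rulenameT{T-Inv} emits an ordinary virtual call that \rulenameT{A-Call} dispatches soundly inside the already-sound pointer analysis. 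Composing this sublemma with Lemma~\ref{thm:andersen} yields the soundness of \solar on \refjavacname.
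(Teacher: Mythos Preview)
Your proposal is correct and follows essentially the same route as the paper's proof: both start from \rulenameT{P-ForName} producing only known \texttt{Class} objects, deduce that only \rulenameT{L-KwTp} and the signature-refining rules \rulenameT{I-InvSig}/\rulenameT{I-GetSig}/\rulenameT{I-SetSig} are applicable, and then appeal to the over-approximation of \emph{Target Search} together with Lemma~\ref{thm:andersen}. One small remark: your closing appeal to Assumption~\ref{ass:reach} is unnecessary here, since in \refjavacname every \texttt{newInstance()} is already resolved eagerly by \rulenameT{L-KwTp}; the paper accordingly invokes only Assumptions~\ref{ass:world}--\ref{ass:cast}.
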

\vspace*{-2ex}
\begin{proof}[Sketch]
By \rulenameT{P-ForName}, 
the \texttt{Class} objects at all
\texttt{Class.forName(cName)} calls 
are created from known class types. By 
Lemma~\ref{thm:andersen}, this has four
implications. (1) LHM is not needed.
For the rules in Figure~\ref{fig:dam}, only
\rulenameT{L-KwTp} is relevant, enabling every
\texttt{c$'$.newInstance()} call to be resolved soundly as
a set of regular \texttt{new t()} calls, for all 
\texttt{Class} objects
$\texttt{c}^t\in\ptr(\texttt{c}')$. 
%In Figure~\ref{fig:lhm},
%Case (I) is handled soundly as described in 
%Section~\ref{meth} but Cases (II) and Cases (III) do not
%exist. 
(2) In
\rulenameT{P-GetMtd}, the
\texttt{Method} objects \texttt{m}$^t_\phd$ of all 
class types $t$ accessed at a \texttt{getMethod()} call
are created, where 
$\texttt{m}^t_\phd$ symbolizes 
over-approximately all target methods in
$MTD(\texttt{m}^t_\phd)$. 
The same sound approximation for \texttt{getField()}
is made by \rulenameT{P-GetFld}. 
(3) For the nine rules given in Figure~\ref{fig:infer},
only \rulenameT{I-InvSig}, \rulenameT{I-GetSig} and
\rulenameT{I-SetSig} are applicable,
since
$\nexists \texttt{m}_\phd^u \in \texttt{m}$,
$\nexists \texttt{f}_\phd^u \in \texttt{f}$ and
$\nexists o_i^u \in \texttt{y}$, in order to
further refine their underlying
method or field signatures. (4) In
Figure~\ref{fig:sideeffect}, 
the set of reflective targets at each call site
is over-approximated. By applying Lemma~\ref{thm:andersen}
again and noting Assumptions~\ref{ass:world} -- \ref{ass:cast}, \solar is sound for \refjavacname.
\end{proof}

%Based on Lemmas~\ref{thm:andersen}
%and \ref{thm:cname}, 

%\vspace*{-.5ex}

\solar is sound subject to 
(\ref{eq:sound-inv}) --  (\ref{eq:sound-set}) under 
Assumptions~\ref{ass:world} -- \ref{ass:reach}. 

%\vspace*{-.5ex}
\begin{theorem}
\label{t1}
\label{thm:soundness}
\solar is sound for \refjava if $SC(c)$ holds
at every reflective call $c$ of the form 
\texttt{(A) m.invoke(y, args)}, 
\texttt{(A) f.get(y)} or
\texttt{f.set(y, x)}. 
\end{theorem}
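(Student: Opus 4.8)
The plan is to bootstrap from Lemma~\ref{thm:cname}, which already establishes soundness for the subclass \refjavacname\ in which every \texttt{Class.forName(cName)} has a constant \texttt{cName}. The only new phenomenon in full \refjava\ is that \texttt{cName} may be non-constant, so \rulenameT{P-ForName} produces an unknown-class object $\texttt{c}^u$ rather than some $\texttt{c}^t$. I would therefore trace exactly how $\texttt{c}^u$ propagates through the inference engine and show that, under the hypotheses $SC(c)$, every abstraction it spawns is soundly refined into known-type facts that an over-approximating \emph{Target Search} and \rulenameT{A-Call} can consume. Concretely, $\texttt{c}^u$ reaches \texttt{newInstance()} (yielding $o_i^u$ by \rulenameT{L-UkwTp}) and \texttt{getMethod()}/\texttt{getField()} (yielding $\texttt{m}^u_\phd$/$\texttt{f}^u_\phd$ by \rulenameT{P-GetMtd}/\rulenameT{P-GetFld}); these are the only sources of ``$u$''-tagged data beyond what Lemma~\ref{thm:cname} already handles.

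The core is a case analysis at each side-effect call, showing that its criterion forces one of two sound inference modes. For \texttt{(A) m.invoke(y, args)}, if the first disjunct of (\ref{eq:sound-inv}) holds then $AllKwn(\texttt{y})$ lets \rulenameT{I-InvTp} refine every $\texttt{m}^u_\phd$ into $\texttt{m}^t_\phd$ for each $o_i^t\in\ptr(\texttt{y})$, and \emph{Target Search} over-approximates the true dispatch target because the receiver's dynamic type is among these $t$; if the second disjunct holds, \rulenameT{I-InvS2T} recovers the declaring classes from the exact method name, the post-dominating cast via $\ll:$, and $Ptp(\texttt{args})\neq\varnothing$, and I must verify that $\calM(s.t_r,s.n_m,s.p)$ contains the true declaring class, which follows since $s.n_m$ is exact and $Ptp(\texttt{args})$ captures the parameter-type tuple of any runtime target. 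The analogous arguments for \texttt{get()} and \texttt{set()} use (\ref{eq:sound-get}) and (\ref{eq:sound-set}) with the \rulenameT{I-Get\ldots}/\rulenameT{I-Set\ldots} rules; the extra conjunct $AllKwn(\texttt{x})$ in (\ref{eq:sound-set}) is precisely what \rulenameT{I-SetSig}/\rulenameT{T-Set} need to pin the field's declared type down soundly.

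Having shown that \texttt{invoke()}, \texttt{get()} and \texttt{set()} are resolved soundly, I would discharge \texttt{newInstance()} by appealing to Assumption~\ref{ass:reach}: on every path, each reflectively created $o_i^u$ flows into a cast (Case II) or into one of these three side-effect calls (Case III). At a cast, \rulenameT{L-Cast} materialises $o_i^t$ for all $\ibinding{t}{\texttt{A}}$, covering the true dynamic type by Assumption~\ref{ass:cast}; at a side-effect call, \rulenameT{L-Inv}/\rulenameT{L-GSet} materialise $o_i^{t'}$ for the inferred family $\ifamily{t'}{t}$, which is sound precisely because the enclosing call was already shown to be soundly resolved. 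Thus no concrete reflective object or target escapes the abstraction, and composing this with Lemma~\ref{thm:cname} (for the constant backbone) and Lemma~\ref{thm:andersen} (for the non-reflective statements) yields soundness for all of \refjava.

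The main obstacle I anticipate is the cyclic dependence between lazy heap modeling and collective inference shown in Figure~\ref{fig:form}: resolving an \texttt{invoke()} whose receiver is $o_i^u$ both \emph{consumes} points-to facts about \texttt{y} to infer the target and \emph{produces} new concrete types for $o_i^u$ via \rulenameT{L-Inv}, which feed back into the pointer analysis and may enable further inference. The clean way to handle this is to observe that every rule is monotone (it only adds facts), so the combined system has a least fixed point, and then argue by induction over the concrete execution that each runtime object, field write, and dispatch target is matched by a fact present at that fixed point; the $SC$ hypotheses guarantee that at no side-effect call does the inference leave an unresolved $u$, so the induction never stalls.
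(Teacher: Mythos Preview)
Your proposal is correct and follows essentially the same route as the paper's proof: a case analysis on the two disjuncts of each $SC(c)$, invoking \rulenameT{I-InvTp}/\rulenameT{I-GetTp}/\rulenameT{I-SetTp} under the first disjunct and \rulenameT{I-InvS2T}/\rulenameT{I-GetS2T}/\rulenameT{I-SetS2T} under the second, then reducing to Lemma~\ref{thm:cname} and discharging \texttt{newInstance()} via Assumption~\ref{ass:reach} and the LHM rules. The paper argues only the \texttt{invoke()} case and appeals to symmetry for the other two, and it does not spell out the monotone fixed-point argument you add to handle the cyclic dependence in Figure~\ref{fig:form}; your treatment is more explicit on both counts but not a different idea. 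One small slip: where you attribute the need for $AllKwn(\texttt{x})$ in (\ref{eq:sound-set}) to \rulenameT{I-SetSig}/\rulenameT{T-Set}, it is actually \rulenameT{I-SetS2T} that the second disjunct enables (the paper says so explicitly right after (\ref{eq:sound-set})); \rulenameT{I-SetSig} already fires without that hypothesis.
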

\vspace*{-1.ex}
\begin{proof}[Sketch] For reasons of symmetry,
we focus only on a call, $c$, to 
\texttt{(A) m.invoke(y, args)}. $SC(c)$ is given
in (\ref{eq:sound-inv}).
If its first disjunct is true,
then all abstract objects flowing into \texttt{y} are
created either from known classes soundly
by \rulenameT{L-Kwtp} or lazily from unknown classes
as illustrated in
Figure~\ref{fig:lhm} by applying initially
\rulenameT{L-UkwUp} and later
\rulenameT{L-Cast} (for Case (II))
and \rulenameT{L-Inv} (for Case (III)),
but soundly under Assumption~\ref{ass:reach}. 
If its second disjunct is true,
then \solar can always infer the missing class types $t$
in a \texttt{Method} object $\texttt{m}_s^u$ pointed to by
$\ptr(\texttt{m})$ to over-approximate
the set of its target methods as $MTD(\texttt{m}_s^t)$.
This takes us
back to a situation equivalent to the one 
established in the proof of 
Lemma~\ref{thm:cname}. Thus, \solar is sound
for \refjava if $SC(c)$ holds
at every reflective call $c$ of the form 
\texttt{(A) m.invoke(y, args)}.
\end{proof}

\subsection{\probe}
\label{sec:probe}

We instantiate
\probe, as shown in Figure~\ref{fig:solar}, from
\solar as follows. To make \probe resolve reflection precisely, we refrain from performing \solar's LHM
by retaining \rulenameT{L-UkwTp} but
ignoring \rulenameT{L-Cast}, 
\rulenameT{L-GSet} and \rulenameT{L-Inv} and abandon some
of \solar's sophisticated 
inference rules by disabling
\rulenameT{I-InvS2T}, \rulenameT{I-GetS2T} and \rulenameT{I-SetS2T}. 

In \emph{Target Search}, \probe will
restrict itself to only \texttt{Method} objects $\texttt{m}^t_s$ and \texttt{Field} objects 
$\texttt{f}^t_s$, where the signature
$s$ is at least partially known.

\subsection{Static Class Members}
\label{other}

To handle static class members, our rules
can be simply modified. In
Figure~\ref{fig:infer}, \texttt{y = \texttt{null}}. \rulenameT{I-InvTp}, \rulenameT{I-GetTp} and \rulenameT{I-SetTp} are not needed (by assuming
$\ptr(\texttt{null})=\varnothing$). In the
soundness criteria stated in (\ref{eq:sound-inv}) --
(\ref{eq:sound-set}), the first disjunct is
removed in each case.
\rulenameT{I-InvS2T}, \rulenameT{I-GetS2T} and \rulenameT{I-SetS2T} are modified with
$o^u_i \in pt(\texttt{y})$ being replaced by $\texttt{y}
= \texttt{null}$. 
The rules in Figure~\ref{fig:sideeffect} 
are modified to deal with static members. 
In Figure~\ref{fig:dam}, \rulenameT{L-GSet} and \rulenameT{L-Inv}
are no longer relevant. The other rules remain applicable.
The static initializers for the classes in the 
closed world are analyzed. This can happen
at, say loads/stores for static fields as is the standard
but also when some classes are discovered in \rulenameT{P-ForName}, \rulenameT{L-Cast}, \rulenameT{L-GSet} and \rulenameT{L-Inv}.

%We also handle access 
%modifiers such as \texttt{public} and
%\texttt{private} according to their semantics.
%For example, \texttt{getMethod()} returns only 
%\texttt{public} methods.

\section{Implementation}
\label{sec:impl}

%\subsection{The Datalog Implementation}

\solar, as shown in Figure~\ref{fig:form}, works together 
with a pointer analysis.
We have implemented \solar on top of \doop~\cite{Yannis09}, a 
state-of-the-art pointer analysis framework for Java.
\solar is implemented in the Datalog language. 
Presently, \solar consists
of 303 Datalog rules written in about 2800 lines of code.

We release \solar as an open-source tool at
\url{http://www.cse.unsw.edu.au/~corg/solar}.
Now \solar has been augmented to output its reflection analysis results with the format that is supported by \Soot (a popular framework for analyzing Java and Android applications), which enables \soot's clients to use \solar's results directly.

Below we extend our formalism for \RefJava
to handle the other methods in the Java reflection 
API, divided into entry, member-introspecting and
side-effect methods.
For details, we refer to the open-source software of \solar.

\paragraph{Entry Methods}
We divide the entry methods into six groups, depending how a
\texttt{Class} object is obtained, by
(1) using a special syntax (\texttt{A.class}),
(2) calling \texttt{Object.getClass()}, 
(3) using a full-qualified string name (in
\texttt{Class.forName()} and \texttt{ClassLoader.loadClass()}),
(4) calling proxy API \texttt{Proxy.getProxyClass()}, 
(5) calling, e.g., \texttt{Class.getComponentType()}, on a 
metaobject, and 
(6) calling, e.g.,
\texttt{sun.reflect.Reflection.getCallerClass()} to 
introspect an execution context.
According to Section~\ref{study:entry}, the last two
are infrequently used. Our current implementation handles
the entry methods in (1) -- (4) plus 
\texttt{Class.getComponentType()} in (5) since the latter is
used in array-related side-effect methods.

\begin{comment}
\todo{not polished yet} For \texttt{getProxyClass()}, the return value of it (an \texttt{Class} object say $c$) is usually called by \texttt{getConstructor()} later, e.g., \texttt{$ctor$ = $c$.getConstructor(InvocationHandler.class)}. Then \texttt{$ctor$.newInstance($handler$)} is called to create the proxy object $o^p$. So by points-to analysis, \solar keeps tracking the $c$ and $ctor$, and then further uses the argument $handler$ (an object of \texttt{MyHandler}) to find the \texttt{invoke()} similar as described above.
\end{comment}

\paragraph{Member-Introspecting Methods}
\label{sec:mem}
In addition to the two included in \RefJava, there are
ten more member-introspecting methods as given 
in Figure~\ref{study:fig:coreAPI}. 
To show how to handle the other ten methods
based on our existing formalism (Section~\ref{sec:form}), 
we take
\texttt{getDeclaredMethod()} and 
\texttt{getMethods()} as examples as the others can be
dealt with similarly.
For a \texttt{Method} object resolved to be
\texttt{m}$^t_{s}$ at a \texttt{getDeclaredMethod()} call
by \emph{Propagation}, the search for building
$MTD(\texttt{m}^t_s)$ is confined to class $t$ (the only change in the rules). Otherwise,
the search is done as described in Section~\ref{search}.

For \texttt{ms = c$'\!$.getMethods()}, which returns  
an array of \texttt{Method} objects, its analysis 
is similarly done as \texttt{m = c$'\!$.getMethod(mName)}
in Figure~\ref{fig:prop}. We first create a placeholder
array object
$\texttt{ms}_{ph}$ so that $\texttt{ms}_{ph}\in 
\ptr(\texttt{ms})$. 
As \texttt{getMethods()} is parameterless, we
then insert a new \texttt{Method} object
\texttt{m}$^t_u$ (\texttt{m}$^u_u$) into 
$\ptr(\texttt{ms}_{ph}.arr)$ for every 
$\texttt{c}^t$ ($\texttt{c}^u$) 
in $\ptr(\texttt{c}')$.
The underlying
pointer analysis will take care of how eventually
an element of such an array flows into, say \texttt{m},
in \texttt{m.invoke()}. Then the rules 
in Figure~\ref{fig:infer} are applicable.

\paragraph{Side-Effect Methods}

In addition to the four in \RefJava, the following four side-effect methods are 
also analyzed. 
\texttt{Constructor.newInstance()} is handled 
as \texttt{Class.newInstance()} except that its
array argument is handled exactly as in \texttt{invoke()}.
Its
call sites are also modeled lazily.
\texttt{Array.newInstance()}, \texttt{Array.get()} and 
\texttt{Array.set()} 
are handled as in Table~\ref{study:table:sideAPI}. 
%The first is not modeled lazily and the last two are handled with
%an array object modeled monolithically as per
%in Section~\ref{spta}. 
Presently, \solar does not handle \texttt{Proxy.newProxyInstance()} in its implementation. However, it can be analyzed according to
its semantics as described in Section~\ref{sec:under:inter:over}.

%A call to this method returns a \texttt{Proxy} object, which
%has an associated invocation handler object that
%implements the \texttt{InvocationHandler} interface. A method 
%invocation on a \texttt{Proxy} object through one of its 
%\texttt{Proxy} interfaces will be dispatched to the 
%\texttt{invoke} method of the object's invocation handler.

\begin{comment}
\subsection{Closed-World}

\solar analyzes a program under a closed-world assumption.
Java native methods
are ignored but their behaviors can be 
simulated~\cite{Ali13,spark}. In addition, dynamic class 
loading is assumed to be resolved
separately~\cite{Sreedhar00,Hirzel04,NX05,Sawin09}. 
To simulate its effect, 
we create a closed world for a program in two steps.
We first locate the classes referenced by using
the fact generator from \doop. We then add additional ones
found through program runs under \TamiFlex~\cite{Bodden11}.
For the 11 DaCapo benchmarks, their
three inputs, small, default and large, are used. 
We run the two Java applications 
by using the default inputs for
\texttt{avrora} and starting the server itself for
\texttt{freecs}.
For a class in the closed-world of a program, \solar
analyzes its static initializer at the outset
if it is a dynamically loaded class and proceeds
as discussed in Section~\ref{dis}) otherwise.
\end{comment}

\section{Evaluation}
\label{eval}
\label{sec:eval}

We evaluate \solar by comparing it (primarily) with
two state-of-the-art
reflection analyses for Java, \doopr~\cite{Yannis15} and 
\elf~\cite{Yue14}.
In particular,
our evaluation addresses the following research 
questions (RQs): 

\begin{itemize}
\item RQ1. 
How well does \solar achieve full
automation without any annotations?

\item RQ2. 
How well does \solar reason about the soundness and identify the unsoundly or imprecisely resolved (or ``problematic'') reflective calls?

\item RQ3.
How well does \solar make the trade-offs among soundness,
precision and scalability in terms of reflection analysis?
In other words, can \solar resolve reflection more soundly than \Doopr and \elf with acceptable precision and efficiency?

\end{itemize} 

Below we describe our experimental setup, revisit our
assumptions, and answer these RQs in the given order.

\subsection{Experimental Setup}
\label{setup}

As all the three reflection
analyses, \Doopr, \elf 
and \solar, are implemented in \Doop~\cite{Yannis09} (a state-of-the-art pointer analysis framework for Java) and run with a pointer analysis, we compare them by running each together with
the same context-sensitive pointer
analysis option provided by \doop: \texttt{selective-2-type-sensitive+heap}~\cite{Yannis13pldi}.

\Doopr denotes the reflection analysis proposed in~\cite{Yannis15} and all its provided reflection resolution options are turned on including 
\texttt{-enable-reflection}, 
\texttt{-enable-reflection-substring-analysis},
\texttt{-enable-reflection-use-based-analysis} and
\texttt{-enable-reflection-invent-unknown-objects}, in the experiment.
\elf (version 0.3) is the reflection analysis proposed in ~\cite{Yue14} with its default setting.

We use the LogicBlox Datalog engine (v3.9.0) 
on a Xeon E5-2650 2GHz
machine with 64GB of RAM. We consider
7 large DaCapo benchmarks (2006-10-MR2) and 4 real-world applications, \texttt{avrora}-1.7.115 (a simulator), \texttt{checkstyle}-4.4 (a checker), \texttt{freecs}-1.3.20111225 (a server) and \texttt{findbugs}-1.2.1 (a bug detector), under a
large reflection-rich Java library,
\texttt{JDK 1.6.0\_45}.
The other 4 small DaCapo benchmarks are excluded
since reflection is much less used. 
%They can also be analyzed by \solar 
%with full automation.
%For the \texttt{fop} benchmark, we
%have added \texttt{org.w3c.dom} and \texttt{org.w3c.css} 
%to enable it to be analyzed. 

\subsection{Assumptions}
\label{sec:eval-ass}
When analyzing the real code in the experiment,
we accommodate
Assumptions~\ref{ass:world} -- \ref{ass:reach} as follows.
For Assumption~\ref{ass:world},
we rely on \doop's pointer analysis to simulate the
behaviors of Java native methods. 
Dynamic class loading is assumed to be resolved
separately. 
To simulate its effect,
we create a closed world for a program, by 
locating the classes referenced with \doop's
fact generator and adding additional ones
found through program runs under \TamiFlex~\cite{Bodden11}.
For the DaCapo benchmarks, their
three standard inputs are used. 
For \texttt{avrora} and \texttt{checkstyle}, 
their default test cases are used.
For \texttt{findbugs}, one Java program is 
developed as its input
since no default ones are available. 
For \texttt{freecs}, a server requiring user
interactions, we only initialize it as the input
in order to ensure repeatability.
For a class in the closed-world of a program, \solar
analyzes its static initializer at the outset
if it is dynamically loaded and proceeds 
as discussed in 
Section~\ref{other} otherwise.

Assumptions~\ref{ass:loader} and~\ref{ass:cast} are taken for
granted.

As for Assumption~\ref{ass:reach}, 
we validate it
for all reflective allocation sites where $o_i^u$ is created
in the application code of the 10 programs that can be analyzed scalably. 
This assumption is
found to hold at 75\% of these sites automatically
by performing a simple intra-procedural
analysis. 
We have inspected the remaining 25\% interprocedurally and found only two violating 
sites (in \texttt{eclipse} and \texttt{checkstyle}), 
where $o_i^u$ is never used.
In the other sites inspected,
$o_i^u$ flows through only local variables with all the
call-chain lengths being at most 2. 
This shows that Assumption~\ref{ass:reach} generally holds in practice.

%Let us consider
%a case in package \texttt{org.xml.sax.helpers} used in \texttt{xalan}.
%In method \texttt{NewInstance.newInstance()},  
%the object $o_i^u$ created inside flows only to a
%cast  \texttt{(XMLReader)} contained in method 
%\texttt{XMLReaderFactory.loadClass()} and
%a cast  \texttt{(Parser)} contained in method 
%\texttt{ParserFactory.makeParser()}.

\subsection{RQ1: Automation and Annotations}
\label{sec:rq1}

Figure~\ref{fig:ann} compares \solar and 
other reflection analyses~\cite{Livshits05,Yue14,Yannis15,Ernst15,Lili16,Zhang17} denoted 
by ``Others'' by the degree of automation
achieved. For an analysis, this 
is measured by the number of annotations required
in order to improve the soundness of the
reflective calls identified to be potentially
unsoundly resolved. 
\begin{figure}[thp]
\centering
\includegraphics[width=0.8\textwidth]{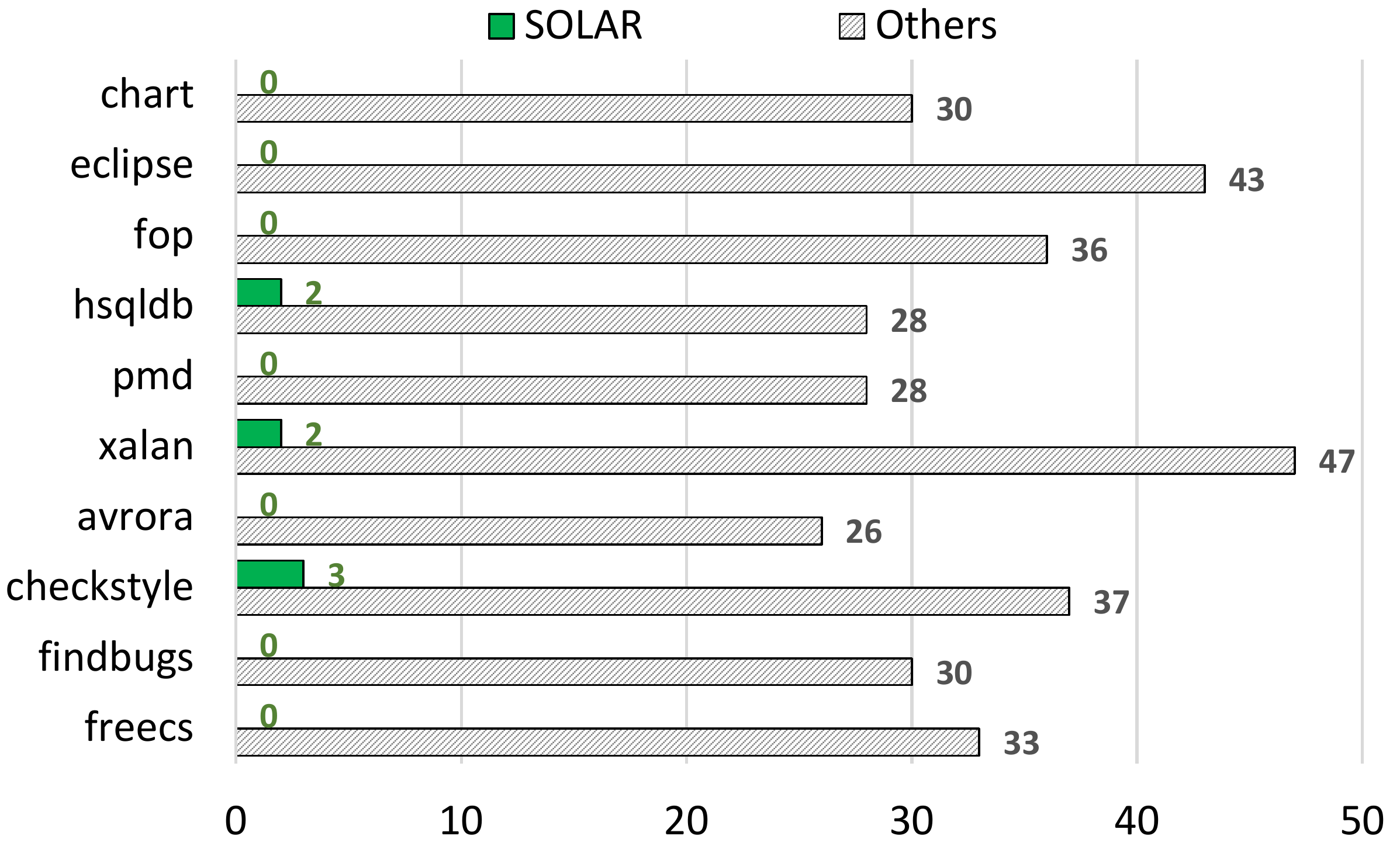}
\caption{Number of annotations required
for improving the soundness of unsoundly resolved 
reflective calls. 
%\texttt{Others} represents an existing
%reflection analysis adopting the
%annotation approach in~\cite{Livshits05}.
\label{fig:ann}}
\end{figure}

\solar analyzes 7 out of the 11 programs
scalably with full automation. For 
\texttt{hsqldb},
\texttt{xalan} and
\texttt{checkstyle},
\solar is unscalable (under 3 
hours). With \probe, a total of 
14 reflective calls are
flagged as being potentially unsoundly/imprecisely resolved. 
After 7 annotations, 
2 in \texttt{hsqldb},
2 in \texttt{xalan} and 
3 in \texttt{checkstyle}, 
\solar is scalable, as discussed
in Section~\ref{sec:rq2}.
However, \solar is unscalable 
(under 3 hours) for \texttt{jython},
an interpreter (from \texttt{DaCapo})
for Python in which
the Java libraries and application code are invoked 
reflectively from the Python code. Neither are \doopr and \elf. 
%So we will not consider \texttt{jython} any further.
%, due to partly many reflective targets resolved. 

``Others'' cannot
identify which reflective calls may be
unsoundly resolved; however, they can improve
their soundness by adopting an annotation
approach suggested in~\cite{Livshits05}.
Generally, this approach requires users 
to annotate the string arguments
of calls to entry and member-introspecting methods
if they are not string constant.
To reduce the annotation effort, e.g., for
a \texttt{clz = Class.forName(cName)} call with 
\texttt{cName}
being an input string, this approach leverages the intra-procedural post-dominating
cast on the result of a
\texttt{clz.newInstance()} call
to infer
the types of \texttt{clz}. 

As shown in Figure~\ref{fig:ann},
``Others'' will require a total of 
338 annotations initially and possibly more in the
subsequent iterations (when more code is 
discovered). As discussed in Section~\ref{sec:mech},
\solar's annotation approach is also iterative. However,
for these  programs, \solar requires 
a total of 7 annotations in only one iteration.

\solar outperforms ``Others'' due to its
powerful inference engine for performing reflection
resolution and its
effective mechanism in accurately identifying unsoundly
resolved reflective calls
as explained in Section~\ref{sec:solar:gci}.

\subsection{RQ2: Automatically Identifying 
``Problematic'' Reflective Calls}
\label{sec:rq2}

If \solar is scalable in analyzing a program
with no unsoundly resolved calls reported, then 
\solar is sound for this program 
under Assumptions~\ref{ass:world} -- \ref{ass:reach} 
(Theorem~\ref{thm:soundness}).
Thus, as discussed in Section~\ref{sec:rq1}, \solar 
is sound (in this sense) 
for the seven Java programs scalably
analyzed (under 3 hours) with full automation in our experiment.

\solar is unscalable for
\texttt{hsqldb}, \texttt{xalan} and 
\texttt{checkstyle} (under 3 hours). \probe is
then run to identify their ``problematic'' 
reflective calls, reporting 13 potentially
unsoundly resolved calls: 1 in \texttt{hsqldb}, 12 in 
\texttt{xalan} and 0 in \texttt{checkstyle}. 
Their
handling is \emph{all unsound} by code inspection, 
demonstrating the effectiveness of \solar in \emph{accurately}
pinpointing a small number of right parts of
the program
to improve unsoundness.

In addition, we currently adopt
a simple approach to alerting users for
potentially imprecisely resolved reflective calls.
\probe sorts all the
\texttt{newInstance()} call sites
according to the number of objects lazily created
at the cast operations operating on the 
result of a \texttt{newInstance()} call 
(by \rulenameT{L-Cast}) 
in non-increasing order. In addition, \probe ranks the
remaining reflective call sites (of other side-effect methods) according to 
the number of reflective
targets resolved, also in non-increasing order.

By focusing on 
unsoundly and imprecisely resolved reflective 
calls as opposed to the unknown input strings (see Section~\ref{sec:rq1}), only lightweight
annotations are needed as shown in
Figure~\ref{fig:ann}, with
2 \texttt{hsqldb}, 2 \texttt{xalan} and 3 in
\texttt{checkstyle} as explained below.

\paragraph{Probing \texttt{\emph{hsqldb}}}

Earlier, Figure~\ref{fig:output} shows the output automatically generated for \texttt{hsqldb} by
\probe (highlighting which reflective calls are resolved unsoundly or imprecisely), together with the suggested annotation sites (as introduced in Section~\ref{sec:anno}). In Figure~\ref{fig:output}, all the call sites
to (invoke) the same method are numbered from 0.

\begin{comment}
\begin{figure}[thp]
\hspace{5ex}
\includegraphics[width=0.9\textwidth]{figures/hsqldb.pdf}
\vspace{-5ex}
\caption{Probing  \texttt{hsqldb}.
%(with some variable names simplified).
\label{fig:hsqldb}}
\end{figure}
\end{comment}

The unsound list contains one \texttt{invoke()}, with its
relevant code appearing in class 
\texttt{org.hsqldb.Function} as shown. After \probe
has finished, \texttt{mMethod} in line~352 points to a 
\texttt{Method} object \texttt{m}$^u_u$ that is
initially created in line 179 
and later flows into line 182, 
indicating that the class type of $\texttt{m}_u^u$ is unknown since
\texttt{cn} in line 169 is unknown.
By inspecting the code, we find that \texttt{cn} can only be
\texttt{java.lang.Math} or \texttt{org.hsqldb.Library}, read
from some hash maps or obtained by string manipulations
and is annotated as:

%So it has been annotated 
%this way afterwards like: 

\vspace{2ex}
%\hspace{-4ex}
\begin{BVerbatim}[baselinestretch=1.5]
org.hsqldb.Function.<init>/java.lang.Class.forName/0 java.lang.Math
org.hsqldb.Function.<init>/java.lang.Class.forName/0 org.hsqldb.Library
\end{BVerbatim}
\vspace{2ex}

The imprecise list for \texttt{hsqldb} is divided into 
two sections. 
%We ignored
%\texttt{getMethods()} in ``Reflection Resolution'' since it is
%the same one included in the unsound 
%list.
In ``\emph{newInstance (Type Casting)}'', there are 10 listed cast operations
$(T)$ reached by an $o_i^u$ object such that the number
of types inferred from $T$ is larger than~10. The top cast \texttt{java.io.Serializable} has
$1391$ subtypes and is marked to be reached by 
a \texttt{newInstance()} call site in
\texttt{java.io.ObjectStreamClass}. However, this is a false
positive for the harness used
due to imprecision in pointer analysis. Thus,
we have annotated its
corresponding \texttt{Class.forName()} call site
in method \texttt{resolveClass} of 
class \texttt{java.io.ObjectInputStream} to return nothing. 
With the two annotations, \solar terminates in 45
minutes with its unsound list being empty.

\paragraph{Probing \texttt{\emph{xalan}}}
\probe reports 12 unsoundly resolved
\texttt{invoke()} call sites. 
All \texttt{Method}
objects flowing into these call sites are created 
at two \texttt{getMethods()} call sites in class 
\texttt{extensions.MethodResolver}. 
By inspecting the code, we find that the
string arguments for 
the two \texttt{getMethods()} calls and their
corresponding entry methods
are all read from a file with its name hard-wired as 
\texttt{xmlspec.xsl}. For this 
input file provided by DaCapo, these 
two calls are never executed and thus annotated to be 
disregarded.  With these two annotations,
\solar terminates in 28
minutes with its unsound list being empty.

\paragraph{Probing \texttt{\emph{checkstyle}}}
\label{sec:eval-checkstyle}

\probe reports no unsoundly resolved call.
To see why~\solar is unscalable, 
%we examine
%the cast operations and reflective calls that may 
%be imprecisely resolved by \solar. There is one
we examine one
\texttt{invoke()} call in line 1773 of Figure~\ref{fig:check} found automatically by
\probe 
that stands out as being
possibly imprecisely resolved. 

\begin{figure}[thp]
\centering
\includegraphics[width=0.9\textwidth]{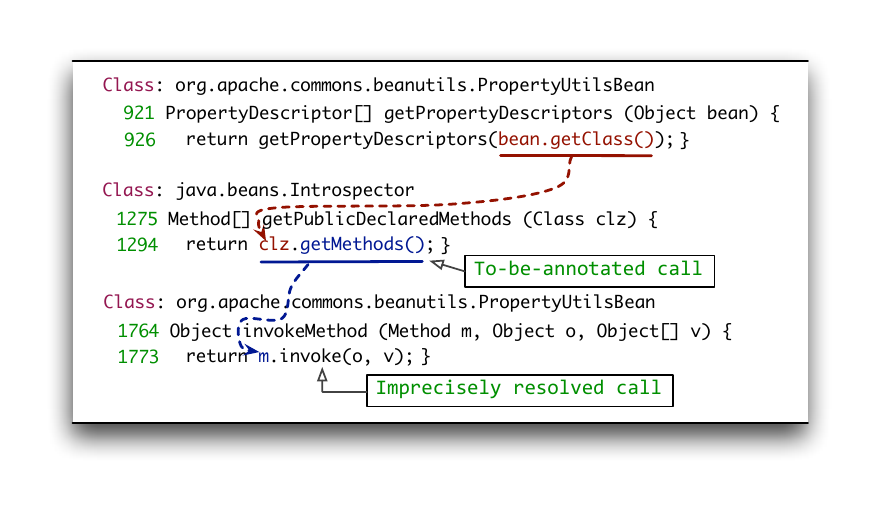}
\vspace{-5ex}
\caption{Probing  \texttt{checkstyle}.
%(with some variable names simplified).
\label{fig:check}}
\end{figure}

There are 962 target methods inferred 
at this call site. \probe highlights
its corresponding member-introspecting method 
\texttt{clz.getMethods()}
(in line 1294) and its entry methods (with
one of these being shown in line 926). Based on this, we find easily
by code inspection that the target 
methods called reflectively at the \texttt{invoke()}
call are the setters whose names share 
the prefix ``\texttt{set}''. As a result, the
\texttt{clz.getMethods()} call is 
annotated to return 158 ``\texttt{setX}'' methods 
in all the subclasses of \texttt{AutomaticBean}.

In addition, the
\texttt{Method} objects created at
one \texttt{getMethods()} call and one 
\texttt{getDeclaredMethods()} call in class 
\texttt{*.beanutils.MappedPropertyDescriptor\$1} 
flow into the \texttt{invoke()} call in line 1773
as false positives due to imprecision in the
pointer analysis. These \texttt{Method} 
objects have been annotated away.

After the three annotations,
\solar is scalable, terminating in 38 minutes.

\subsection{RQ3: Soundness, Precision and Efficiency}
\label{sec:rq3}

In this section, we first compare the soundness of \doopr, \elf and \solar and 
then compare their analysis precision and efficiency 
to see if \solar's precision and efficiency are 
still acceptable while being significantly more sound than the prior work.

\subsubsection{Soundness}
\label{dynamic}
To compare the soundness of 
\doopr, \elf and \solar, it is the
most relevant to compare their \emph{recall}, measured
by the number of true reflective targets discovered at reflective calls 
to side-effect methods that are dynamically executed
under certain inputs. 
Unlike \doopr and \elf,
\solar can automatically identify 
``problematic'' reflective calls for
lightweight annotations. To ensure a fair comparison,
the three annotated programs shown in 
Figure~\ref{fig:ann} are used by all the three
reflection analyses.

\paragraph{Recall}

Table~\ref{table:recall} compares the recall
of \doopr, \elf and \solar.
We use \TamiFlex~\cite{Bodden11}, a practical 
dynamic reflection analysis tool
to find the targets accessed at reflective
calls in our programs
under the inputs described in Section~\ref{sec:eval-ass}.
Of the three analyses compared,
\solar is the only one to achieve
total recall, for all reflective targets, including
both methods and fields, accessed.

For each DaCapo benchmark, its main code body is 
run under a reflection-based harness. As a result,
static analysis tools, 
including \doopr, \elf and \solar,  use its
\texttt{xdeps} version driven by 
a reflection-free harness. 
However, \TamiFlex needs to run each 
DaCapo benchmark with the reflection-based harness 
dynamically. For each DaCapo
benchmark, two harnesses lead to 
different versions used for a 
few classes (e.g., \texttt{org.apache.xerces.parsers.SAXParser}) in \texttt{eclipse}, \texttt{fop}, \texttt{pmd} and \texttt{xalan}. 
In Table~\ref{table:recall}, we thus ignore the totally
21 extra reflective targets accessed by \TamiFlex. 

\begin{comment}
There are two cases highlighted 
in \cbox{gray}{gray}
and \cbox{yellow}{yellow}
in Table~\ref{table:recall}, where our closed-world assumption
is violated. 
For each DaCapo benchmark, \TamiFlex uses a harness that 
is different from the one in the \texttt{xdeps} 
version provided for static analysis.
Thus, \solar 
``\cbox{gray}{misses}'' one target for a
\texttt{newInstance()} call that exists only in the
\TamiFlex's harness.
For \texttt{eclipse}, \texttt{fop}, \texttt{pmd} and 
\texttt{xalan}, 
\solar ``\cbox{yellow}{misses}'' a few target methods due to the
differences in a few classes (e.g., 
\texttt{org.apache.xerces.parsers.SAXParser}),
where different
versions in the two \solar and \TamiFlex 
settings are used.
\end{comment}

\begin{table}[h]
\tbl{Recall comparison. For each program,
	\TamiFlex denotes the number of targets found by \TamiFlex and
	Recall denotes the number of such (true) targets 
also	discovered by each reflection analysis. 
There are two types of \texttt{newInstance()}: 
\texttt{c.newInstance} (in \texttt{Class}) and
\texttt{ct.newInstance} (in \texttt{Constructor}).}
{
\renewcommand\arraystretch{1.3}
\setlength{\tabcolsep}{2.5pt}

\scalebox{.95}{
% Table generated by Excel2LaTeX from sheet 'Tables'
\begin{tabular}{c|l|ccc|ccc|ccc|ccc}

\Xhline{1pt}
\multicolumn{2}{c|}{\multirow{2}[4]{*}{}} & \multicolumn{3}{c|}{\texttt{c.newInstance}} & \multicolumn{3}{c|}{\texttt{ct.newInstance}} & \multicolumn{3}{c|}{\texttt{invoke}} & \multicolumn{3}{c}{\texttt{get/set}}\\
\cline{3-14}\multicolumn{2}{c|}{} & \doopr  & \elf   & \solar & \doopr  & \elf   & \solar & \doopr  & \elf   & \solar & \doopr  & \elf   & \solar\\
\Xhline{1pt}

\multirow{2}[4]{*}{chart} & Recall & 13    & 21    & 22    & 2     & 2     & 2     & 2     & 2     & 2     & 8     & 8     & 8 \\
\cline{2-14}      & 	\TamiFlex &       & 22    &       &       & 2     &       &       & 2     &       &       & 8     &  \\
\Xhline{1pt}
\multirow{2}[4]{*}{eclipse} & Recall & 9     & 24    &  37    & 1     & 3     & 3     & 2     & 7     & 7     & 8     & 1039  & 1039 \\
\cline{2-14}      & 	\TamiFlex &   & 37   &       &       & 3     &       &       & 7   &      &       & 1039  &  \\
\Xhline{1pt}
\multirow{2}[4]{*}{fop} & Recall & 9     & 12    &13    & 0     & 0     & 0     & 1     & 1     & 1     & 8     & 8     & 8 \\
\cline{2-14}      & 	\TamiFlex & & 13    &    &       & 0     &       &       & 1     &       &       & 8     &  \\
\Xhline{1pt}
\multirow{2}[4]{*}{hsqldb} & Recall & 5     & 9     & 10    & 1     & 1     & 1     & 0     & 0     & 0     & 8     & 8     & 8 \\
\cline{2-14}      & 	\TamiFlex &       & 10    &       &       & 1     &       &       & 0     &       &       & 8     &  \\
\Xhline{1pt}
\multirow{2}[4]{*}{pmd} & Recall & 4     & 8     & 13    & 0     & 0     & 0     & 0     & 0     & 7     & 8     & 8     & 8 \\
\cline{2-14}      & 	\TamiFlex &     & 13  &    &       & 0     &       &       & 7     &       &       & 8     &  \\
\Xhline{1pt}
\multirow{2}[4]{*}{xalan} & Recall & 36    & 42    &43    & 0     & 0     & 0     & 32    & 5     & 32    & 8     & 8     & 8 \\
\cline{2-14}      & 	\TamiFlex &   & 43   &     &       & 0     &       &       & 32    &       &       & 8     &  \\
\Xhline{1pt}
\multirow{2}[4]{*}{avrora} & Recall & 50    & 46    & 53    & 0     & 0     & 0     & 0     & 0     & 0     & 0     & 0     & 0 \\
\cline{2-14}      & 	\TamiFlex &       & 53    &       &       & 0     &       &       & 0     &       &       & 0     &  \\
\Xhline{1pt}
\multirow{2}[4]{*}{checkstyle} & Recall & 7     & 9     & 72    & 1     & 1     & 24    & 1     & 5     & 28    & 8     & 8     & 8 \\
\cline{2-14}      & 	\TamiFlex &       & 72    &       &       & 24    &       &       & 28    &       &       & 8     &  \\
\Xhline{1pt}
\multirow{2}[4]{*}{findbugs} & Recall & 6     & 10    & 15    & 8     & 8     & 115   & 1     & 1     & 1     & 8     & 8     & 8 \\
\cline{2-14}      & 	\TamiFlex &       & 15    &       &       & 115   &       &       & 1     &       &       & 8     &  \\
\Xhline{1pt}
\multirow{2}[4]{*}{freecs} & Recall & 6     & 10    & 12    & 2     & 2     & 2     & 1     & 1     & 55    & 8     & 8     & 8 \\
\cline{2-14}      & 	\TamiFlex &       & 12    &       &       & 2     &       &       & 55    &       &       & 8     &  \\
\Xhline{1pt}
\multirow{2}[4]{*}{Total} & Recall & 141   & 191   &290   & 15    & 17    & 147   & 40    & 22    & 133   & 72    & 1103  & 1103 \\
\cline{2-14}      & 	\TamiFlex &    & 290   &    &       & 147   &       &   &133  &     &       & 1103  &  \\
\Xhline{1pt}
\end{tabular}%
} % \scalebox
}
\label{table:recall}
\end{table}

In practice, a reflection analysis must
handle \texttt{newInstance()} and \texttt{invoke()} 
well in order to build the  call graph for a program. 
Let us see how resolving more (true) reflective targets 
in a program by \solar can affect the call graph of the
program.

\begin{figure}[t]
\centering
\includegraphics[width=0.86\textwidth]{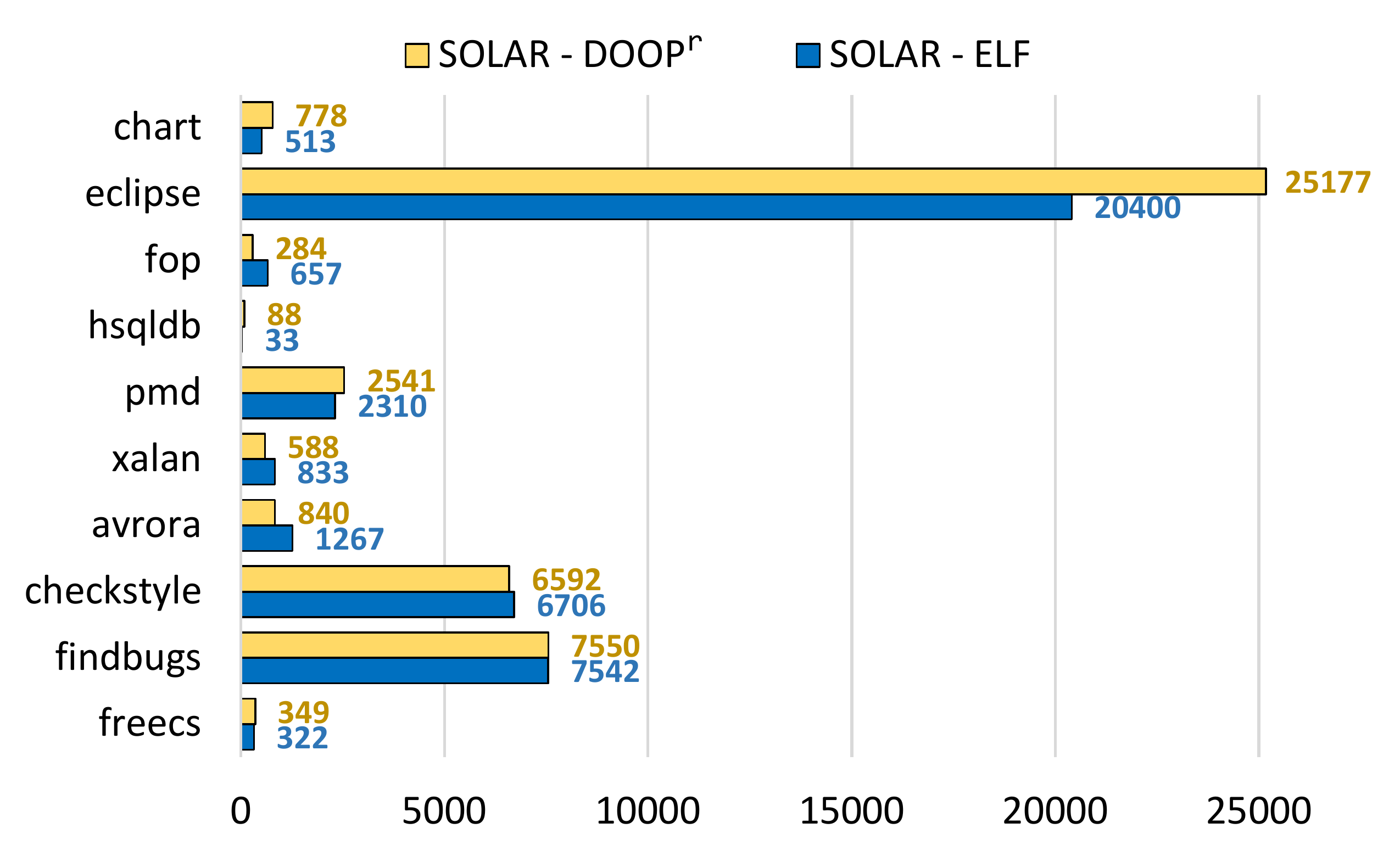}
\caption{More true 
caller-callee relations discovered in recall
by \solar
than \doopr, denoted
$\emph{\solar}\!-\!\emph{\doopr}$, and 
by \solar than \elf, denoted
$\emph{\solar}\!-\!\emph{\elf}$. 
%\doop recalls a subset of targets
%recalled by \elf and \solar.
\label{fig:mtdcall}}
\end{figure}

\paragraph{Call Graph}
Figure~\ref{fig:mtdcall} compares \doopr, \elf and \solar
in terms of true caller-callee relations discovered.
These numbers are statically calculated and obtained by
an instrumental tool written on top of
{\sc Javassist} \cite{Chiba00}.  
According to 
Table~\ref{table:recall},
\solar recalls a total of 191\% (148\%) more targets 
than \doopr (\elf) at the calls to 
\texttt{newInstance()} and \texttt{invoke()}, translating
into
44787 (40583) more \emph{true} caller-callee relations found
for the 10 programs under
the inputs described in Section~\ref{sec:eval-ass}.
These numbers are expected to improve 
when more inputs are used. 
Note that all method targets recalled by
\doopr and \elf are recalled by \solar so that we can 
use the ``subtraction'' (i.e., \solar$-\,$\doopr and \solar$-\,$\elf) 
in Figure~\ref{fig:mtdcall}.

Let us examine \texttt{eclipse} and
\texttt{checkstyle}. We start with \texttt{eclipse}. 
According to Table~\ref{table:recall}, \solar 
finds 35 more target methods than \doopr, causing
25177 more true caller-callee relations to be discovered.
Due to LHM, \solar finds 13 more target methods at
\texttt{newInstance()} calls than
\elf, resulting in 20400 additional true caller-callee
relations to be introduced.
Let us consider now \texttt{checkstyle}.
Due to
collective inference, \solar has resolved 23 more
target methods at \texttt{invoke()} calls than 
\elf, resulting
in 2437 more true caller-callee relations (i.e., over
a third of the total number of such relations, 6706) to 
be discovered.

%\begin{table}[h]
%%\setlength{\abovecaptionskip}{5pt}
%%\begin{sidewaystable}
%\centering
%\tbl{Precision comparison. There are two clients: DevirCall denotes the percentage of the
%virtual calls whose targets can be disambiguated and SafeCast denotes the percentage of the casts that can
%be statically shown to be safe.
%}{
%\small
%\setlength{\tabcolsep}{0.4ex}
%\renewcommand{\arraystretch}{1.7}
%\begin{tabular}{|c|c|c|c|c|c|c|c|c|c|c|c|c|}
%\hline
%\multicolumn{2}{|c|}{} &
%chart & eclipse & fop & hsqldb & pmd & xalan & avrora & checkstyle & findbugs & freecs & {\bf Average} \\
%\hline
%%\multirow{2}{*}{DeVirtual}
%Devir & \doop
%& -- & 94.94 & 93.04 & -- & 92.65 & 93.49 & 94.79 & 93.16 & 92.32 & 95.46 & 93.72\\
%Call& \elf
%& 93.53 & 88.07 & 92.34 & 94.80 & 92.87 & 92.70 & 94.50 & 93.19 & 92.53 & 94.94 & 92.93\\
%(\%) & \solar
%& 93.51 & 87.69 & 92.26 & 94.51 & 92.39 & 92.65 & 92.43 & 93.39 & 92.37 & 95.26 & 92.63\\
%\hline
%%\multirow{2}{*}{Safe Cast}
%Safe& \doop
%& -- & 59.34 & 53.68 & -- & 45.40 & 57.97 & 56.12 & 50.19 & 45.78 & 59.71 & 53.24\\
%Cast& \elf
%& 49.80 & 40.71 & 55.40 & 53.65 & 48.24 & 59.24 & 57.27 & 51.79 & 48.54 & 59.14 & 52.07\\
%(\%)& \solar
%& 49.53 & 38.04 & 54.21 & 53.11 & 44.53 & 59.11 & 52.56 & 49.40 & 43.60 & 57.96 & 49.79\\
%\hline
%\end{tabular}
%}
%\label{table:precision}
%\end{table}
%\end{sidewaystable}

%For \texttt{get()/set()}, 
%\solar (and \elf) resolves more fields 
%(of type \texttt{String})
%than \doop in \texttt{eclipse}, 
%causing 7 more
%true caller-callee relations to be found.

%%%% PRECISION %%%%%

\subsubsection{Precision}
\label{sec:precision}

Table~\ref{table:precision} compares the
analysis precision of \doopr, \elf and \solar
with an important client, devirtualization, which
is only applicable to virtual
calls with one single target each.
This client is critical in call graph construction 
and thus often used for
measuring the precision of a pointer analysis.
Despite achieving
better recall (Table~\ref{table:recall}), which results in  
more true caller-callee relations to be discovered 
(Figure~\ref{fig:mtdcall}), \solar maintains a similar
precision as \doopr and \elf for this client.

\begin{table}[t]
\tbl{Precision comparison. For each program,
	the percentage of 
virtual calls that can be devirtualized is given
by each reflection analysis. Devirtualization
is an important client for call graph construction.
}
{
\scalebox{0.93}{
\small
\setlength{\tabcolsep}{0.8ex}
% Table generated by Excel2LaTeX from sheet 'Tables'
\begin{tabular}{cIcIcIcIcIcIcIcIcIcIcIc}
\Xhline{1pt}
\rowcolor[rgb]{ .851,  .851,  .851} & chart & eclipse & fop   & hsqldb & pmd   & xalan & avrora & checkstyle & findbugs & freecs & Average \bigstrut\\
\Xhline{1pt}
  \doopr  & 93.40 & 94.69 & 90.73 & 95.34 & 92.53 & 92.63 & 91.98 & 93.09 & 92.41 & 95.27 & 93.20 \bigstrut[t]\\
   \elf   & 93.53 & 88.07 & 92.34 & 94.80 & 92.87 & 92.70 & 94.50 & 93.19 & 92.53 & 94.94 & 92.93 \\
 \solar & 93.51 & 87.69 & 92.26 & 94.51 & 92.39 & 92.65 & 92.43 & 93.39 & 92.37 & 95.26 & 92.63\\
%\Xhline{1pt}
%Safe  & \doop  & 70.04 & 71.97 & 71.68 & 70.17 & 62.06 & 71.11 & 69.36 & 66.22 & 63.57 & 74.43 & 68.38 \bigstrut[t]\\
%Cast  & \elf   & 49.80 & 40.71 & 55.40 & 53.65 & 48.24 & 59.24 & 57.27 & 51.79 & 48.54 & 59.14 & 52.07 \\
%(\%)  & \solar & 49.53 & 38.04 & 54.21 & 53.11 & 44.53 & 59.11 & 52.56 & 49.40 & 43.60 & 57.96 & 49.79\\
\Xhline{1pt}
\end{tabular}%
}% \scalebox
}
\label{table:precision}
\end{table}

In \solar, its lazy heap modeling (LHM) relies on cast 
types (and their subtypes) to infer
reflective targets. Will this inference strategy
introduce too much imprecision into the analysis
when a cast type has many subtypes?
To address this concern, we conduct an experiment about the percentage distribution for the number of types inferred at cast-related LHM points.
As shown in Figure~\ref{fig:cast},
the number of inferred types in a program is 1
($\leqslant\!10$) at
30.8\% (85.4\%) of its cast-related \dam points.
Some types
(e.g., \texttt{java.io.Serializable}) 
are quite wide, giving rise to more than 50 inferred types each,
but rare, appearing
at an average of about 1.9\% cast-related \dam points
in a program.

\begin{figure}[htp]
\centering
\includegraphics[width=0.8\textwidth]{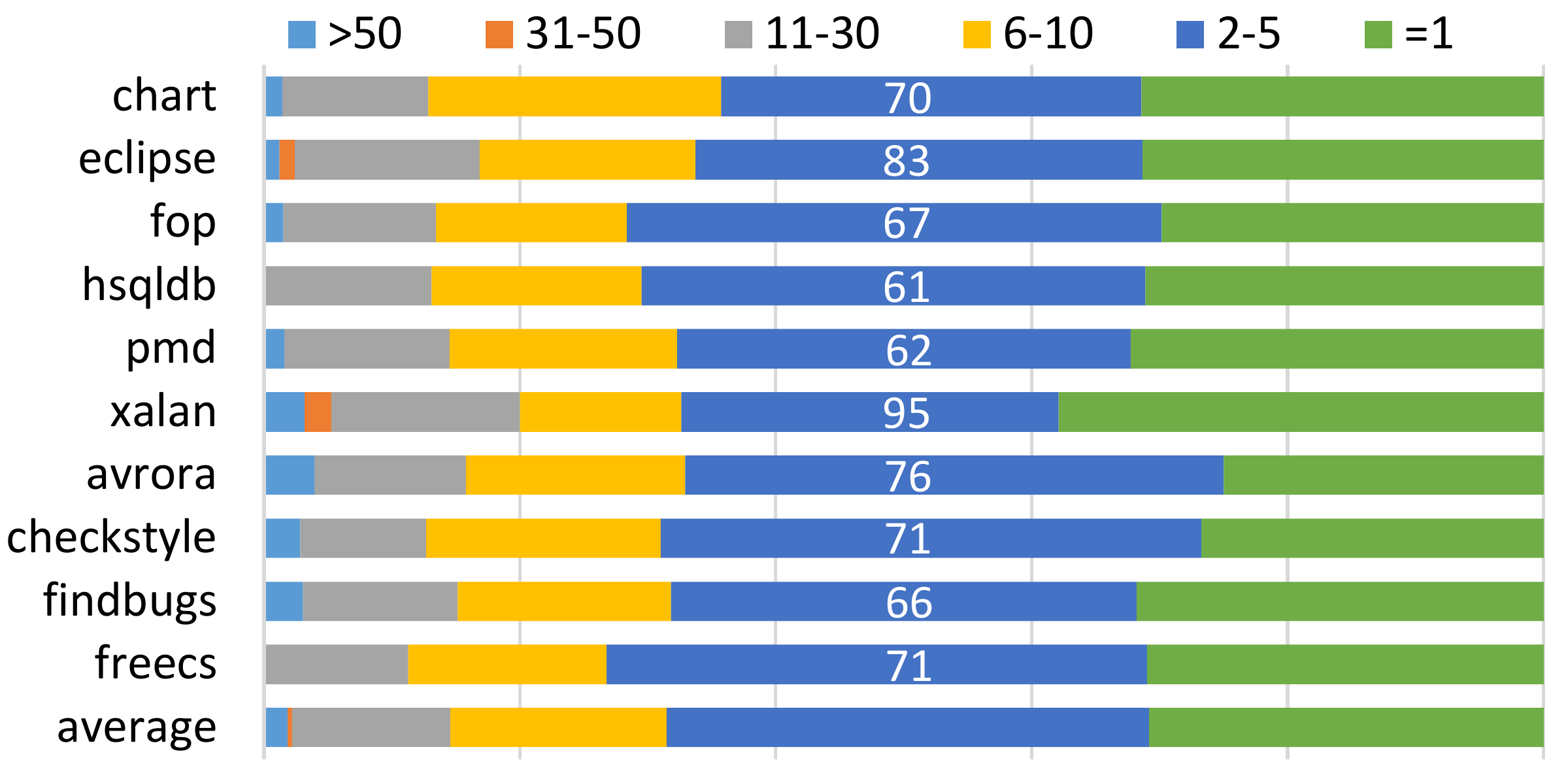}
\caption{
Percentage distribution for the number of types
inferred at cast-related LHM points. For each program,
the total number of the cast-related LHM points is shown 
in the middle of its own bar.
\label{fig:cast}}
\end{figure}

\begin{table}[h]
\tbl{Comparing \solar, \elf and \doopr in terms of analysis times  (secs).}
{
\small
\setlength{\tabcolsep}{0.5ex}
% Table generated by Excel2LaTeX from sheet 'Tables'
\begin{tabular}{cIcIcIcIcIcIcIcIcIcIcIc}
\Xhline{1pt}
\rowcolor[rgb]{ .851,  .851,  .851}       & chart & eclipse & fop   & hsqldb & pmd   & xalan & avrora & checkstyle & findbugs & freecs & Average \bigstrut\\
\Xhline{1pt}
\doopr & 1864  & 340   & 2631  & 1173  & 602   & 1092  & 2934  & 531   & 2753  & 371   & 1074 \bigstrut[t]\\
\elf & 3434  & 5496  & 2821  & 1765  & 1363  & 1432  & 932   & 1463  & 2281  & 1259  & 1930 \\
\solar & 4543  & 10743 & 4303  & 2695  & 2156  & 1701  & 3551  & 2256  & 8489  & 2880  & 3638 \\
\Xhline{1pt}
\end{tabular}%
}
\label{table:perform}
\end{table}

\subsubsection{Efficiency}
\label{sec:efficiency}

Table~\ref{table:perform} compares the
analysis times of \doopr, \elf and \solar.
Despite producing significantly better
soundness than \doopr and \elf, \solar
is only several-fold slower. 
%Given a time budget
%of XX hours, \doop is unscalable for
%\texttt{chart} and \texttt{hsqldb} and \solar
%is unscalable for \texttt{jython}. When 
When analyzing \texttt{hsqldb}, \texttt{xalan} and 
\texttt{checkstyle}, \solar requires some 
lightweight annotations. Their analysis times
are the ones consumed by \solar on
analyzing the annotated programs.
Note that these annotated programs are also
used by \doopr and \elf (to ensure a fair
comparison).

%\begin{table}[th]
%%\setlength{\abovecaptionskip}{5pt}
%\centering
%\tbl{Comparing \solar, \elf and \doop in terms of analysis times  (secs). 
%%For
%%\texttt{chart} and \texttt{hsqldb} 
%%(unscalable under \doop), their analysis times 
%%are assumed to
%%be the maximum time of \solar (10743s) in 
%%	computing 
%%its average.
%}{
%\small
%\setlength{\tabcolsep}{0.4ex}
%\renewcommand{\arraystretch}{1.6}
%\begin{tabular}{|c|c|c|c|c|c|c|c|c|c|c|c|}
%\hline
%&
%chart & eclipse & fop & hsqldb & pmd & xalan & avrora & checkstyle & findbugs & freecs & \bf{Average} \\
%
%\hline
%\doop&
%--&	321 & 779 & -- & 226 & 254 & 188 & 256 & 718 & 422 & --\\
%\elf&
%3434 & 5496 & 2821 & 1765 & 1363 & 1432 & 932 & 1463 & 2281 & 1259 &	1930\\
%\solar&
%4543 & 10743 & 4303 & 2695 & 2156 & 1701 & 3551 & 2256& 8489 & 2880	&3638\\
%
%\hline
%\end{tabular}
%}
%\label{table:perform}
%\end{table}

In summary,
the experimental results described
in Section~\ref{sec:rq3} demonstrate that \solar's soundness-guided
design is effective in balancing
soundness, precision and scalability,
in practice: \solar is able to achieve significantly better soundness than the state-of-the-art while being still reasonably precise and efficient.

\begin{comment}
\subsection{On the Practical Effectiveness of \solar}

Let us recall the three novel design aspects of
\solar (Figure~\ref{fig:solar}) that have made it
practically effective. 
First, Assumption~\ref{ass:reach} holds usually
in applications (Section~\ref{sec:eval-ass}). 
LHM handles reflective object creation well
(Figure~\ref{fig:lhm}), particularly
at the cast operations on \texttt{newInstance()}.
As shown in Figure~\ref{fig:cast},
the number of inferred types in a program is 1
($\leqslant\!10$) at
30.8\% (85.4\%) of its cast-related \dam points.
Some types
(e.g., \texttt{java.io.Serializable}) 
are quite wide, giving rise to more than 50 inferred types each,
but they are rare, appearing
at an average of about 1.9\% cast-related \dam points
in a program. 
Second, \solar's collective
inference is fairly powerful and precise
(as demonstrated by the real-world cases in Figures~\ref{fig:extend} and~\ref{fig:check}). 
%Thus, \solar's inference
%system can resolve reflective targets than prior work. 
Third, \solar is capable of accurately
finding ``problematic'' reflective calls for 
lightweight annotations (Section~\ref{sec:rq2}).
\end{comment}

\section{Related Work}
\label{sec:related}

\subsection{Static Reflection Analysis}
\label{rel:static}
\citeN{Livshits05} introduce the first static reflection analysis for Java. By interleaving with a pointer
analysis, this reflection analysis discovers constant
string values by performing regular string inference
and infers the types of reflectively created objects
at \texttt{clz.newInstance()} calls, where \texttt{clz}
is unknown, by exploiting their intra-procedurally
post-dominating cast operations, if any.
Many modern pointer analysis frameworks such as \doop~\cite{Yannis09}, \wala~\cite{wala} and Chord~\cite{Mayur06}, adopt a similar approach to analyze Java reflection, and many subsequent reflection analyses~\cite{Yue14,Yue15,Yannis15,Zhang17}, including \solar, are also
inspired by the same work.

% introduce elf and its fundamental difference with livshits's

Our \elf tool~\cite{Yue14} represents the first reflection analysis that 
takes advantage of the self-inferencing property (Definition~\ref{def:self}) to handle Java reflection. \elf can be considered as performing the collective inference in \solar's inference engine except that \elf's 
inference rules are more strict (for better analysis precision). In \elf, a reflective target will not be resolved unless both a red circle and a blue circle in Figure~\ref{fig:collect} (i.e., a class name and part of a 
member signature) are available.
As a result, \elf is usually more precise and efficient, but (much) less sound than \solar.

In \doopr~\cite{Yannis15} (the one 
compared with \solar in Section~\ref{sec:eval}), the authors propose to leverage partial string information to resolve reflective targets. As explained in Section~\ref{study:sec:string}, many string arguments are 
generated through complex string manipulations, causing their values 
to be hard to resolve statically. However, in some cases, its \emph{substring analysis} makes it possible to leverage
some partially available string values, in the form of
self-inferencing hints, to help infer reflective targets.  
Combining this with some sophisticated string analyses~\cite{moller03} is expected to generate more effective results, which is worth pursuing in future work. 

To improve soundness, \doopr attempts to
infer the class types at \texttt{Class.forName()} (rather than \texttt{newInstance()}) call 
sites by leveraging
both (1) the intra- and inter-procedural cast operations
that are not necessarily post-dominating at their
corresponding \texttt{newInstance()} call sites
and (2) the member names at their corresponding
member-introspecting call sites (i.e., the
blue circles formed by \emph{Target Propagation} in Figure~\ref{fig:collect}).
However, these two strategies may make the analysis 
imprecise. For example,
the second strategy may suffer from a precision loss 
when different unrelated classes contain identically-named 
members~\cite{Yue14}. 
In addition, 
in both strategies (1) and (2), the class types (at a \texttt{Class.forName()} call site), which are back-propagated (inferred) from some member-introspecting call sites (cast sites), may further pollute the analysis precision at the other member-introspecting and side-effect calls. 
To reduce such imprecision, an approach called \emph{inventing objects} is proposed in \doopr. It creates objects at the \texttt{newInstance()} call sites (according to the types at the related cast sites), rather than the corresponding \texttt{Class.forName()} call sites. This method is similar to Case (II) in \solar's lazy heap modeling (Figure~\ref{fig:lhm}).

\citeN{Ernst15} analyze Java reflection and intents in Android apps, by reasoning about soundness from the
perspective of a given client.
Consider a client for detecting sensitive data leaks.
Any \texttt{invoke()} call that cannot be resolved 
soundly is assumed to return conservatively 
some sensitive data. While sound for the client,
the reflection analysis itself is still unsound. 
In terms of reflection inference, \citeN{Ernst15} exploit only a subset
of self-inferencing hints used in \solar.
For example, how the arguments in an \texttt{invoke()} 
call are handled.
\citeN{Ernst15} uses only their arity but
ignore their types. In contrast, \solar takes both into
account. There are precision implications.
Consider the code fragment for
\texttt{Eclipse} in Figure~\ref{study:fig:invoke}. If the types of the elements of its one-dimensional
array, \texttt{parameters}, are ignored,
many spurious target methods in line 175
would be inferred, as single-argument methods are common.

Despite recent advances on reflection analysis~\cite{Yue14,Yue15,Yannis15}, a sophisticated
reflection analysis does not co-exist well with a sophisticated pointer analysis, since the latter
is usually unscalable for large programs~\cite{Livshits05,Yue14,Yue15,Yannis15}. 
If a scalable but imprecise pointer analysis
is used instead, the reflection analysis may introduce many spurious call graph edges,
making its underlying client applications to be too imprecise to be practically useful.
This problem can be alleviated by using a recently proposed program slicing approach, called program tailoring~\cite{Yue16}. 

Briefly, program tailoring accepts a sequential criterion (e.g., a sequence of reflective call sites: \texttt{forName()}$\rightarrow$\texttt{getMethod()}$\rightarrow$\texttt{invoke()}) and generates a soundly
tailored program that contains the statements 
only relevant to the given sequential criterion. In other words, the tailored program comprises the statements in all possible execution paths passing through the 
sequence in the given order. As a result, a more precise (but less scalable) pointer analysis may be scalable when the tailored program (with smaller size) is analyzed, resulting a more precise result resolved at the given reflective call site (e.g., the above \texttt{invoke()} call site). These reflective call sites can be the problematic ones generated by \solar as demonstrated in~\cite{Yue16}.

\textsc{DroidRa}~\cite{Lili16} analyzes reflection in
Android apps by propagating
string constants. This is similar to the target propagation in \solar's collective inference 
except that \textsc{DroidRa} uses the 
solver \textsc{Coal}~\cite{Octeau15} to resolve the string values
in a context- and flow-sensitive manner. Currently,
\solar's target propagation is context-sensitive only, 
achieved by the underlying pointer analysis used.

\textsc{Ripple}~\cite{Zhang17} resolves reflection in
Android apps by tackling their incomplete information environments (e.g., undetermined intents or unmodeled services).
Technically, even if some data flows are missing (i.e, \texttt{null}), it is still able to resolve reflective targets by performing type inference, which is similar 
to \solar's collective inference except that some sophisticated inference rules are not supported.

Unlike
the prior work~\cite{Livshits05,Yue14,Yannis15,Ernst15,Lili16,Zhang17},
as highlighted in Figure~\ref{fig:position}, 
\solar is capable of reasoning about its
soundness and accurately identifying its unsoundness. 

\subsection{Dynamic Reflection Analysis}
\label{rel:dynamic}

\citeN{Hirzel07} propose an online pointer
analysis for handling some dynamic 
features in Java at run time. To tackle reflection, 
their analysis instruments a program
so that constraints are generated dynamically when 
the injected code is triggered at run time. 
Thus, the
points-to information is incrementally updated 
when new constraints are gradually introduced by reflection. 
This technique on reflection handling can be used in 
JIT optimizations but may not be suitable for
whole-program static analysis.

To facilitate static analysis, \citeN{Bodden11} 
leverage the runtime information gathered
at reflective calls. 
Their tool, \TamiFlex, records usage information 
of reflective calls 
in the program at run-time, interprets the logging 
information, and finally, transforms these reflective calls into regular Java method calls. 
In addition, \TamiFlex
inserts runtime checks to warn the user in cases that
the program encounters reflective calls that
diverge from the recorded information of previous runs.

\textsc{Harvester}~\cite{Bodden16} is designed to automatically extract runtime values from Android
applications. It uses a variation of traditional program slicing and dynamic execution to extract
values from obfuscated malware samples that obfuscate method calls using reflection or hide sensitive values in native code.

\subsection{Others}
\label{rel:others}

\citeN{Braux00} provide offline
partial evaluation support for reflection in order to 
perform aggressive compiler optimizations for Java 
programs. It transforms a program by compiling away 
the reflection code into regular operations on objects 
according to their concrete types that
are constrained manually. The inference engine of \solar can be viewed as a tool 
for inferring such constraints automatically. 

To increase code coverage, some static analysis tools 
\cite{Yannis09,bddbddb} allow users to provide 
ad hoc manual specifications about reflection usage 
in a program. However, due to the diversity and 
complexity of applications, it is not yet clear how 
to do so in a systematic manner. For
framework-based web applications,
Sridharan et al.~\cite{Manu11} introduce a
framework that exploits 
domain knowledge to automatically generate a 
specification of framework-related behaviours
(e.g., reflection usage) by processing both application 
code and configuration files. \solar may 
also utilize domain knowledge to analyze 
some configuration files, but 
only for those reflective call sites 
that cannot be resolved effectively.

\citeN{Li17} propose an object-oriented dynamic
symbolic execution framework for testing 
object-oriented libraries in Java.
They support polymorphism by introducing
constraints for method invocation targets and 
field manipulations via symbolic types.
They have also generalized the notion of symbolic types
to symbolic methods and symbolic fields to handle Java reflection symbolically.

Finally, the dynamic analyses \cite{Bodden11,Hirzel07,Bodden16} 
work in the presence of both dynamic
class loading and reflection. 
Nguyen and Xue~\cite{Xue05,Xue05cc} introduce
an inter-procedural side-effect analysis for open-world
Java programs by allowing dynamic class
loading but disallowing reflection.
Like other static reflection analyses 
\cite{Livshits05,Yue14,Yannis15,Ernst15,Lili16,Zhang17}, \solar can presently
analyze closed-world Java programs only.

\section[Conclusions]{Conclusions}
\label{sec:conclude}

Reflection analysis is a long-standing hard open problem. In the past years, almost all the research papers consider Java reflection as a separate assumption and most static analysis tools either handle it partially or totally ignore it. In the program analysis community, people hold to the
common opinion that ``\emph{reflection is a dynamic feature, so how could it be handled effectively in static analysis?}'' 
%A further question,``\emph{what if the string inputs are read from external files, how can you know their values, statically?}'' would probably remove any further counter arguments. 
This paper aims to change such informed opinion by showing that effective static analysis for handling Java reflection is feasible. 
We present a comprehensive understanding of Java reflection by illustrating its concept (origin), interface design and real-world usage. Many useful findings are
given to guide the design of more effective reflection analysis approaches and tools.
We introduce \solar, a soundness-guided reflection analysis, which is able to achieve significantly better soundness than previous work and can also accurately and automatically identify which reflective calls are resolved unsoundly or imprecisely. 
Such capabilities are helpful, in practice,
as users can be aware of how sound an analysis is. In addition, for some clients where high-quality 
analysis results are needed (e.g., bug detectors
with good precision or verification tools
with good soundness) for a program, \solar provides an opportunity
to help users obtain the analysis results as desired,
by guiding them to add some lightweight 
annotations, if needed, to the parts of the program,
where unsoundly or imprecisely resolved reflective
calls are identified.

%\section{Future Work}
%\label{sec:future}

\bibliographystyle{ACM-Reference-Format-Journals}
\bibliography{citation}

\end{document}
% End of v2-acmsmall-sample.tex (March 2012) - Gerry Murray, ACM